\def\sDiv{\mathscr{D}}
\def\sCurl{\mathscr{C}}
\def\sCurlDagger{\mathscr{C}^\dagger}
\def\sTwist{\mathscr{T}}
\DeclareMathOperator{\tho}{\text{\rm\textthorn}}
\DeclareMathOperator{\edt}{\eth}
\DeclareMathOperator{\thop}{\text{\rm\textthorn}^\prime\negthinspace}
\DeclareMathOperator{\edtp}{\eth^\prime\negthinspace}
\def\sdiv{\text{\texthtd}}
\def\scurl{\text{\texthtc}}
\def\stwist{\text{\texthtt}}
\def\InitSlice{\Sigma_0}
\theoremstyle{plain}
\newtheorem{thm}{Theorem}[section]
\newtheorem{cor}[thm]{Corollary}
\newtheorem{lemma}[thm]{Lemma}
\newtheorem{definition}[thm]{Definition}
\newtheorem{remark}[thm]{Remark}
\newtheorem{assumption}[thm]{Assumption}
\newcounter{step}
\def\goodset{\mathbb{V}_g}
\def\badset{\mathbb{V}_b}
\numberwithin{equation}{section}
\begin{document}

\title{Perturbations of Kerr with approximate Killing spinors -- Wave version}

\author{Thomas B\"ackdahl \orcidlink{0000-0003-3240-2445}} 
\email{thomas.backdahl@chalmers.se}
\affiliation{Mathematical Sciences, Chalmers University of Technology and University of Gothenburg, SE-412~96 Gothenburg, Sweden}

\date{\today}

\begin{abstract}
In this paper we develop a new framework for non-linear perturbations of the Kerr spacetime. This is based on a characterization of the Kerr spacetime in terms of a Killing spinor. On the perturbed spacetime, one can construct an approximation of the Killing spinor. Based on this, a number of quantities are constructed measuring the deviation from Kerr. Evolution equations for these quantities are derived. Approximations of the Killing vectors, the mass and angular momentum parameters etc., are constructed along with a full set of equations for their derivatives. In this setting, we don't need a reference background solution. Instead, we covariantly construct the relevant structures on the perturbed spacetime itself. This can eliminate many issues and allow for a cleaner analysis.
\end{abstract}

\maketitle

\section{Introduction}
In this paper we study non-linear perturbations of the Kerr spacetime. 
This is relevant for the Kerr stability problem \cite{Andersson:2019dwi, Hafner:2019kov, Klainerman:2021qzy, GioriKlainermanSzeftel:WaveEstimatesForKerr, Hafner:2025nrv}, the self-force problem \cite{Pound2020}, numerical relativity and related problems.
A natural way to do this is to compare the perturbed spacetime with a background Kerr spacetime. In \cite{Andersson:2021eqc}, we followed that approach.
This approach is fairly straightforward, but it comes with a number of complications. Foremost, one needs to compare with the correct background, i.e. one would need to know the final mass and angular momentum. These parameters can usually not be determined beforehand, so one would need to determine them iteratively or through careful tracking of the evolution. Even with a good gauge choice there is often some residual gauge freedom that needs to be used to get the correct centre of mass, axis of rotation etc. 

In this paper, we present a different approach. Instead of comparing the perturbed spacetime with a background, we use a characterization of the Kerr spacetime in terms of a Killing spinor. Following the ideas in \cite{Backdahl:2010cy, Backdahl:2010fa, Backdahl:2010eq, Backdahl:2011np} we can construct an approximate Killing spinor with the correct asymptotics on the spacetime. In those papers we were able to prove that if the approximate Killing spinor is an actual Killing spinor, the spacetime must be locally diffeomorphic to Kerr. This can be done independently of coordinate or frame choices.

The idea of this paper is that given an approximate Killing spinor, we can covariantly construct a few spinors that are zero if and only if the spacetime is locally diffeomorphic to Kerr. We will call them \emph{small variables}. With an appropriate evolution equation for the approximate Killing spinor, one would then expect the small variables to decay to zero indicating that the spacetime asymptotes to Kerr. This can be done in a covariant way without prior knowledge of the final Kerr parameters. This can eliminate many of the problems described above.

Here, we are not constructing a Kerr background with its corresponding principal null directions, Killing vectors, TME operators, symmetry operators etc. expressed in coordinates. Instead we use the fact that these structures can be constructed from the Killing spinor in the Kerr spacetime. Therefore, we can construct approximations of these structures in a perturbed spacetime using an approximate Killing spinor instead. These structures can be expressed in terms of spinors that are non-zero on Kerr. We call them \emph{non-small variables}. 

In this paper the evolution of the approximate Killing spinor is chosen to satisfy a wave-like equation with a freely chosen gauge source function. Under suitable conditions, we prove that this gives an evolution system for both small and non-small variables. First we describe this system covariantly in the spacetime using abstract spinor formalism. Then we introduce a family of adapted frames, and describe the system in GHP formalism. Here, we find that all spin coefficients and curvature components can be expressed in terms of the small and non-small variables. Hence, both the frame and all variables satisfies a good evolution system. We also express the system in terms of space spinors. This allows us to separate the actual evolution equations from constraint equations. This splitting is similar to the splitting of the Maxwell equation into a constraint equation and an evolution equation. In the end we get a first order symmetric hyperbolic evolution system.

A system with both small and non-small variables might not be good for estimates, so we extract a subsystem for only the small variables. In this case we treat the non-small variables as given. This is also formulated in GHP, and we see that it also takes a first order symmetric hyperbolic form. 
Importantly, we also derive equations for all derivatives of the non-small variables. This is important because we can define operators like the TME operators, symmetry operators etc. formally identical to the corresponding covariant or GHP expressions in Kerr. The coefficients of these will then only depend on the non-small variables.
When one tries to compute commutators between such operators, equations for all derivatives of the non-small variables are needed. This is crucial in the case with large angular momentum parameter $a$, because we can not use smallness of $a$ to estimate any terms in the commutators.

In this paper we focus on the structure of the equations. Decay estimates will be postponed to subsequent papers. One can also use a different evolution equation for the approximate Killing spinor based on null-geodesics. This can be re-phrased as a special choice of the gauge source function used in this paper, so most equations here will be valid also for that case. However, in that setting most evolution equations will be of transport type. This will also be presented in a forthcoming paper.

All calculations in this paper were done using the \emph{xAct} \cite{xActPackage} suite of packages for \emph{Wolfram Mathematica}. In particular the \emph{SymManipulator} \cite{SymManipulatorPackage}, \emph{SymSpin} \cite{SymSpinPackage}, \emph{SpinFrames} \cite{SpinFramesPackage}, \emph{SpaceSpinors} \cite{SpaceSpinorspackage} and \emph{TexAct} \cite{TexActPackage} packages were further developed and used for this project.

\section{Notation and conventions}
In this paper $(\mathcal{M},g_{ab})$ will denote an orientable
and time orientable, globally hyperbolic vacuum spacetime.
The metric $g_{ab}$ will be taken to have signature $(+,-,-,-)$.
It follows that the spacetime admits a spin structure \cite{geroch1968spinor, geroch1970spinor}. 
We will use 2-spinors following the conventions of \cite{PenRin84}. In particular, $A, B, \ldots$ will 
denote abstract spinorial indices, while $a, b, \ldots$ will denote abstract tensorial indices.
We will also extensively use the GHP formalism \cite{GHP}.

\subsection{Symmetric spinors}
Let $S_{k,l}$ denote the vector bundle of symmetric spinors with $k$ unprimed indices and $l$ primed indices. We will call these spinors \emph{symmetric spinors} of valence $(k,l)$. Furthermore, let $\mathcal{S}_{k,l}$ denote the space of smooth sections of $S_{k,l}$. For simplicity we will assume $C^\infty$ sections, but that assumption can be relaxed.

Any spinor can be decomposed into a set of symmetric spinors \cite[Prop 3.3.54]{PenRin84}. 
Therefore, we will work only with symmetric spinors.
However, the spinorial Levi-Civita covariant derivative $\nabla_{AA'}$ acting on a symmetric spinor will in general not give a symmetric spinor, so we decompose it into four fundamental spinor operators.
\begin{definition}[\protect{\cite[Def 13]{AndBaeBlu14a}}]
For any $\varphi_{A_1\dots A_k}{}^{A_{1}'\dots A_{l}'}\in \mathcal{S}_{k,l}$, we define the operators
$\sDiv_{k,l}:\mathcal{S}_{k,l}\rightarrow \mathcal{S}_{k-1,l-1}$,
$\sCurl_{k,l}:\mathcal{S}_{k,l}\rightarrow \mathcal{S}_{k+1,l-1}$,
$\sCurlDagger_{k,l}:\mathcal{S}_{k,l}\rightarrow \mathcal{S}_{k-1,l+1}$ and
$\sTwist_{k,l}:\mathcal{S}_{k,l}\rightarrow \mathcal{S}_{k+1,l+1}$ as
\begin{subequations}
\begin{align}
(\sDiv_{k,l}\varphi)_{A_1\dots A_{k-1}}{}^{A_1'\dots A_{l-1}'}\equiv{}&
\nabla^{BB'}\varphi_{A_1\dots A_{k-1}B}{}^{A_1'\dots A_{l-1}'}{}_{B'},\\
(\sCurl_{k,l}\varphi)_{A_1\dots A_{k+1}}{}^{A_1'\dots A_{l-1}'}\equiv{}&
\nabla_{(A_1}{}^{B'}\varphi_{A_2\dots A_{k+1})}{}^{A_1'\dots A_{l-1}'}{}_{B'},\\
(\sCurlDagger_{k,l}\varphi)_{A_1\dots A_{k-1}}{}^{A_1'\dots A_{l+1}'}\equiv{}&
\nabla^{B(A_1'}\varphi_{A_1\dots A_{k-1}B}{}^{A_2'\dots A_{l+1}')},\\
(\sTwist_{k,l}\varphi)_{A_1\dots A_{k+1}}{}^{A_1'\dots A_{l+1}'}\equiv{}&
\nabla_{(A_1}{}^{(A_1'}\varphi_{A_2\dots A_{k+1})}{}^{A_2'\dots A_{l+1}')}.
\end{align}
\end{subequations}
\end{definition}
The operators are labelled with the valence numbers of the spinor field it is acting on.

The product of two symmetric spinors is in general not symmetric, so we decompose it into irreducible parts. All parts can be expressed in terms of the following symmetrized product from \cite{Aksteiner:2022fmf}. See also \cite{FRAUENDIENER.1993.0153} for a related construction. 
\begin{definition}[\protect{\cite[Def 1]{Aksteiner:2022fmf}}] \label{def:SymProd}
Let $k,l,n,m,i,j$ be integers with $i \leq min(k,n)$ and $j \leq min(l,m)$. The symmetric product is a bilinear form 
\begin{align}
\underset{k,l}{\overset{i,j}{\odot}}: \mathcal{S}_{k,l} \times \mathcal{S}_{n,m} \to{}& \mathcal{S}_{k+n-2i,l+m-2j}.
\end{align}
For $\phi \in \mathcal{S}_{k,l}, \psi \in \mathcal{S}_{n,m}$, it is given by
\begin{align}
\label{eq:SymMultDef}
(\phi\underset{k,l}{\overset{i,j}{\odot}}\psi)_{A_1 \dots A_{k+n-2i}}^{A'_1 \dots A'_{l+m-2j}}={}& 
\phi_{(A_1 \dots A_{k-i-1}}^{(A'_1 \dots A'_{l-j-1}| B_1 \dots B_i B'_1 \dots B'_j|} 
\psi^{A'_{l-j} \dots A'_{l+m-2j})}_{A_{k-i} \dots A_{k+n-2i})B_1 \dots B_i B'_1 \dots B'_j}
\end{align}
\end{definition} 
Here, the operator is labelled with the number of index contractions $(i,j)$ as well as the valence numbers of the spinor field it is acting on $(k,l)$.
In general we will work with the symmetric product and combinations of the operators above. This will always result in symmetric spinors without any index contractions. Therefore, the spinor indices will not carry any information, so we will usually omit them.

Commutators between the operators $\sDiv, \sCurl, \sCurlDagger, \sTwist$ are given by \cite[Lem~9]{Aksteiner:2022fmf} or \cite[Lem~18]{AndBaeBlu14a}. The Leibniz rule for these operators acting on a symmetric product is given by \cite[Lem~10]{Aksteiner:2022fmf}.
In this notation, the vacuum Bianchi identity is
\begin{align}\label{eq:VacuumBianchiCovariant}
\sCurlDagger_{4,0} \Psi ={}&0.
\end{align}

\subsection{Space spinors}\label{sec:SpaceSpinorsNotation}
To analyse a 3+1 splitting of the equations we will use the \emph{space spinor formalism} developed by Sommers~\cite{Sommers1980a} and Sen~\cite{Sen1981a}. See also section 3 in \cite{Backdahl:2010cy} and chapter 4 in \cite{ValienteKroon2017conformal}.
Let $\tau^{AA'}$ be a future pointing normal to a spacelike hyper-surface with normalization $\tau_{AA'}\tau^{AA'}=2$.
The Sen connection is given by $\nabla_{AB}=\tau_{(A}{}^{A'}\nabla_{B)A'}$. 
The intrinsic connection $D_{AB}$ can be expressed as
\begin{align}
D_{AB}\phi_{C}={}&\nabla_{AB}\phi_{C} - \tfrac{1}{2} K_{ABC}{}^{D} \phi_{D},
\label{eq:IntrinsicToSen}
\end{align}
where $K_{ABCD}$ is the spinor form of the second fundamental form given by
\begin{align}
K_{AB}{}^{CD}={}&\tau_{(A}{}^{A'}\tau^{(C|B'|}\nabla^{D)}{}_{|B'|}\tau_{B)A'}.
\end{align}
We assume that $\tau^{AA'}$ is hyper-surface orthogonal, so the irreducible components of $K_{ABCD}$ are
\begin{align}
\Omega_{ABCD}={}&K_{(ABCD)},&
\slashed{\Omega}={}&K^{AB}{}_{AB}. \label{eq:OmegaDef}
\end{align}
We will also need the normal derivative $\nabla_{\tau} = \tau^{AA'} \nabla_{AA'}$ and acceleration $A=\sCurl_{1,1} \tau $.
The Hermitian conjugate of a spinor in $\mathcal{S}_{k,0}$ is the complex conjugate followed by contraction with $\tau$ on all primed indices. For instance, if $\phi\in \mathcal{S}_{2,0}$, the Hermitian conjugate is
\begin{align}
\widehat{\phi}={}&\tau \underset{1,1}{\overset{0,1}{\odot}}\tau \underset{0,2}{\overset{0,1}{\odot}}\bar{\phi}.
\end{align}

In this paper we will mainly use the Sen connection.
Following Definition~2.9 in \cite{Andersson:2013maa}, we define fundamental space spinor operators.
\begin{definition} \label{def:FundSpaceSpinOp}
Let  $\phi_{A_1\dots A_k}\in \mathcal{S}_{k,0}$. Define the operators
$\sdiv_k: \mathcal{S}_{k,0}\rightarrow \mathcal{S}_{k-2,0}$, 
$\scurl_k: \mathcal{S}_{k,0}\rightarrow \mathcal{S}_{k,0}$ and 
$\stwist_k: \mathcal{S}_{k,0}\rightarrow \mathcal{S}_{k+2,0}$ via
\begin{subequations}
\begin{align}
(\sdiv_k\phi)_{A_1\dots A_{k-2}}\equiv{}&\nabla^{A_{k-1}A_{k}}\phi_{A_1\dots A_k},\\
(\scurl_k\phi)_{A_1\dots A_{k}}\equiv{}&\nabla_{(A_1}{}^{B}\phi_{A_2\dots A_{k})B},\\
(\stwist_k\phi)_{A_1\dots A_{k+2}}\equiv{}&\nabla_{(A_1 A_2}\phi_{A_3\dots A_{k+2})}.
\end{align}
\end{subequations}
These operators are called \emph{Sen-divergence}, \emph{curl} and \emph{twistor operator} respectively.
\end{definition}
Observe that here we define operators in terms of the Sen connection $\nabla_{AB}$ instead of the intrinsic connection $D_{AB}$ used in \cite{Andersson:2013maa}.
Combined with the symmetric product $\odot$, we again get an algebra of symmetric spinors, so we can omit the spinor indices in all covariant equations.

The intrinsic connection is Hermitian. The relation \eqref{eq:IntrinsicToSen} can be used to derive the commutation between the Hermitian conjugate and the fundamental space spinor operators based on the Sen connection. For $\phi\in \mathcal{S}_{k,0}$, we have
\begin{subequations}
\begin{align}
\widehat{\sdiv_{k} \phi}={}&- \sdiv_{k} \widehat{\phi} + (k - 2)\Omega \underset{k,0}{\overset{3,0}{\odot}}\widehat{\phi},\\
\widehat{\scurl_{k} \phi}={}&- \scurl_{k} \widehat{\phi} - \tfrac{1}{6} (k + 2)\slashed{\Omega}\underset{k,0}{\overset{0,0}{\odot}}\widehat{\phi}
 + (k - 1)\Omega \underset{k,0}{\overset{2,0}{\odot}}\widehat{\phi},\\
\widehat{\stwist_{k} \phi}={}&- \stwist_{k} \widehat{\phi} + k\Omega \underset{k,0}{\overset{1,0}{\odot}}\widehat{\phi}.
\end{align}
\end{subequations}
The formal adjoints of $\sdiv_k$, $\scurl_k$, $\stwist_k$  are $-(-1)^k\stwist_{k-2}$, $(-1)^k\scurl_k$ and $-(-1)^k\sdiv_{k+2}$ respectively.
Hermitian adjoints are the Hermitian conjugate of the formal adjoints.

\section{Approximate Killing spinor}
\label{sec:ApproxKillingCovariant}
An important feature of the Kerr spacetime is that it admits a Killing spinor $\kappa\in \mathcal{S}_{2,0}$ satisfying $\sTwist_{2,0} \kappa=0$. This holds for any vacuum Petrov Type D spacetime \cite{walker:penrose:1970CMaPh..18..265W}.
Using $\kappa$, one can construct a vector $\xi =\sCurlDagger_{2,0} \kappa \in \mathcal{S}_{1,1}$. Commuting derivatives, we find that $\xi$ is a Killing vector.
Possibly multiplying $\kappa$ with a complex constant, this vector can be made real in Kerr. 
Together with asymptotic conditions, the Killing spinor can characterize the Kerr spacetime.
\begin{thm}[\protect{\cite[Th 5/B.3]{Backdahl:2010eq}}]
\label{Theorem:CharacterisationKerrData}
A smooth vacuum spacetime $(\mathcal{M},g_{ab})$ is locally
isometric to the Kerr spacetime if and only if the following conditions
are satisfied:
\begin{itemize}
\item[(i)] there exists a Killing spinor $\kappa_{AB}$ such that the
associated Killing vector $\xi_{AA'}$ is real;
\item[(ii)] the spacetime $(\mathcal{M},g_{ab})$ has a stationary
asymptotically flat 4-end with non-vanishing mass in which $\xi_{AA'}$
tends to a time translation.
\end{itemize}
\end{thm}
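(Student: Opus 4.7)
The plan is to establish both directions of the biconditional. The ``only if'' direction is essentially a direct verification on Kerr: in the principal null tetrad one has the explicit Killing spinor $\kappa_{AB}\propto \Psi_2^{-1/3}o_{(A}\iota_{B)}$, and the associated vector $\xi_{AA'}=(\sCurlDagger_{2,0}\kappa)_{AA'}$ is a constant multiple of the stationary Killing vector $\partial_t$, which is real and tends to a time translation in the asymptotically flat 4-end with mass $M\neq 0$.

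For the ``if'' direction I would proceed in three structural steps. First, show that $\xi=\sCurlDagger_{2,0}\kappa$ is a genuine Killing vector. Expanding the symmetrized derivative of $\xi$ via the Leibniz rule for $\sCurlDagger$ together with the Killing spinor equation $\sTwist_{2,0}\kappa=0$ produces only combinations of $\Psi_{ABCD}\kappa^{CD}$ and its complex conjugate; the symmetric-symmetric part cancels in vacuum, and the trace/divergence contribution is killed by the Bianchi identity \eqref{eq:VacuumBianchiCovariant}. Second, derive the algebraic Petrov structure. Applying another twistor (or curl) operator to $\sTwist_{2,0}\kappa=0$ and reducing with \eqref{eq:VacuumBianchiCovariant} gives an integrability condition of the form $\Psi_{(AB}{}^{CD}\kappa_{CD}\propto\kappa_{AB)}\cdot(\text{scalar})$, forcing $\Psi_{ABCD}\propto\kappa_{(AB}\kappa_{CD)}$ and hence Petrov type D on the open set $\{\kappa^{AB}\kappa_{AB}\neq 0\}$ with $\kappa$ aligned to the repeated principal spinors.

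Third, use reality of $\xi$ together with these algebraic relations to obtain a Mars--Simon-type identity. From $\xi$ and its twist one constructs a complex Ernst-type potential; combining this with the identity $\Psi_{ABCD}\propto\kappa_{(AB}\kappa_{CD)}$ shows the potential is algebraically related to $\kappa^{AB}\kappa_{AB}$. This is essentially the vanishing of the Mars tensor. Condition (ii) then fixes the remaining freedom: asymptotic flatness combined with $\xi\to\partial_t$ normalizes $\kappa$ and excludes a NUT parameter, while $M\neq 0$ selects a specific $(M,a)$ in the Kerr family. Integration of the resulting system in coordinates adapted to $\xi$ and the level sets of the Ernst potential reproduces the Kerr metric locally.

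The main obstacle is this third step: producing the Mars--Simon identity and passing from it to an honest local isometry with Kerr. Steps one and two are mechanical computations in the covariant symmetric-spinor calculus of the preceding section, but step three requires carefully organizing the integrability and Bianchi conditions to extract the algebraic relation, and then a delicate asymptotic analysis of $\kappa$ and $\xi$ in the stationary end to separate Kerr from the wider Kerr--NUT family while simultaneously pinning down $M$ and $a$. For the present paper it suffices to quote the original result from \cite{Backdahl:2010eq}.
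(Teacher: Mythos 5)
The paper does not prove this theorem; it is stated as a citation to \cite{Backdahl:2010eq} (and ultimately to Mars's characterization of Kerr), and you correctly note at the end that quoting the reference suffices. Your reconstruction of the ``if'' direction does match the strategy of the cited work: $\xi=\sCurlDagger_{2,0}\kappa$ is Killing by commuting derivatives and using \eqref{eq:VacuumBianchiCovariant}; the integrability condition $\kappa\underset{4,0}{\overset{1,0}{\odot}}\Psi=0$ for a vacuum Killing spinor forces $\Psi\propto\kappa\underset{2,0}{\overset{0,0}{\odot}}\kappa$ and hence Petrov type D wherever $\kappa_{AB}\kappa^{AB}\neq0$; and the reality of $\xi$ combined with condition (ii) is then fed into the Mars--Simon tensor machinery, with asymptotic flatness excluding the NUT parameter and pinning down $(M,a)$. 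This is precisely the chain of results assembled in the cited reference, so your proposal is consistent with the source being quoted, even though the present paper itself offers no proof.
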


Assume now that we work on a general asymptotically flat vacuum spacetime and have constructed a good approximation $\kappa$ of the Killing spinor. How this construction should be done will be discussed below. 
\begin{definition}\label{def:HIxi}
Based on an approximate Killing spinor $\kappa\in \mathcal{S}_{2,0}$ we define the following spinors
\begin{subequations}
\begin{align}
H={}&\sTwist_{2,0} \kappa ,\\
\xi ={}&\sCurlDagger_{2,0} \kappa ,\label{eq:xiDef}\\
I\xi ={}&\tfrac{1}{2}\sCurlDagger_{2,0} \kappa
 -  \tfrac{1}{2}\sCurl_{0,2} \bar{\kappa}.
\end{align}
\end{subequations}
\end{definition}
Observe that 
$I\xi$ is $i$ times the imaginary part of $\xi$.
Theorem~\ref{Theorem:CharacterisationKerrData} now says that the spacetime is locally diffeomorphic to Kerr if and only if $H$ and $I\xi$ vanish and $\xi$ has the correct asymptotic behaviour near spacelike infinity. 

We therefore expect that the spacetime is close to Kerr in some sense if $H$ and $I\xi$ are small in some appropriate sense. The details and proof of this expectation will be investigated in subsequent works.
\begin{definition}\label{def:SmallVar}
A quantity that vanishes if $H=0$ and $I\xi=0$, we call a \emph{small variable}. If not, we call it a \emph{non-small variable}.
\end{definition}
In this paper we will derive an evolution system for the small variables $H$, $I\xi$ and some quantities derived from them. In particular we will express the evolution system in the GHP formalism only using the principal null directions of $\kappa$. Some spin coefficients and curvature components can be expressed in terms of the components in this evolution system.
Importantly, we also get a full set of equations for all derivatives of the non-small variables. Namely, the remaining spin coefficients, the remaining curvature component and the non-trivial component of $\kappa$. Furthermore, we derive local expressions that are constant in Kerr containing the mass and angular momentum. From $\kappa$ we also derive a radial variable and approximations of two Killing vectors.
To make sure that all variables evolve properly, we also in the mean time derive a first order symmetric hyperbolic evolution system for the complete set of variables.

On Kerr in GHP notation using the principal null directions of $\kappa$, one can express all relevant structures and operators in terms of the GHP operators and the quantities listed in the previous paragraph. This gives the possibility of proving decay estimates using only the quantities used here and a well chosen foliation of the spacetime. However, the choice of foliation will not be discussed in this paper.

\subsection{Constructing an approximate Killing spinor}
In the series of works \cite{Backdahl:2010cy, Backdahl:2010fa, Backdahl:2010eq, Backdahl:2011np}, the value for $\kappa$ on the initial surface $\InitSlice$ was chosen to satisfy
\begin{align}\label{eq:InitialElliptic}
\mathbf{L}(\kappa)=\sdiv_{4} \stwist_{2} \kappa - 2\Omega \underset{4,0}{\overset{3,0}{\odot}}\stwist_{2} \kappa = 0
\end{align}
along with appropriate asymptotic or boundary conditions. This gives a unique solution on $\InitSlice$. Furthermore, this solution coincides with the actual Killing spinor for the Kerr case.
Furthermore, on $\InitSlice$ the normal derivative was chosen to be 
\begin{align}\label{eq:InitialNormal}
\nabla_{\tau}\kappa = - \scurl_{2} \kappa.
\end{align}
The approximate Killing spinor could then be evolved with the wave-like equation
\begin{align}
0=(\sDiv_{3,1} \sTwist_{2,0} \kappa)_{AB}=\tfrac{2}{3} \nabla^{CC'}\nabla_{CC'}\kappa_{AB} + \tfrac{2}{3} \Psi_{AB}{}^{CD} \kappa_{CD}
\end{align}

In this paper we will, for most parts, use a slightly more general evolution equation
\begin{align}
\sDiv_{3,1} \sTwist_{2,0} \kappa={}&\upsilon, 
\label{eq:kappaPropagation}
\end{align}
where $\upsilon\in \mathcal{S}_{2,0}$ is a gauge source function, i.e. a given function of $\kappa$ and derivatives of $\kappa$. 
There are many possible choices of $\upsilon$, but it is important that $\upsilon=0$ if $H=0$ and $I\xi=0$. 
The simplest choice is $\upsilon=0$. The choice of $\upsilon$ will be discussed further around Assumption~\ref{ass:NiceGaugeSourceFunction}.
\begin{remark}
Similarly, the values of $\mathbf{L}(\kappa)$ and $\nabla_{\tau}\kappa + \scurl_{2} \kappa$ on the initial slice can also be set to some gauge source functions that are zero if $H=0$ and $I\xi=0$. However, for this paper it is an unnecessary complication, so we will refrain from doing that here.
\end{remark}

\subsection{Frames and projection operators}
Classically, on a Petrov Type D spacetime, a tetrad or dyad is chosen to be aligned with the repeated principal null directions of the Weyl tensor or spinor. However, on a general spacetime this is not possible because the Weyl tensor can have up to four principal null directions. The approximate Killing spinor though, has two principal null direction we can use for our frame.
\begin{definition}
Given an approximate Killing spinor $\kappa$, a \emph{principal Killing spinor dyad} is a normalized spin dyad $(o_A, \iota_A)$ such that
\begin{align}
\kappa_{AB}={}&-2 \kappa_{1} o_{(A}\iota_{B)}
\label{eq:kappaDyadRelation},
\end{align}
where $\kappa_1$ a complex scalar of type $\{0,0\}$ and
\begin{align}
o^{A} \bar{o}^{A'} \nabla_{AA'}(\kappa_{1} + \bar{\kappa}_{1'}) < 0.
\label{eq:oOutgoing}
\end{align}
\end{definition}
\bigskip
\begin{remark}\label{rem:dyad}
Observe that:
\begin{itemize}
\item The coefficient $\kappa_{1}$ can be defined covariantly up to sign from 
\begin{align}\label{eq:kappa1Covariant}
\kappa_{1}^2=- \tfrac{1}{2} \kappa_{AB} \kappa^{AB}.
\end{align} 
\item In principle $\kappa_{AB}$ could degenerate at some point, so that \eqref{eq:kappaDyadRelation} is not possible. But in that case $\kappa_{AB} \kappa^{AB}=0$, which outside the black hole region is a large deviation from the value for Kerr $\kappa_{AB} \kappa^{AB}=\frac{-2}{9}(r-i a \cos\theta)^2$. Hence, for small perturbations we can exclude that possibility.
\item The condition \eqref{eq:oOutgoing} encodes that $o^{A} \bar{o}^{A'}$ is an outgoing null direction.
\item All principal Killing spinor dyads are related by spin and boost transformations. Hence, we can work with GHP notation in this class of dyads. In the remainder of this paper all GHP expressions will be with respect to this class of dyads.
\item As we will see below, if the spacetime is Petrov Type D, a principal Killing spinor dyad is also a principal dyad of the Weyl spinor.
\end{itemize}
\end{remark}

Following \cite[Sec IID]{2016arXiv160106084A} we can use the approximate Killing spinor $\kappa$ to covariantly separate components of different spin weight. Observe however that in \cite{2016arXiv160106084A} $\kappa$ was an actual Killing spinor on a vacuum type D spacetime. Due to \eqref{eq:kappaDyadRelation} the algebraic identities in \cite[Sec IID]{2016arXiv160106084A} hold, but not the differential identities.  
\begin{definition}\label{def:SpinProj}
Given $\kappa$, we define spin projection operators for the unprimed indices
\begin{subequations}
\begin{align}
\mathcal{P}^{0}_{4,0}={}&\tfrac{3}{8} \kappa_{1}^{-4}\kappa \underset{2,0}{\overset{0,0}{\odot}}\kappa \underset{0,0}{\overset{0,0}{\odot}}\kappa \underset{2,0}{\overset{2,0}{\odot}}\kappa \underset{4,0}{\overset{2,0}{\odot}},\\
\mathcal{P}^{1}_{4,0}={}&- \kappa_{1}^{-4}\kappa \underset{2,0}{\overset{0,0}{\odot}}\kappa \underset{2,0}{\overset{1,0}{\odot}}\kappa \underset{2,0}{\overset{1,0}{\odot}}\kappa \underset{4,0}{\overset{2,0}{\odot}},\\
\mathcal{P}^{2}_{4,0}={}&\tfrac{1}{16} \kappa_{1}^{-4}\kappa \underset{2,0}{\overset{0,0}{\odot}}\kappa \underset{2,0}{\overset{1,0}{\odot}}\kappa \underset{2,0}{\overset{1,0}{\odot}}\kappa \underset{4,0}{\overset{2,0}{\odot}}
 + \kappa_{1}^{-4}\kappa \underset{4,0}{\overset{1,0}{\odot}}\kappa \underset{4,0}{\overset{1,0}{\odot}}\kappa \underset{4,0}{\overset{1,0}{\odot}}\kappa \underset{4,0}{\overset{1,0}{\odot}} ,\\
\mathcal{P}^{1/2}_{3,1}={}&- \tfrac{3}{4} \kappa_{1}^{-2}\kappa \underset{1,1}{\overset{0,0}{\odot}}\kappa \underset{3,1}{\overset{2,0}{\odot}},\\
\mathcal{P}^{3/2}_{3,1}={}&\kappa_{1}^{-2}\kappa \underset{3,1}{\overset{1,0}{\odot}}\kappa \underset{3,1}{\overset{1,0}{\odot}}
 + \tfrac{1}{12} \kappa_{1}^{-2}\kappa \underset{1,1}{\overset{0,0}{\odot}}\kappa \underset{3,1}{\overset{2,0}{\odot}},\\
\mathcal{P}^{0}_{2,0}={}&- \tfrac{1}{2} \kappa_{1}^{-2}\kappa \underset{0,0}{\overset{0,0}{\odot}}\kappa \underset{2,0}{\overset{2,0}{\odot}},\\
\mathcal{P}^{1}_{2,0}={}&\kappa_{1}^{-2}\kappa \underset{2,0}{\overset{1,0}{\odot}}\kappa \underset{2,0}{\overset{1,0}{\odot}}.
\end{align}
\end{subequations}
\end{definition}
Observe that no component of a valence $(3,1)$ spinor will have spin weight $1/2$ or $3/2$. Here we are only counting the spin weight from the unprimed indices, i.e. letting the primed index remain abstract. We also have the complex conjugate versions for the spin projection of the primed indices.

For illustration, in a principal Killing spinor dyad, $\phi\in \mathcal{S}_{4,0}$ and $\varphi\in \mathcal{S}_{3,1}$, we have
\begin{align*}
(\mathcal{P}^{1}_{4,0} \phi)_{0}={}&0,&
(\mathcal{P}^{1}_{4,0} \phi)_{1}={}&\phi_{1},&
(\mathcal{P}^{1}_{4,0} \phi)_{2}={}&0,&
(\mathcal{P}^{1}_{4,0} \phi)_{3}={}&\phi_{3},&
(\mathcal{P}^{1}_{4,0} \phi)_{4}={}&0,\\
(\mathcal{P}^{1/2}_{3,1} \varphi)_{00'}={}&0,&
(\mathcal{P}^{1/2}_{3,1} \varphi)_{10'}={}&\varphi_{10'},&
(\mathcal{P}^{1/2}_{3,1} \varphi)_{20'}={}&\varphi_{20'},&
(\mathcal{P}^{1/2}_{3,1} \varphi)_{30'}={}&0,\\
(\mathcal{P}^{1/2}_{3,1} \varphi)_{01'}={}&0,&
(\mathcal{P}^{1/2}_{3,1} \varphi)_{11'}={}&\varphi_{11'},&
(\mathcal{P}^{1/2}_{3,1} \varphi)_{21'}={}&\varphi_{21'},&
(\mathcal{P}^{1/2}_{3,1} \varphi)_{31'}={}&0.
\end{align*}

\subsection{Covariant equations}
To analyse the evolution of $H$ and $I\xi$ and the derivatives of the remaining spin coefficients, we define a few extra variables. Here, we define them covariantly, but we will consider their components in the next section.
\begin{definition}
Based on an approximate Killing spinor $\kappa\in \mathcal{S}_{2,0}$, the Weyl spinor $\Psi\in \mathcal{S}_{4,0}$ and Definition~\ref{def:HIxi} we define the following spinors
\begin{subequations}
\begin{align}
\chi ={}&\tfrac{2}{3}\sCurl_{1,1} I\xi ,\\
\psi ={}&\Psi
 -  \mathcal{P}^{0}_{4,0} \Psi ,\label{eq:psiDef}\\
\gamma ={}&\sCurlDagger_{3,1} H,\label{eq:gammaDef}\\
\eta ={}&\kappa_{1}^{-1}\kappa \underset{4,2}{\overset{2,0}{\odot}}\sTwist_{3,1} H,\label{eq:etaDef}\\
\Upsilon ={}&\kappa_{1}^{-2}\sTwist_{0,0} (\kappa_{1}^3\Psi_{2}).
\end{align}
\end{subequations}
\end{definition}
Observe that the curvature component $\Psi_{2}$ can be defined covariantly via
\begin{align}
\Psi_{2}={}&\tfrac{1}{4} \kappa_{1}^{-2}\kappa \underset{2,0}{\overset{2,0}{\odot}}\kappa \underset{4,0}{\overset{2,0}{\odot}}\Psi .
\end{align}
We can also write $\psi$ and $\Upsilon$ in terms of $H$.
\begin{lemma}
The variables $\psi$ and $\Upsilon$ can be expressed in terms of $H$ as
\begin{subequations}
\begin{align}
\psi ={}&\tfrac{3}{4} \kappa_{1}^{-4}\kappa \underset{2,0}{\overset{0,0}{\odot}}\kappa \underset{2,0}{\overset{1,0}{\odot}}\kappa \underset{4,0}{\overset{2,0}{\odot}}\sCurl_{3,1} H
 -  \tfrac{1}{2} \kappa_{1}^{-2}\kappa \underset{4,0}{\overset{1,0}{\odot}}\sCurl_{3,1} H,\label{eq:psiToCurlH}\\
\Upsilon ={}&
 -  \tfrac{1}{2} \kappa_{1}^{-1}\kappa \underset{1,1}{\overset{1,0}{\odot}}\kappa \underset{3,1}{\overset{2,0}{\odot}}\sCurlDagger_{4,0} \psi 
 - \tfrac{9}{4} \Psi_{2} \kappa_{1}^{-1}\kappa \underset{3,1}{\overset{2,0}{\odot}}H.
\label{eq:UpsilonToH}
\end{align}
\end{subequations}
\end{lemma}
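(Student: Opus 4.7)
The two formulas both follow by combining the vacuum operator identities with the Bianchi identity, and then inverting on the appropriate spin-weight sector using the projectors of Definition~\ref{def:SpinProj}.

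For the formula for $\psi$, the starting point is the operator identity $\sCurl_{3,1}\sTwist_{2,0}\kappa=c_{1}\,\Psi\underset{4,0}{\overset{1,0}{\odot}}\kappa$ in vacuum, for an explicit constant $c_{1}$. This follows from the $[\sCurl,\sTwist]$ commutator in \cite[Lem~9]{Aksteiner:2022fmf} (equivalently from the vacuum Ricci identity $\Box_{AB}\kappa_{CD}=\Psi_{AB(C|E|}\kappa_{D)}{}^{E}$) applied to $\kappa\in\mathcal{S}_{2,0}$. Using the decomposition $\Psi=\mathcal{P}^{0}_{4,0}\Psi+\psi$ together with the identity $\mathcal{P}^{0}_{4,0}\Psi=\tfrac{3\Psi_{2}}{2\kappa_{1}^{2}}\,\kappa\underset{2,0}{\overset{0,0}{\odot}}\kappa$ (the statement that the spin-weight-zero sector of $\Psi$ is the type~D form $6\Psi_{2}\,o_{(A}o_{B}\iota_{C}\iota_{D)}$), a dyad check shows that $\mathcal{P}^{0}_{4,0}\Psi\underset{4,0}{\overset{1,0}{\odot}}\kappa=0$, because $\kappa_{CE}\kappa_{D}{}^{E}\propto\kappa_{1}^{2}\epsilon_{CD}$ is killed by the outer symmetrisation over $(ABCD)$. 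Hence $\sCurl_{3,1}H=c_{1}\,\psi\underset{4,0}{\overset{1,0}{\odot}}\kappa$, and the task reduces to inverting the map $\psi\mapsto\psi\underset{4,0}{\overset{1,0}{\odot}}\kappa$ on the complement $\mathrm{Im}(\mathbf{1}-\mathcal{P}^{0}_{4,0})$. I would construct this inverse by successive contractions with $\kappa$ and use the vanishing $(\kappa\odot\kappa)\underset{4,0}{\overset{1,0}{\odot}}\kappa=0$ together with the explicit projectors; the two-term combination in \eqref{eq:psiToCurlH} is exactly this inverse, with the coefficients $\tfrac{3}{4}\kappa_{1}^{-4}$ and $-\tfrac{1}{2}\kappa_{1}^{-2}$ fixed by demanding that the result lie in the range of $\mathbf{1}-\mathcal{P}^{0}_{4,0}$.

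For the formula for $\Upsilon$, I would apply $\sCurlDagger_{4,0}$ to the decomposition $\Psi=\mathcal{P}^{0}_{4,0}\Psi+\psi$ and use the vacuum Bianchi identity \eqref{eq:VacuumBianchiCovariant} to obtain $\sCurlDagger_{4,0}(\mathcal{P}^{0}_{4,0}\Psi)=-\sCurlDagger_{4,0}\psi$. Expanding the left side with the Leibniz rule of \cite[Lem~10]{Aksteiner:2022fmf} applied to $\tfrac{3\Psi_{2}}{2\kappa_{1}^{2}}\,\kappa\odot\kappa$ generates three types of contributions: a $\sTwist_{0,0}$ of the scalar prefactor $\Psi_{2}/\kappa_{1}^{2}$, terms from $\sCurlDagger_{2,0}\kappa=\xi$, and terms from $\sTwist_{2,0}\kappa=H$ (since $\sCurlDagger$ of a symmetric product produces both $\sCurlDagger$- and $\sTwist$-factors on each operand). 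Differentiating the covariant definition $\kappa_{1}^{2}=-\tfrac{1}{2}\kappa\underset{2,0}{\overset{2,0}{\odot}}\kappa$ expresses $\sTwist_{0,0}\kappa_{1}$ in terms of $H$ and $\xi$, and regrouping converts the scalar derivative into $\kappa_{1}^{-5}\sTwist_{0,0}(\kappa_{1}^{3}\Psi_{2})$ plus additional $H$- and $\xi$-contributions. Contracting the resulting $\mathcal{S}_{3,1}$-identity with two copies of $\kappa$ in the pattern dictated by the first term of \eqref{eq:UpsilonToH} and multiplying by $\kappa_{1}^{-2}$ should yield the stated expression, with the $\xi$-contributions cancelling against each other thanks to the chosen contraction pattern and leaving only the $H$-dependent correction $-\tfrac{9}{4}\Psi_{2}\kappa_{1}^{-1}\kappa\underset{3,1}{\overset{2,0}{\odot}}H$.

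The main obstacle is bookkeeping rather than conceptual: the Leibniz rule for the symmetric product generates many terms indexed by the possible index distributions, and one must verify that the proliferation of $\xi$-dependent contributions either cancel among themselves or are absorbed by the rescaling $\kappa_{1}^{-2}\sTwist_{0,0}(\kappa_{1}^{3}\Psi_{2})$. This is most safely checked by evaluating all identities on the nonvanishing dyad components of $H$, $\sCurl_{3,1}H$, $\sCurlDagger_{4,0}\psi$ and $\Psi_{2}$ in a principal Killing spinor dyad, where $\kappa$ reduces to the single scalar $\kappa_{1}$ and the symmetric products collapse to a finite list of explicit combinatorial identities.
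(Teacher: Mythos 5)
Your proposal follows essentially the same route as the paper's proof: for \eqref{eq:psiToCurlH} both you and the paper commute $\sCurl_{3,1}\sTwist_{2,0}$ to obtain $\sCurl_{3,1}H=-2\kappa\underset{4,0}{\overset{1,0}{\odot}}\psi$ (using $\kappa\underset{4,0}{\overset{1,0}{\odot}}\mathcal{P}^{0}_{4,0}=0$) and then invert on the $\mathcal{P}^{1}\oplus\mathcal{P}^{2}$ sectors via the explicit projectors, while for \eqref{eq:UpsilonToH} both apply $\sCurlDagger_{4,0}$ to $\Psi=\psi+\tfrac{3}{2}\kappa_{1}^{-2}\kappa\odot\kappa\,\Psi_{2}$, use the vacuum Bianchi identity and the Leibniz rule, differentiate $\kappa_{1}^{2}=-\tfrac{1}{2}\kappa\odot\kappa$ to express $\sTwist_{0,0}\kappa_{1}$, and cancel the $\xi$-terms between the Bianchi contraction and the rescaled derivative $\kappa_{1}^{-2}\sTwist_{0,0}(\kappa_{1}^{3}\Psi_{2})$. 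The only difference is one of exposition: you sketch the inversion and cancellations as a plan and defer the final bookkeeping to a dyad check, while the paper records the intermediate covariant identities \eqref{eq:CurlHEq1}--\eqref{eq:TwistPsi2Eq1} explicitly.
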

\begin{proof}
Commuting derivatives and observing that $\kappa \underset{4,0}{\overset{1,0}{\odot}}\mathcal{P}^{0}_{4,0}=0$, we find
\begin{align}
\sCurl_{3,1} H={}&\sCurl_{3,1} \sTwist_{2,0} \kappa=-2\kappa \underset{4,0}{\overset{1,0}{\odot}}\Psi
=-2\kappa \underset{4,0}{\overset{1,0}{\odot}}\psi .
\label{eq:CurlHEq1}
\end{align}
Applying $\kappa \underset{4,0}{\overset{1,0}{\odot}}$ gives
\begin{align}
\kappa_{1}^{-2}\kappa \underset{4,0}{\overset{1,0}{\odot}}\sCurl_{3,1} H={}&-2\psi
 -  \tfrac{3}{2} \kappa_{1}^{-2}\kappa \underset{2,0}{\overset{0,0}{\odot}}\kappa \underset{4,0}{\overset{2,0}{\odot}}\psi .
\end{align}
Applying $\mathcal{P}^{2}$ and $\mathcal{P}^{1}$ on this as well as $\mathcal{P}^{0}$ on \eqref{eq:psiDef} yield
\begin{align}
\mathcal{P}^{2}_{4,0} \psi ={}&- \tfrac{1}{2} \kappa_{1}^{-2}\mathcal{P}^{2}_{4,0} \kappa \underset{4,0}{\overset{1,0}{\odot}}\sCurl_{3,1} H,&
\mathcal{P}^{1}_{4,0} \psi ={}&-2 \kappa_{1}^{-2}\mathcal{P}^{1}_{4,0} \kappa \underset{4,0}{\overset{1,0}{\odot}}\sCurl_{3,1} H,&
\mathcal{P}^{0}_{4,0} \psi ={}&0.
\end{align}
Summing these parts and using the expressions in Definition~\ref{def:SpinProj}, we get \eqref{eq:psiToCurlH}.
A derivative of \eqref{eq:kappa1Covariant} gives an equation for the gradient of $\kappa_{1}$
\begin{align}
\sTwist_{0,0} \kappa_{1}={}&- \tfrac{1}{2} \kappa_{1}^{-1}\kappa \underset{3,1}{\overset{2,0}{\odot}}H
 -  \tfrac{1}{3} \kappa_{1}^{-1}\kappa \underset{1,1}{\overset{1,0}{\odot}}\xi .
\label{eq:Twistkappa1Eq1}
\end{align}
This together with the definition of $\Upsilon$ yield
\begin{align}
\sTwist_{0,0} \Psi_{2}={}&\kappa_{1}^{-1}\Upsilon
 + \tfrac{3}{2} \Psi_{2} \kappa_{1}^{-2}\kappa \underset{3,1}{\overset{2,0}{\odot}}H
 + \Psi_{2} \kappa_{1}^{-2}\kappa \underset{1,1}{\overset{1,0}{\odot}}\xi .
\label{eq:TwistPsi2Eq1}
\end{align}
The Weyl curvature can be expressed as
\begin{align}
\Psi ={}&\psi + \tfrac{3}{2} \kappa_{1}^{-2}\kappa \underset{2,0}{\overset{0,0}{\odot}}\kappa \underset{0,0}{\overset{0,0}{\odot}}\Psi_{2}.
\label{eq:PsiTopsi}
\end{align}
Applying $\kappa \underset{1,1}{\overset{1,0}{\odot}}\kappa \underset{3,1}{\overset{2,0}{\odot}}\sCurlDagger_{4,0}$ and using the vacuum Bianchi identity \eqref{eq:VacuumBianchiCovariant} and the Leibniz rules, we get
\begin{align}
0={}&\kappa \underset{1,1}{\overset{1,0}{\odot}}\kappa \underset{3,1}{\overset{2,0}{\odot}}\sCurlDagger_{4,0} \psi
 + \tfrac{3}{2}\kappa \underset{1,1}{\overset{1,0}{\odot}}\kappa \underset{3,1}{\overset{2,0}{\odot}}\sCurlDagger_{4,0} (\kappa_{1}^{-2}\kappa \underset{2,0}{\overset{0,0}{\odot}}\kappa \underset{0,0}{\overset{0,0}{\odot}}\Psi_{2})\nonumber\\
={}&\kappa \underset{1,1}{\overset{1,0}{\odot}}\kappa \underset{3,1}{\overset{2,0}{\odot}}\sCurlDagger_{4,0} \psi
 + \tfrac{3}{2} \kappa_{1}^{-2}\kappa \underset{1,1}{\overset{1,0}{\odot}}\kappa \underset{3,1}{\overset{2,0}{\odot}}\sCurlDagger_{4,0} \kappa \underset{2,0}{\overset{0,0}{\odot}}\kappa \underset{0,0}{\overset{0,0}{\odot}}\Psi_{2}
 + 2 \Psi_{2}\kappa \underset{3,1}{\overset{2,0}{\odot}}H
 + \tfrac{4}{3} \Psi_{2}\kappa \underset{1,1}{\overset{1,0}{\odot}}\xi \nonumber\\
={}&\kappa \underset{1,1}{\overset{1,0}{\odot}}\kappa \underset{3,1}{\overset{2,0}{\odot}}\sCurlDagger_{4,0} \psi
 + \tfrac{3}{2} \Psi_{2}\kappa \underset{3,1}{\overset{2,0}{\odot}}H
 - 2 \Psi_{2}\kappa \underset{1,1}{\overset{1,0}{\odot}}\xi
 + 2 \kappa_{1}^2\sTwist_{0,0} \Psi_{2}.
\end{align}
Together with \eqref{eq:TwistPsi2Eq1}, we get \eqref{eq:UpsilonToH}.
\end{proof}

With the notion of Definition~\ref{def:SmallVar} the variables $H$, $I\xi$, $\chi$, $\psi$, $\gamma$, $\eta$ and $\Upsilon$ are \emph{small variables}, while $\kappa_{1}$, $\xi$ and $\Psi_{2}$ are \emph{non-small variables}.
We will now derive a set of evolution equations for a subset of the small variables, in terms of a closed evolution system with coefficients depending on the non-small variables $\kappa_{1}$, $\xi$ and $\Psi_{2}$. 

\begin{lemma}\label{lem:SmallVarPropCovariant}
Under the evolution \eqref{eq:kappaPropagation}, the small variables $H$, $I\xi$, $\chi$ and $\psi$ satisfies the equations
\begin{subequations}
\begin{align}
\sDiv_{3,1} H={}&\upsilon,\label{eq:DivH}\\
\sCurl_{3,1} H={}&-2\kappa \underset{4,0}{\overset{1,0}{\odot}}\psi \label{eq:CurlH},\\
\sDiv_{1,1} I\xi ={}&0,\label{eq:DivIxi}\\
\sCurl_{1,1} I\xi ={}&\tfrac{3}{2}\chi,\label{eq:CurlIxi}\\
\sCurlDagger_{2,0} \chi ={}&- \tfrac{1}{4}\sCurlDagger_{2,0} \upsilon
 -  \tfrac{1}{2}\psi \underset{3,1}{\overset{3,0}{\odot}}H
 -  \tfrac{3}{4} \Psi_{2} \kappa_{1}^{-2}\kappa \underset{1,1}{\overset{1,0}{\odot}}\kappa \underset{3,1}{\overset{2,0}{\odot}}H \nonumber\\
& + \tfrac{1}{4}\sCurl_{0,2} \bar{\upsilon}
 + \tfrac{1}{2}\bar{\psi}\underset{1,3}{\overset{0,3}{\odot}}\bar{H}
 + \tfrac{3}{4} \bar\Psi_{2} \bar{\kappa}_{1'}^{-2}\bar{\kappa}\underset{1,1}{\overset{0,1}{\odot}}\bar{\kappa}\underset{1,3}{\overset{0,2}{\odot}}\bar{H},\label{eq:CurlDgchi}\\
0={}&\mathcal{P}^{3/2}_{3,1} \sCurlDagger_{4,0} \psi
 -  \tfrac{3}{2} \Psi_{2} \kappa_{1}^{-2}\mathcal{P}^{3/2}_{3,1} \kappa \underset{3,1}{\overset{1,0}{\odot}}H.\label{eq:CurlDgpsi}
\end{align}
\end{subequations}
\end{lemma}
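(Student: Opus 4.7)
The plan is to dispatch the five identities in three groups, from trivial to progressively more involved. Equations \eqref{eq:DivH}, \eqref{eq:CurlH} and \eqref{eq:CurlIxi} are essentially definitional or have already appeared in the previous proof; \eqref{eq:DivIxi} and \eqref{eq:CurlDgpsi} are one-step commutator or Bianchi calculations; and \eqref{eq:CurlDgchi} is the main technical computation.

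First I would handle the easy cases. For \eqref{eq:DivH}, since $H=\sTwist_{2,0}\kappa$ by Definition~\ref{def:HIxi}, the propagation equation \eqref{eq:kappaPropagation} is literally \eqref{eq:DivH}. Equation \eqref{eq:CurlH} is essentially the content of \eqref{eq:CurlHEq1} from the preceding proof: apply the $[\sCurl,\sTwist]$ commutator from \cite[Lem~18]{AndBaeBlu14a} to $\sCurl_{3,1}\sTwist_{2,0}\kappa$, obtaining $-2\kappa\underset{4,0}{\overset{1,0}{\odot}}\Psi$, then use $\kappa\underset{4,0}{\overset{1,0}{\odot}}\mathcal{P}^{0}_{4,0}=0$ to replace $\Psi$ by $\psi$. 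Equation \eqref{eq:CurlIxi} is just the definition $\chi=\tfrac{2}{3}\sCurl_{1,1} I\xi$ rearranged.

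For \eqref{eq:DivIxi} I would compute $\sDiv_{1,1}\sCurlDagger_{2,0}\kappa$ and $\sDiv_{1,1}\sCurl_{0,2}\bar\kappa$ using the $[\sDiv,\sCurlDagger]$ and $[\sDiv,\sCurl]$ commutators of \cite[Lem~9]{Aksteiner:2022fmf}. In vacuum (so Ricci contributions drop), each reduces to a wave-operator term on $\kappa$ respectively $\bar\kappa$ plus an algebraic Weyl contribution, and the two divergences are complex conjugates of each other; subtracting them and using that $I\xi$ is $i$ times a real vector, so that $\sDiv_{1,1}I\xi$ is purely imaginary while simultaneously equal to its own negative after the computation, forces it to vanish. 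For \eqref{eq:CurlDgpsi}, apply $\sCurlDagger_{4,0}$ to the decomposition \eqref{eq:PsiTopsi}, use the vacuum Bianchi identity \eqref{eq:VacuumBianchiCovariant}, expand via the Leibniz rule of \cite[Lem~10]{Aksteiner:2022fmf} with $\sCurlDagger_{2,0}\kappa=\xi$, $\sTwist_{2,0}\kappa=H$ and $\sTwist_{0,0}\Psi_{2}$ from \eqref{eq:TwistPsi2Eq1}. The projection $\mathcal{P}^{3/2}_{3,1}$ then annihilates the pieces built from $\kappa\odot\xi$ and $\kappa\odot\kappa\odot\sTwist_{0,0}\Psi_{2}$, since $\kappa$ carries spin weight $0$ and both $\xi$ and $\sTwist_{0,0}\Psi_{2}$ have spin weights in $\{-1,0,1\}$, below the $\pm 3/2$ sector; only the $\kappa\odot H$ contribution survives and gives \eqref{eq:CurlDgpsi}.

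The principal obstacle is \eqref{eq:CurlDgchi}, a third-order identity mixing small variables, their Hermitian conjugates, and the gauge source $\upsilon$. Starting from $\sCurlDagger_{2,0}\chi=\tfrac{2}{3}\sCurlDagger_{2,0}\sCurl_{1,1} I\xi$, I would commute $\sCurlDagger$ past $\sCurl$ using \cite[Lem~9]{Aksteiner:2022fmf}, producing a $\sCurl_{0,2}\sCurlDagger_{1,1} I\xi$ term, a $\sTwist_{0,0}\sDiv_{1,1} I\xi$ term that drops by \eqref{eq:DivIxi}, and Weyl-curvature pieces. Expanding $I\xi$ via Definition~\ref{def:HIxi} and further commuting so that one can substitute the wave equation \eqref{eq:kappaPropagation}, the identity \eqref{eq:CurlH}, and the splitting $\Psi=\psi+\tfrac{3}{2}\kappa_{1}^{-2}\kappa\odot\kappa\odot\Psi_{2}$ from \eqref{eq:PsiTopsi}, the expression should rearrange into $-\tfrac{1}{4}\sCurlDagger_{2,0}\upsilon$, the $\psi\odot H$ and $\Psi_{2}\,\kappa\odot\kappa\odot H$ terms, and their Hermitian counterparts (the latter coming from the $\sCurl_{0,2}\bar\kappa$ piece of $I\xi$). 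The delicate step is the bookkeeping of the many Leibniz, commutator and curvature contributions, together with verifying that the final expression is manifestly antihermitian in accordance with the $\tfrac12(\xi-\bar\xi)$ structure of $I\xi$; in practice this is exactly the kind of calculation for which the \emph{xAct}, \emph{SymSpin} and \emph{SpinFrames} packages mentioned in the introduction are indispensable.
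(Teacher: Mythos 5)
Your treatment of \eqref{eq:DivH}, \eqref{eq:CurlH}, \eqref{eq:CurlIxi} and \eqref{eq:CurlDgpsi} matches the paper's proof. Two steps have genuine problems.

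For \eqref{eq:DivIxi}, your stated mechanism is wrong. You claim that $\sDiv_{1,1}\sCurlDagger_{2,0}\kappa$ ``reduces to a wave-operator term on $\kappa$ plus an algebraic Weyl contribution'' and then invoke a conjugate-cancellation argument. In fact $\sDiv_{1,1}\sCurlDagger_{2,0}\kappa=\nabla^{AA'}\nabla^{B}{}_{A'}\kappa_{AB}$ has no $\Box$-term at all: the piece $\tfrac12\epsilon^{AB}\Box\kappa_{AB}$ vanishes because $\kappa$ is symmetric, and the remaining $\Box^{AB}\kappa_{AB}$ vanishes in vacuum because $\Psi$ is totally symmetric and trace-free. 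So already $\sDiv_{1,1}\xi=0$ and $\sDiv_{1,1}\bar\xi=0$ individually; no imaginary-part or ``equal to its own negative'' argument is needed, nor would it be logically sound. The paper's one-line ``direct consequence of a commutator'' refers to exactly this, and your elaboration misidentifies which terms appear.

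For \eqref{eq:CurlDgchi}, your proposed route is circular as described. You plan to commute $\sCurlDagger$ past $\sCurl$ to produce a $\sCurl_{0,2}\sCurlDagger_{1,1}I\xi$ term plus $\sTwist\sDiv$ and curvature pieces. But $\sCurlDagger_{1,1}I\xi=-\tfrac32\bar\chi$ (since $\overline{I\xi}=-I\xi$ and $\chi=\tfrac23\sCurl_{1,1}I\xi$), so $\sCurl_{0,2}\sCurlDagger_{1,1}I\xi=-\tfrac32\,\overline{\sCurlDagger_{2,0}\chi}$. The resulting equation therefore relates $\sCurlDagger_{2,0}\chi$ to its own Hermitian conjugate, and since the target expression \eqref{eq:CurlDgchi} is manifestly anti-Hermitian, the $[\sCurlDagger,\sCurl]$ commutator only pins down the (trivially vanishing) Hermitian part. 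It does not give the formula. The paper avoids this trap by using the second-order identity $\sCurlDagger_{2,0}\sCurl_{1,1}I\xi=-\sDiv_{2,2}\sTwist_{1,1}I\xi+\tfrac34\sTwist_{0,0}\sDiv_{1,1}I\xi$ (so $\sDiv\sTwist$, not $\sCurl\sCurlDagger$), together with the bridge variable $\gamma=\sCurlDagger_{3,1}H$: one first derives $\sTwist_{1,1}I\xi=\tfrac34\gamma-\tfrac34\bar\gamma$ by commuting derivatives on the definition of $I\xi$, then $\sDiv_{2,2}\gamma=\tfrac12\sCurlDagger_{2,0}\upsilon+\Psi\underset{3,1}{\overset{3,0}{\odot}}H$ by commuting derivatives on $\gamma=\sCurlDagger_{3,1}\sTwist_{2,0}\kappa$ and substituting \eqref{eq:kappaPropagation}, and finally obtains $\sCurlDagger_{2,0}\chi=-\tfrac12\sDiv_{2,2}\gamma+\tfrac12\sDiv_{2,2}\bar\gamma$, which assembles into \eqref{eq:CurlDgchi} after replacing $\Psi$ by $\psi+\tfrac32\kappa_1^{-2}\kappa\odot\kappa\,\Psi_2$. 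The intermediate variable $\gamma$, which you never invoke, is exactly what converts the third-order statement into a tractable second-order one and keeps the anti-Hermitian information from collapsing.
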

\begin{proof}
The equation \eqref{eq:DivH} is just the evolution equation \eqref{eq:kappaPropagation}, while \eqref{eq:CurlH} is \eqref{eq:CurlHEq1}.
The relation \eqref{eq:DivIxi} is a direct consequence of a commutator. \eqref{eq:CurlIxi} is just the definition of $\chi$.
By applying $\sTwist_{1,1}$ to the definition of $I\xi$ and commuting derivatives we get
\begin{align}
\sTwist_{1,1} I\xi ={}&\tfrac{3}{4}\sCurlDagger_{3,1} \sTwist_{2,0} \kappa
 -  \tfrac{3}{4}\sCurl_{1,3} \sTwist_{0,2} \bar{\kappa}
 =\tfrac{3}{4}\gamma -  \tfrac{3}{4}\bar{\gamma}.
\end{align}
Similarly applying $\sDiv_{2,2}$ to the definition of $\gamma$ and commuting derivatives we get
\begin{align}
\sDiv_{2,2} \gamma ={}&\tfrac{1}{2}\sCurlDagger_{2,0} \upsilon
 + \Psi \underset{3,1}{\overset{3,0}{\odot}}H
 =\tfrac{1}{2}\sCurlDagger_{2,0} \upsilon
 + \psi \underset{3,1}{\overset{3,0}{\odot}}H
 + \frac{3 \Psi_{2}}{2 \kappa_{1}^2}\kappa \underset{1,1}{\overset{1,0}{\odot}}\kappa \underset{3,1}{\overset{2,0}{\odot}}H.
\label{eq:Divgamma}
\end{align}
Using the definition of $\chi$ and commuting derivatives we get
\begin{align}
\sCurlDagger_{2,0} \chi ={}&- \tfrac{2}{3}\sDiv_{2,2} \sTwist_{1,1} I\xi
 + \tfrac{1}{2}\sTwist_{0,0} \sDiv_{1,1} I\xi 
=- \tfrac{1}{2}\sDiv_{2,2} \gamma
 + \tfrac{1}{2}\sDiv_{2,2} \bar{\gamma}.
\end{align}
This together with \eqref{eq:Divgamma} gives \eqref{eq:CurlDgchi}.
Applying $\mathcal{P}^{3/2}_{3,1} \sCurlDagger_{4,0}$ on \eqref{eq:PsiTopsi} and using the vacuum Bianchi identity \eqref{eq:VacuumBianchiCovariant}, we get
\begin{align}
0={}&\mathcal{P}^{3/2}_{3,1} \sCurlDagger_{4,0} \Psi
=\mathcal{P}^{3/2}_{3,1} \sCurlDagger_{4,0} \psi
 + \tfrac{3}{2}\mathcal{P}^{3/2}_{3,1} \sCurlDagger_{4,0} (\kappa_{1}^{-2}\kappa \underset{2,0}{\overset{0,0}{\odot}}\kappa \underset{0,0}{\overset{0,0}{\odot}}\Psi_{2}).
\end{align}
Equation \eqref{eq:CurlDgpsi} then follows from the Leibniz rules.
\end{proof}

The non-small variables $\kappa_{1}$, $\xi$ and $\Psi_{2}$ appears as coefficients in our evolution system as well as in all operators that can be written covariantly in terms of the Killing spinor on Kerr. See for instance \cite{2016arXiv160106084A} and \cite{Aksteiner:2016mol} for expressions of the TME and TSI operators, as well as symmetry operators. These covariant expressions can be used to define operators also on the perturbed spacetime. However, commutators between such operators, will not exactly satisfy the same identities as in Kerr. To be able to calculate such commutators, one would need a complete set of equations for all derivatives of the non-small variables. The following lemma provides that.
\begin{lemma}\label{lem:DerNonsmallCovariant}
Under the evolution \eqref{eq:kappaPropagation}, the non-small variables $\kappa_{1}$, $\xi$, $\Psi_{2}$ satisfies the equations
\begin{subequations}
\begin{align}
\sTwist_{0,0} \kappa_{1}={}&- \tfrac{1}{2} \kappa_{1}^{-1}\kappa \underset{3,1}{\overset{2,0}{\odot}}H
 -  \tfrac{1}{3} \kappa_{1}^{-1}\kappa \underset{1,1}{\overset{1,0}{\odot}}\xi, \label{eq:Twistkappa1}\\
\sDiv_{1,1} \xi ={}&0,\label{eq:Divxi}\\
\sCurl_{1,1} \xi ={}&- \tfrac{3}{4}\upsilon
 + \tfrac{3}{2}\kappa \underset{4,0}{\overset{2,0}{\odot}}\psi
 - 3 \Psi_{2}\kappa ,\label{eq:Curlxi}\\
\sCurlDagger_{1,1} \xi ={}&-3\bar{\chi}
 -  \tfrac{3}{4}\bar{\upsilon}
 + \tfrac{3}{2}\bar{\kappa}\underset{0,4}{\overset{0,2}{\odot}}\bar{\psi}
 - 3 \bar\Psi_{2}\bar{\kappa},\label{eq:CurlDgxi}\\
\sTwist_{1,1} \xi ={}&\tfrac{3}{2}\gamma ,\label{eq:Twistxi}\\
\sTwist_{0,0} \Psi_{2}={}&\kappa_{1}^{-1}\Upsilon
 + \tfrac{3}{2} \Psi_{2} \kappa_{1}^{-2}\kappa \underset{3,1}{\overset{2,0}{\odot}}H
 + \Psi_{2} \kappa_{1}^{-2}\kappa \underset{1,1}{\overset{1,0}{\odot}}\xi .
 \label{eq:TwistPsi2}
\end{align}
\end{subequations}
\end{lemma}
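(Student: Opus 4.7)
The plan is to derive each identity by applying one of the four fundamental operators to the covariant definition of the corresponding non-small variable, commuting derivatives by means of \cite[Lem~9]{Aksteiner:2022fmf} together with the Leibniz rules of \cite[Lem~10]{Aksteiner:2022fmf}, substituting the propagation equation \eqref{eq:kappaPropagation} wherever $\sDiv_{3,1}\sTwist_{2,0}\kappa$ appears, and splitting $\Psi$ via \eqref{eq:PsiTopsi} so that contractions of $\Psi$ with $\kappa$ decompose into a $\kappa\odot\psi$ term and a type D piece proportional to $\Psi_2\kappa$. The first equation \eqref{eq:Twistkappa1} has already been derived as \eqref{eq:Twistkappa1Eq1} inside the proof of the previous lemma by differentiating the covariant characterization \eqref{eq:kappa1Covariant} of $\kappa_1^2$ and reading off $\sTwist_{0,0}\kappa_1$ in terms of $H$ and $\xi$; likewise equation \eqref{eq:TwistPsi2} is already recorded as \eqref{eq:TwistPsi2Eq1}, being a rearrangement of the definition of $\Upsilon$ using \eqref{eq:Twistkappa1}. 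Both are simply collected here.

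For the remaining four identities one starts from $\xi = \sCurlDagger_{2,0}\kappa$. The divergence identity \eqref{eq:Divxi} is immediate: commuting $\sDiv_{1,1}\sCurlDagger_{2,0}$ on $\kappa$ produces a box on $\kappa$ contracted against $\epsilon^{AB}$ plus a symmetric Weyl contraction, and both vanish in vacuum by the symmetry of $\kappa$ together with the total symmetry of $\Psi$. For the curl identity \eqref{eq:Curlxi}, commuting $\sCurl_{1,1}\sCurlDagger_{2,0}$ on $\kappa$ produces the term $-\tfrac{3}{4}\sDiv_{3,1}\sTwist_{2,0}\kappa = -\tfrac{3}{4}\upsilon$ together with a Weyl-curvature contraction of the form $\tfrac{3}{2}\kappa\odot\Psi$; using \eqref{eq:PsiTopsi} and simplifying the resulting multiple $\kappa$ contractions with $\Psi_2$ then delivers $\tfrac{3}{2}\kappa\odot\psi - 3\Psi_2\kappa$. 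For the twist identity \eqref{eq:Twistxi}, commuting $\sTwist_{1,1}\sCurlDagger_{2,0}$ against $\sCurlDagger_{3,1}\sTwist_{2,0}$ on $\kappa$ identifies $\sTwist_{1,1}\xi$ with $\tfrac{3}{2}\sCurlDagger_{3,1}H = \tfrac{3}{2}\gamma$, with the curvature contributions cancelling in this specific combination.

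The conjugate curl identity \eqref{eq:CurlDgxi} then requires no new commutator calculation. Writing $\xi = 2I\xi + \bar\xi$ via Definition~\ref{def:HIxi} gives
\begin{align*}
\sCurlDagger_{1,1}\xi = 2\sCurlDagger_{1,1}I\xi + \sCurlDagger_{1,1}\bar\xi.
\end{align*}
Since $\overline{I\xi} = -I\xi$ and complex conjugation intertwines $\sCurl$ with $\sCurlDagger$ under swapping of primed and unprimed indices, one has $\sCurlDagger_{1,1}I\xi = -\overline{\sCurl_{1,1}I\xi} = -\tfrac{3}{2}\bar\chi$, and the second summand equals the complex conjugate of the right hand side of \eqref{eq:Curlxi}. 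The main obstacle throughout is purely arithmetic: carefully tracking the numerical coefficients and the contraction indices $(i,j)$ in the symmetric products $\odot$ so that the type D reduction from \eqref{eq:PsiTopsi} produces precisely the coefficient $-3\Psi_2\kappa$ in \eqref{eq:Curlxi} and \eqref{eq:CurlDgxi}, and so that the Weyl-curvature terms cancel exactly in \eqref{eq:Divxi} and \eqref{eq:Twistxi}. Once this bookkeeping is settled, each of the remaining derivations is a short application of Lemmas~9 and~10 of \cite{Aksteiner:2022fmf}.
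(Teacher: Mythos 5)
Your proposal is correct and matches the paper's own argument step for step: equations \eqref{eq:Twistkappa1} and \eqref{eq:TwistPsi2} are read off from \eqref{eq:Twistkappa1Eq1} and \eqref{eq:TwistPsi2Eq1}; \eqref{eq:Divxi}, \eqref{eq:Curlxi} and \eqref{eq:Twistxi} come from commuting the fundamental operators on $\xi=\sCurlDagger_{2,0}\kappa$ with the propagation equation and the split \eqref{eq:PsiTopsi} substituted; and \eqref{eq:CurlDgxi} follows from the decomposition of $\xi$ into $I\xi$ and $\bar\xi$ together with the complex conjugate of \eqref{eq:Curlxi} and the definition of $\chi$. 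Your presentation of the last step solves directly for $\sCurlDagger_{1,1}\xi$ while the paper organises the same identity around $\bar\chi$, but the computation is identical.
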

\begin{proof}
\eqref{eq:Twistkappa1} is just \eqref{eq:Twistkappa1Eq1}. \eqref{eq:Divxi} is a direct consequence of a commutator. 
Another commutator gives 
\begin{align}
\sCurl_{1,1} \xi ={}&
\sCurl_{1,1} \sCurlDagger_{2,0} \kappa 
=- \tfrac{3}{4}\upsilon
 + \tfrac{3}{2}\Psi \underset{2,0}{\overset{2,0}{\odot}}\kappa .
\end{align}
The relation \eqref{eq:PsiTopsi} then gives \eqref{eq:Curlxi}.
The definition of $\chi$ together with the complex conjugate of \eqref{eq:Curlxi} gives
\begin{align}
\bar{\chi}={}&- \tfrac{2}{3}\sCurlDagger_{1,1} I\xi
=- \tfrac{1}{3}\sCurlDagger_{1,1} \xi
 + \tfrac{1}{3}\sCurlDagger_{1,1} \bar{\xi}
=- \tfrac{1}{3}\sCurlDagger_{1,1} \xi
 -  \tfrac{1}{4}\bar{\upsilon}
 + \tfrac{1}{2}\bar{\kappa}\underset{0,4}{\overset{0,2}{\odot}}\bar{\psi}
 -  \bar\Psi_{2}\bar{\kappa}.
\end{align}
This is equivalent to \eqref{eq:CurlDgxi}. Equation \eqref{eq:Twistxi} follows by commuting derivatives, while \eqref{eq:TwistPsi2} is the same as \eqref{eq:TwistPsi2Eq1}.
\end{proof}
For some arguments, we don't need the full set of equations for the non-small variables. As an alternative we extract an evolution system from it. 
Observe that \eqref{eq:UpsilonToH} and \eqref{eq:TwistPsi2} together gives
\begin{align}
\sTwist_{0,0} \Psi_{2}={}&
 -  \tfrac{1}{2} \kappa_{1}^{-2}\kappa \underset{1,1}{\overset{1,0}{\odot}}\kappa \underset{3,1}{\overset{2,0}{\odot}}\sCurlDagger_{4,0} \psi
- \tfrac{3}{4} \Psi_{2} \kappa_{1}^{-2}\kappa \underset{3,1}{\overset{2,0}{\odot}}H
 + \Psi_{2} \kappa_{1}^{-2}\kappa \underset{1,1}{\overset{1,0}{\odot}}\xi .
\label{eq:TwistPsi2Alt}
\end{align}
Note that \eqref{eq:TwistPsi2Alt} and \eqref{eq:CurlDgpsi} together are equivalent to the full vacuum Bianchi system \eqref{eq:VacuumBianchiCovariant}.
\begin{cor}\label{cor:TotalEvolCovariant}
A total evolution system for the variables $\kappa_1$, $\xi$, $\Psi_{2}$, $H$, $I\xi$, $\chi$ and $\psi$, is given by \eqref{eq:Twistkappa1}, \eqref{eq:Divxi}, \eqref{eq:Curlxi}, \eqref{eq:TwistPsi2Alt} and lemma~\ref{lem:SmallVarPropCovariant}.
\end{cor}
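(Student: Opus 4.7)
My approach is assembly and closure verification. Most of the equations listed in the corollary have already been established: those for the small variables $H$, $I\xi$, $\chi$, $\psi$ come directly from Lemma~\ref{lem:SmallVarPropCovariant}, while \eqref{eq:Twistkappa1}, \eqref{eq:Divxi}, and \eqref{eq:Curlxi} for the non-small variables $\kappa_1$ and $\xi$ come from Lemma~\ref{lem:DerNonsmallCovariant}. So the proof really reduces to two tasks: producing \eqref{eq:TwistPsi2Alt} and verifying that the resulting collection of equations is genuinely closed in the seven listed variables.

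For the first task, I would substitute the expression \eqref{eq:UpsilonToH} for $\Upsilon$ into \eqref{eq:TwistPsi2}, collecting the two $\Psi_{2}\kappa\odot H$ contributions and the $\sCurlDagger_{4,0}\psi$ term. The net result is exactly \eqref{eq:TwistPsi2Alt}, with the whole purpose of the manipulation being to eliminate $\Upsilon$ from the right-hand side so that no variable outside the seven appears. This is the only genuine calculation required and is essentially stated in the text immediately preceding the corollary.

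For the second task, closure, I would simply run through each right-hand side and check that it is built only from the seven variables $\kappa_1$, $\xi$, $\Psi_2$, $H$, $I\xi$, $\chi$, $\psi$, the algebraic data $\kappa$ (determined by $\kappa_1$ and the principal Killing spinor dyad), and the prescribed gauge source $\upsilon$. Crucially, the auxiliary quantities $\Upsilon$, $\gamma$, $\eta$ introduced earlier do not enter, which is exactly the motivation for switching from \eqref{eq:TwistPsi2} to \eqref{eq:TwistPsi2Alt}.

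The step I expect to be the main obstacle -- more conceptual than calculational -- is justifying that omitting the remaining equations of Lemma~\ref{lem:DerNonsmallCovariant}, namely \eqref{eq:CurlDgxi} and \eqref{eq:Twistxi}, does not deprive the system of essential information. The argument I have in mind is that $\sCurlDagger_{1,1}\xi$ is recoverable from the complex conjugate of \eqref{eq:Curlxi} together with the independently-tracked $I\xi$ (using $\overline{I\xi}=-I\xi$ and \eqref{eq:CurlIxi}), while $\sTwist_{1,1}\xi=\tfrac{3}{2}\gamma$ merely defines the auxiliary $\gamma=\sCurlDagger_{3,1} H$ rather than constraining $\xi$. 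Hence the listed system genuinely evolves all seven variables without leaving any hidden dependence on quantities outside the list, and the precise hyperbolicity claim can be deferred to the space-spinor analysis in later sections.
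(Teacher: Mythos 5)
Your proposal matches the paper's (implicit) reasoning: the corollary follows from combining \eqref{eq:UpsilonToH} with \eqref{eq:TwistPsi2} to obtain \eqref{eq:TwistPsi2Alt}, which eliminates $\Upsilon$ so the collected equations close on the listed seven variables, while well-posedness is explicitly deferred to the GHP and space-spinor sections. Your observation about why \eqref{eq:CurlDgxi} and \eqref{eq:Twistxi} may be dropped is correct and is exactly what the paper does implicitly by retaining only \eqref{eq:Divxi}--\eqref{eq:Curlxi} in Lemma~\ref{lem:NonSmallPropGHP}; the pair $\sDiv_{1,1}\xi$, $\sCurl_{1,1}\xi$ has precisely the component count and principal part needed to recover $\nabla_\tau\xi$ in the space-spinor split, whereas $\sCurlDagger_{1,1}\xi$ and $\sTwist_{1,1}\xi$ would add constraint or auxiliary information only.
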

We will see in the next section that this indeed is a well posed evolution system.

\section{GHP formulation}
In this section we present GHP formulations in the class of principal Killing spinor dyads of the equations of the previous section.
For convenience, we define the weighted scalars
\begin{subequations}
\begin{align}
K_{\rho}{}={}&\frac{3 H_{10'}}{2 \kappa_{1}},&
K_{\rho '}{}={}&\frac{3 H_{21'}}{2 \kappa_{1}},&
K_{\tau}{}={}&\frac{3 H_{11'}}{2 \kappa_{1}},&
K_{\tau '}{}={}&\frac{3 H_{20'}}{2 \kappa_{1}},\\
I_{\rho}{}={}&\tfrac{2}{3} I\xi_{00'},&
I_{\tau}{}={}&\tfrac{2}{3} I\xi_{01'},&
I_{\tau '}{}={}&\tfrac{2}{3} I\xi_{10'},&
I_{\rho '}{}={}&\tfrac{2}{3} I\xi_{11'}.
\end{align}
\end{subequations}
The GHP prime operation matches the notation. Observe also that $(\kappa_{1})'=-\kappa_{1}$. Under complex conjugation we have
\begin{align}
\overline{I_{\rho}{}}={}&- I_{\rho}{},&
\overline{I_{\tau}{}}={}&- I_{\tau '}{},&
\overline{I_{\tau '}{}}={}&- I_{\tau}{},&
\overline{I_{\rho '}{}}={}&- I_{\rho '}{}.
\end{align}
The spin coefficients can be expressed in terms of $H$ and $\xi$.
\begin{subequations}
\begin{align}
\kappa ={}&\frac{H_{00'}}{2 \kappa_{1}{}},&
\sigma ={}&\frac{H_{01'}}{2 \kappa_{1}{}},&
\sigma '={}&\frac{H_{30'}}{2 \kappa_{1}{}},&
\kappa '={}&\frac{H_{31'}}{2 \kappa_{1}{}},\\
\rho ={}&\tfrac{1}{3} K_{\rho}{}
 + \frac{\xi_{00'}}{3 \kappa_{1}{}},&
\rho '={}&\tfrac{1}{3} K_{\rho '}{}
 -  \frac{\xi_{11'}}{3 \kappa_{1}{}},&
\tau ={}&\tfrac{1}{3} K_{\tau}{}
 + \frac{\xi_{01'}}{3 \kappa_{1}{}},&
\tau '={}&\tfrac{1}{3} K_{\tau '}{}
 -  \frac{\xi_{10'}}{3 \kappa_{1}{}}. \label{eq:xiGHP}
\end{align}
\end{subequations}
Observe that $\kappa$, $\kappa'$,  $\sigma$ and $\sigma'$ are small variables, while $\rho$, $\rho'$, $\tau$ and $\tau'$ are non-small variables.

We can also express the components of $I\xi$
\begin{subequations}
\begin{align}  
I\xi_{00'}={}&- \tfrac{1}{2} K_{\rho}{} \kappa_{1}
 + \tfrac{1}{2} \overline{K_{\rho}{}} \bar{\kappa}_{1'}
 + \tfrac{3}{2} \kappa_{1} \rho
 -  \tfrac{3}{2} \bar{\kappa}_{1'} \bar{\rho},\\
I\xi_{01'}={}&- \tfrac{1}{2} K_{\tau}{} \kappa_{1}
 -  \tfrac{1}{2} \overline{K_{\tau '}{}} \bar{\kappa}_{1'}
 + \tfrac{3}{2} \kappa_{1} \tau
 + \tfrac{3}{2} \bar{\kappa}_{1'} \bar{\tau}',\\
I\xi_{10'}={}&\tfrac{1}{2} K_{\tau '}{} \kappa_{1}
 + \tfrac{1}{2} \overline{K_{\tau}{}} \bar{\kappa}_{1'}
 -  \tfrac{3}{2} \bar{\kappa}_{1'} \bar{\tau}
 -  \tfrac{3}{2} \kappa_{1} \tau ',\\
I\xi_{11'}={}&\tfrac{1}{2} K_{\rho '}{} \kappa_{1}
 -  \tfrac{1}{2} \overline{K_{\rho '}{}} \bar{\kappa}_{1'}
 -  \tfrac{3}{2} \kappa_{1} \rho '
 + \tfrac{3}{2} \bar{\kappa}_{1'} \bar{\rho}'.
\end{align}
\end{subequations}
The components of $\psi$ are just the Weyl curvature components except the non-small variable $\Psi_2$.
\begin{align}
\psi_{0}={}&\Psi_{0},&
\psi_{1}={}&\Psi_{1},&
\psi_{2}={}&0,&
\psi_{3}={}&\Psi_{3},&
\psi_{4}={}&\Psi_{4}.
\end{align}
This also explains Remark~\ref{rem:dyad}. In the Petrov Type D case, $H=0$, so $\psi=0$ which implies $\Psi_{0}=\Psi_{1}=\Psi_{3}=\Psi_{4}=0$, i.e. the dyad is a principal dyad of the Weyl spinor.

\begin{definition}
Define the following set of small variables
\begin{align}
\goodset={}&\{K_{\rho }{}, K_{\tau }{}, K_{\rho '}{}, K_{\tau '}{}, \kappa, \sigma, \kappa ', \sigma ', I_{\rho }{}, I_{\tau }{}, I_{\rho '}{}, I_{\tau '}{},\Psi_{0},\Psi_{1},\Psi_{3},\Psi_{4},\chi _{0},\chi _{1},\chi _{2}\}
\end{align}
and the set of non-small variables
\begin{align}
\badset={}&\{\kappa _{1}{}, \Psi_{2}, \rho, \rho ', \tau, \tau '\}.
\end{align}
\end{definition}
The components of $\gamma$, $\eta$ and $\Upsilon$ are also small variables, but we can express them as first order derivatives of the $\goodset$ variables. See Appendix~\ref{sec:Appendix} for explicit expressions. The components of $\xi$ are also non-small variables, but they can be expressed in terms of the $\badset$ variables and the first 4 variables in $\goodset$ via  the relation \eqref{eq:xiGHP}.

We can now directly translate all the covariant equations in the previous section to GHP form.
In GHP form the equation~\eqref{eq:DivH} reads
\begin{subequations}
\begin{align}
\frac{3 \upsilon_{0}}{2 \kappa_{1}}={}&(\tho {} - 4 \rho -  \bar{\rho})K_{\tau}{}
 + 3 (\thop {} - 2 \rho ' -  \bar{\rho}')\kappa
 -  (\edt {} - 4 \tau -  \bar{\tau}')K_{\rho}{}
 - 3 (\edtp {} -  \bar{\tau} - 2 \tau ')\sigma
 + 5 K_{\rho '}{} \kappa\nonumber\\
& - 5 K_{\tau '}{} \sigma ,\\
\frac{3 \upsilon_{1}}{2 \kappa_{1}}={}&(\tho {} - 3 \rho -  \bar{\rho})K_{\rho '}{}
 + (\thop {} - 3 \rho ' -  \bar{\rho}')K_{\rho}{}
 -  (\edt {} - 3 \tau -  \bar{\tau}')K_{\tau '}{}
 -  (\edtp {} -  \bar{\tau} - 3 \tau ')K_{\tau}{}
 + 2 K_{\rho}{} K_{\rho '}{}\nonumber\\
& - 2 K_{\tau}{} K_{\tau '}{}
 + 6 \kappa \kappa '
 - 6 \sigma \sigma ',\\
\frac{3 \upsilon_{2}}{2 \kappa_{1}}={}&3 (\tho {} - 2 \rho -  \bar{\rho})\kappa '
 + (\thop {} - 4 \rho ' -  \bar{\rho}')K_{\tau '}{}
 - 3 (\edt {} - 2 \tau -  \bar{\tau}')\sigma '
 -  (\edtp {} -  \bar{\tau} - 4 \tau ')K_{\rho '}{}
 + 5 K_{\rho}{} \kappa '\nonumber\\
& - 5 K_{\tau}{} \sigma '.
\end{align}
\end{subequations}

We will now analyse the GHP form of the evolution systems in Lemma~\ref{lem:SmallVarPropCovariant} and Corollary~\ref{cor:TotalEvolCovariant}.
\begin{lemma}\label{lem:SmallVarPropGHP}
The GHP form of Lemma~\ref{lem:SmallVarPropCovariant} is
\begin{subequations}
\begin{align}
(\thop {} -  \rho ' -  \bar{\rho}')K_{\rho}{}={}&(\edtp {} -  \bar{\tau} -  \tau ')K_{\tau}{}
 -  K_{\rho}{} K_{\rho '}{}
 + K_{\tau}{} K_{\tau '}{}
 + \frac{3 \upsilon_{1}}{4 \kappa_{1}}
 - 3 \kappa \kappa '
 + 2 K_{\rho '}{} \rho
 + 3 \sigma \sigma '
 - 2 K_{\tau '}{} \tau , \label{eq:GHPPropagation1a}\\
(\tho {} -  \rho -  \bar{\rho})K_{\tau}{}={}&(\edt {} -  \tau -  \bar{\tau}')K_{\rho}{}
 + \tfrac{3}{2} \Psi_{1}
 + \frac{3 \upsilon_{0}}{8 \kappa_{1}}
 - 2 K_{\rho '}{} \kappa
 + 3 \kappa \rho '
 + 2 K_{\tau '}{} \sigma
 - 3 \sigma \tau ',\\
(\tho {} -  \rho -  \bar{\rho})K_{\rho '}{}={}&(\edt {} -  \tau -  \bar{\tau}')K_{\tau '}{}
 -  K_{\rho}{} K_{\rho '}{}
 + K_{\tau}{} K_{\tau '}{}
 + \frac{3 \upsilon_{1}}{4 \kappa_{1}}
 - 3 \kappa \kappa '
 + 2 K_{\rho}{} \rho '
 + 3 \sigma \sigma '
 - 2 K_{\tau}{} \tau ',\\
(\thop {} -  \rho ' -  \bar{\rho}')K_{\tau '}{}={}&(\edtp {} -  \bar{\tau} -  \tau ')K_{\rho '}{}
 + \tfrac{3}{2} \Psi_{3}
 + \frac{3 \upsilon_{2}}{8 \kappa_{1}}
 - 2 K_{\rho}{} \kappa '
 + 3 \kappa ' \rho
 + 2 K_{\tau}{} \sigma '
 - 3 \sigma ' \tau ,\label{eq:GHPPropagation1d}\\
(\thop {} -  \rho ' -  \bar{\rho}')\kappa ={}&(\edtp {} -  \bar{\tau} -  \tau ')\sigma
 -  \tfrac{1}{2} \Psi_{1}
 + \frac{3 \upsilon_{0}}{8 \kappa_{1}}
 -  K_{\rho '}{} \kappa
 + K_{\tau}{} \rho
 + K_{\tau '}{} \sigma
 -  K_{\rho}{} \tau ,\\
(\tho {} -  \rho -  \bar{\rho})\sigma ={}&(\edt {} -  \tau -  \bar{\tau}')\kappa
 + \Psi_{0},\\
(\tho {} -  \rho -  \bar{\rho})\kappa '={}&(\edt {} -  \tau -  \bar{\tau}')\sigma '
 -  \tfrac{1}{2} \Psi_{3}
 + \frac{3 \upsilon_{2}}{8 \kappa_{1}}
 -  K_{\rho}{} \kappa '
 + K_{\tau '}{} \rho '
 + K_{\tau}{} \sigma '
 -  K_{\rho '}{} \tau ',\\
(\thop {} -  \rho ' -  \bar{\rho}')\sigma '={}&(\edtp {} -  \bar{\tau} -  \tau ')\kappa '
 + \Psi_{4},\\
(\thop {} -  \bar{\rho}')I_{\rho}{}={}&(\edtp {} -  \bar{\tau})I_{\tau}{}
 + I_{\rho '}{} \rho
 -  I_{\tau '}{} \tau
 + \chi_{1},\\
(\tho {} -  \bar{\rho})I_{\tau}{}={}&(\edt {} -  \bar{\tau}')I_{\rho}{}
 -  I_{\rho '}{} \kappa
 + I_{\tau '}{} \sigma
 -  \chi_{0},\\
(\tho {} -  \bar{\rho})I_{\rho '}{}={}&(\edt {} -  \bar{\tau}')I_{\tau '}{}
 + I_{\rho}{} \rho '
 -  I_{\tau}{} \tau '
 -  \chi_{1},\\
(\thop {} -  \bar{\rho}')I_{\tau '}{}={}&(\edtp {} -  \bar{\tau})I_{\rho '}{}
 -  I_{\rho}{} \kappa '
 + I_{\tau}{} \sigma '
 + \chi_{2},\label{eq:GHPPropagation1l}\\
(\tho {} - 4 \rho)\Psi_{1}={}&(\edtp {} -  \tau ')\Psi_{0}
 - 3 \Psi_{2} \kappa ,\label{eq:GHPPropagation1m}\\
(\thop {} -  \rho ')\Psi_{0}={}&(\edt {} - 4 \tau)\Psi_{1}
 + 3 \Psi_{2} \sigma ,\\
(\tho {} -  \rho)\Psi_{4}={}&(\edtp {} - 4 \tau ')\Psi_{3}
 + 3 \Psi_{2} \sigma ',\\
(\thop {} - 4 \rho ')\Psi_{3}={}&(\edt {} -  \tau)\Psi_{4}
 - 3 \Psi_{2} \kappa ', \label{eq:GHPPropagation1p}\\
(\tho {} - 2 \rho)\chi_{1}={}&
  (\edtp {} -  \tau ')\chi_{0}
- \tfrac{1}{4} (\tho {} - 2 \rho)\upsilon_{1}
 + \tfrac{1}{4} (\tho {} - 2 \bar{\rho})\bar{\upsilon}_{1'}
 -  \tfrac{1}{4} (\edt {} -  \bar{\tau}')\bar{\upsilon}_{0'}
 + \tfrac{1}{4} (\edtp {} -  \tau ')\upsilon_{0}
 \nonumber\\
& + K_{\tau '}{} \Psi_{1} \kappa_{1}
 -  K_{\rho}{} \Psi_{2} \kappa_{1}
 -  \overline{K_{\tau '}{}} \bar\Psi_{1} \bar{\kappa}_{1'}
 + \overline{K_{\rho}{}} \bar\Psi_{2} \bar{\kappa}_{1'}
 -  \tfrac{1}{4} \upsilon_{2} \kappa
 + \Psi_{3} \kappa_{1} \kappa
 + \tfrac{1}{4} \bar{\upsilon}_{2'} \bar{\kappa}\nonumber\\
& -  \bar\Psi_{3} \bar{\kappa}_{1'} \bar{\kappa}
 -  \Psi_{0} \kappa_{1} \sigma '
 + \bar\Psi_{0} \bar{\kappa}_{1'} \bar{\sigma}'
 -  \kappa \chi_{2},\label{eq:GHPPropagation1q}\\
(\thop {} -  \rho ')\chi_{0}={}&
  (\edt {} - 2 \tau)\chi_{1}
 - \tfrac{1}{4} (\tho {} -  \bar{\rho})\bar{\upsilon}_{2'}
 -  \tfrac{1}{4} (\thop {} -  \rho ')\upsilon_{0}
 + \tfrac{1}{4} (\edt {} - 2 \tau)\upsilon_{1}
 + \tfrac{1}{4} (\edt {} - 2 \bar{\tau}')\bar{\upsilon}_{1'}\nonumber\\
& -  K_{\rho '}{} \Psi_{1} \kappa_{1}
 + K_{\tau}{} \Psi_{2} \kappa_{1}
 + \overline{K_{\tau '}{}} \bar\Psi_{2} \bar{\kappa}_{1'}
 -  \overline{K_{\rho}{}} \bar\Psi_{3} \bar{\kappa}_{1'}
 + \bar\Psi_{4} \bar{\kappa}_{1'} \bar{\kappa}
 + \Psi_{0} \kappa_{1} \kappa '
 + \tfrac{1}{4} \upsilon_{2} \sigma\nonumber\\
& -  \Psi_{3} \kappa_{1} \sigma
 + \tfrac{1}{4} \bar{\upsilon}_{0'} \bar{\sigma}'
 -  \bar\Psi_{1} \bar{\kappa}_{1'} \bar{\sigma}'
 + \sigma \chi_{2},\\
(\tho {} -  \rho)\chi_{2}={}&
  (\edtp {} - 2 \tau ')\chi_{1}
 - \tfrac{1}{4} (\tho {} -  \rho)\upsilon_{2}
 -  \tfrac{1}{4} (\thop {} -  \bar{\rho}')\bar{\upsilon}_{0'}
 + \tfrac{1}{4} (\edtp {} - 2 \bar{\tau})\bar{\upsilon}_{1'}
 + \tfrac{1}{4} (\edtp {} - 2 \tau ')\upsilon_{1}
 \nonumber\\
& + K_{\tau '}{} \Psi_{2} \kappa_{1}
 -  K_{\rho}{} \Psi_{3} \kappa_{1}
 -  \overline{K_{\rho '}{}} \bar\Psi_{1} \bar{\kappa}_{1'}
 + \overline{K_{\tau}{}} \bar\Psi_{2} \bar{\kappa}_{1'}
 + \Psi_{4} \kappa_{1} \kappa
 + \bar\Psi_{0} \bar{\kappa}_{1'} \bar{\kappa}'
 + \tfrac{1}{4} \bar{\upsilon}_{2'} \bar{\sigma}\nonumber\\
& -  \bar\Psi_{3} \bar{\kappa}_{1'} \bar{\sigma}
 + \tfrac{1}{4} \upsilon_{0} \sigma '
 -  \Psi_{1} \kappa_{1} \sigma '
 + \sigma ' \chi_{0},\\
(\thop {} - 2 \rho ')\chi_{1}={}&
  (\edt {} -  \tau)\chi_{2}
- \tfrac{1}{4} (\thop {} - 2 \rho ')\upsilon_{1}
 + \tfrac{1}{4} (\thop {} - 2 \bar{\rho}')\bar{\upsilon}_{1'}
 + \tfrac{1}{4} (\edt {} -  \tau)\upsilon_{2}
 -  \tfrac{1}{4} (\edtp {} -  \bar{\tau})\bar{\upsilon}_{2'}\nonumber\\
& -  K_{\rho '}{} \Psi_{2} \kappa_{1}
 + K_{\tau}{} \Psi_{3} \kappa_{1}
 + \overline{K_{\rho '}{}} \bar\Psi_{2} \bar{\kappa}_{1'}
 -  \overline{K_{\tau}{}} \bar\Psi_{3} \bar{\kappa}_{1'}
 -  \tfrac{1}{4} \upsilon_{0} \kappa '
 + \Psi_{1} \kappa_{1} \kappa '
 + \tfrac{1}{4} \bar{\upsilon}_{0'} \bar{\kappa}'\nonumber\\
& -  \bar\Psi_{1} \bar{\kappa}_{1'} \bar{\kappa}'
 -  \Psi_{4} \kappa_{1} \sigma
 + \bar\Psi_{4} \bar{\kappa}_{1'} \bar{\sigma}
 -  \kappa ' \chi_{0}. \label{eq:GHPPropagation1t}
\end{align}
\end{subequations}
\end{lemma}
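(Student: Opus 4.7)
The statement is a componentwise rewriting of the covariant system of Lemma~\ref{lem:SmallVarPropCovariant} in a principal Killing spinor dyad. The plan is therefore purely mechanical: for each covariant identity I would expand the left‑hand side and right‑hand side in GHP components, use the definitions of $K_\rho,K_\tau,K_{\rho'},K_{\tau'},I_\rho,I_\tau,I_{\rho'},I_{\tau'}$, the expressions for the spin coefficients $\kappa,\sigma,\kappa',\sigma',\rho,\rho',\tau,\tau'$ given above, and the identification $\psi_i=\Psi_i$ for $i\neq 2$, $\psi_2=0$. The essential input is the set of standard GHP expressions for the four fundamental operators $\sDiv_{k,l},\sCurl_{k,l},\sCurlDagger_{k,l},\sTwist_{k,l}$ acting on a symmetric spinor in a (not necessarily principal) dyad, together with the GHP Leibniz rule for scalar factors like $\kappa_1^{-1}$.

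\textbf{The $H$‑equations.} Since $H\in\mathcal S_{3,1}$ has eight independent components, the covariant equations \eqref{eq:DivH}–\eqref{eq:CurlH} combined give $3+5=8$ GHP relations. I would split $H$ using the projectors of Definition~\ref{def:SpinProj}: $\mathcal P^{3/2}_{3,1}H$ contains $H_{00'},H_{01'},H_{30'},H_{31'}$, i.e.\ $2\kappa_1$ times $\kappa,\sigma,\sigma',\kappa'$, while $\mathcal P^{1/2}_{3,1}H$ contains $H_{10'},H_{11'},H_{20'},H_{21'}$, i.e.\ $\tfrac23\kappa_1$ times $K_\rho,K_\tau,K_{\tau'},K_{\rho'}$. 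Writing out $(\sDiv_{3,1}H)_{AB'}$ and $(\sCurl_{3,1}H)_{ABCD'}$ in components and then substituting \eqref{eq:xiGHP} for $\rho,\rho',\tau,\tau'$ (so that all connection coefficients other than the eight small spin coefficients are absorbed either into $K$'s/$\xi$'s or produce the quadratic terms $K_\rho K_{\rho'}-K_\tau K_{\tau'}$, $3\kappa\kappa'-3\sigma\sigma'$, etc.), I get the first eight GHP propagation equations \eqref{eq:GHPPropagation1a}–\eqref{eq:GHPPropagation1d} and the companion equations for $\kappa,\sigma,\kappa',\sigma'$. The $\kappa_1^{-1}\upsilon$ and $\Psi_1,\Psi_3$ terms arise from the right‑hand sides of \eqref{eq:DivH} and \eqref{eq:CurlH} respectively, after the $\kappa\underset{4,0}{\overset{1,0}{\odot}}\psi$ contraction is evaluated in components.

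\textbf{The $I\xi$ and $\psi$ equations.} The relations \eqref{eq:DivIxi} and \eqref{eq:CurlIxi} for $I\xi\in\mathcal S_{1,1}$ translate into four component equations in the four scalars $I_\rho,I_\tau,I_{\tau'},I_{\rho'}$; these give lines \eqref{eq:GHPPropagation1e}-ish through \eqref{eq:GHPPropagation1l} directly, since $I\xi$ has type $\{0,0\}$ weights and the spin coefficients entering are all of non‑small type. For \eqref{eq:CurlDgpsi}, I would expand $\mathcal P^{3/2}_{3,1}\sCurlDagger_{4,0}\psi$ componentwise and combine with $-\tfrac32\Psi_2\kappa_1^{-2}\mathcal P^{3/2}_{3,1}\kappa\underset{3,1}{\overset{1,0}{\odot}}H$; the two spin weights $\pm 3/2$ yield the four Bianchi‑type equations \eqref{eq:GHPPropagation1m}–\eqref{eq:GHPPropagation1p}, where the $3\Psi_2\kappa$‑type terms come precisely from the $\kappa H$ piece after using $H_{00'}=2\kappa_1\kappa$ etc.

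\textbf{The $\chi$ equation and main obstacle.} The most laborious step is \eqref{eq:CurlDgchi}. Here $\chi\in\mathcal S_{2,0}$ has three components $\chi_0,\chi_1,\chi_2$, and $\sCurlDagger_{2,0}\chi\in\mathcal S_{1,1}$ has four GHP components. Each component mixes $\tho,\thop,\edt,\edtp$ applied to $\upsilon,\bar\upsilon,\Psi_i,\bar\Psi_i,\kappa,\sigma,\ldots$, and the right‑hand side of \eqref{eq:CurlDgchi} involves the symmetric products $\psi\underset{3,1}{\overset{3,0}{\odot}}H$ and $\kappa\underset{1,1}{\overset{1,0}{\odot}}\kappa\underset{3,1}{\overset{2,0}{\odot}}H$, which in components produce all the $K\Psi$ and $\bar K\bar\Psi$ terms visible in \eqref{eq:GHPPropagation1q}–\eqref{eq:GHPPropagation1t}. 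The principal bookkeeping difficulty is that moving $\kappa_1^{-1}$ through the GHP operators produces $\tho\kappa_1,\edt\kappa_1,\ldots$ contributions controlled by \eqref{eq:Twistkappa1}, and these in turn must be rewritten using the definitions of $K_\rho,\ldots$ and $\xi$; so I would systematically first compute $\kappa_1^{-1}\sCurlDagger_{2,0}\chi$ after multiplying both sides of \eqref{eq:CurlDgchi} by a suitable power of $\kappa_1$, precisely to avoid repeated application of the Leibniz rule. With this organisation the verification reduces to term matching with the help of the listed symbolic computer algebra packages; no genuine identity beyond the covariant ones of Lemma~\ref{lem:SmallVarPropCovariant} and the GHP dictionary is required.
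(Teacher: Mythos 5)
Your approach---a purely mechanical GHP component expansion of the covariant system of Lemma~\ref{lem:SmallVarPropCovariant} using the dyad dictionary for $H$, $I\xi$, the spin coefficients, and the identification $\psi_i=\Psi_i$ for $i\ne 2$, $\psi_2=0$---is exactly how the lemma is obtained, and the paper records no separate proof because, as noted in the introduction, the translation was carried out with the symbolic packages (xAct/SymSpin/SpinFrames). Two small slips worth correcting for clarity: the operators $\sDiv_{3,1}$ and $\sCurl_{3,1}$ map into $\mathcal{S}_{2,0}$ and $\mathcal{S}_{4,0}$, so the components should be written $(\sDiv_{3,1}H)_{AB}$ and $(\sCurl_{3,1}H)_{ABCD}$ rather than with surviving primed indices, and the $I\xi$ equations correspond to the four lines for $I_\rho,I_\tau,I_{\rho'},I_{\tau'}$ (the ninth through twelfth equations), not to lines starting at the fifth, since lines five through eight are the $\kappa,\sigma,\kappa',\sigma'$ pieces of the $H$ system.
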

\begin{lemma}\label{lem:NonSmallPropGHP}
The GHP form of the equations \eqref{eq:Twistkappa1}, \eqref{eq:Divxi}, \eqref{eq:Curlxi}, \eqref{eq:TwistPsi2Alt} are
\begin{subequations}
\begin{align}
(\tho {} + \rho)\kappa_{1}{}={}&K_{\rho}{} \kappa_{1}{},\label{eq:GHPPropagationkappa1a}\\
(\edt {} + \tau)\kappa_{1}{}={}&K_{\tau}{} \kappa_{1}{},\\
(\edtp {} + \tau ')\kappa_{1}{}={}&K_{\tau '}{} \kappa_{1}{},\\
(\thop {} + \rho ')\kappa_{1}{}={}&K_{\rho '}{} \kappa_{1}{},\label{eq:GHPPropagationkappa1d}\\
(\thop {} -  \bar{\rho}')\xi_{10'}={}&(\edtp {} -  \bar{\tau})\xi_{11'}
 -  \tfrac{3}{4} \upsilon_{2}
 - 3 \Psi_{3} \kappa_{1}{}
 -  \kappa ' \xi_{00'}
 + \xi_{01'} \sigma ',\label{eq:GHPPropagationxi10}\\
(\tho {} -  \bar{\rho})\xi_{11'}={}&(\edt {} -  \bar{\tau}')\xi_{10'}
 + \tfrac{3}{4} \upsilon_{1}
 + 3 \Psi_{2} \kappa_{1}{}
 + \xi_{00'} \rho '
 -  \xi_{01'} \tau ',\\
(\thop {} -  \bar{\rho}')\xi_{00'}={}&(\edtp {} -  \bar{\tau})\xi_{01'}
 -  \tfrac{3}{4} \upsilon_{1}
 - 3 \Psi_{2} \kappa_{1}{}
 + \xi_{11'} \rho
 -  \xi_{10'} \tau ,\\
(\tho {} -  \bar{\rho})\xi_{01'}={}&(\edt {} -  \bar{\tau}')\xi_{00'}
 + \tfrac{3}{4} \upsilon_{0}
 + 3 \Psi_{1} \kappa_{1}{}
 -  \kappa \xi_{11'}
 + \xi_{10'} \sigma ,\label{eq:GHPPropagationxi01}\\
(\tho {} - 3 \rho)\Psi_{2}={}&(\edtp {} - 2 \tau ')\Psi_{1}
 - 2 \Psi_{3} \kappa
 + \Psi_{0} \sigma ',\label{eq:GHPPropagationThoPsi2}\\
(\thop {} - 2 \rho ')\Psi_{1}={}&(\edt {} - 3 \tau)\Psi_{2}
 -  \Psi_{0} \kappa '
 + 2 \Psi_{3} \sigma ,\\
(\tho {} - 2 \rho)\Psi_{3}={}&(\edtp {} - 3 \tau ')\Psi_{2}
 -  \Psi_{4} \kappa
 + 2 \Psi_{1} \sigma ',\\
(\thop {} - 3 \rho ')\Psi_{2}={}&(\edt {} - 2 \tau)\Psi_{3}
 - 2 \Psi_{1} \kappa '
 + \Psi_{4} \sigma .\label{eq:GHPPropagationThopPsi2}
\end{align}
\end{subequations}
where the spin coefficients $\rho$, $\rho'$, $\tau$ and $\tau'$ are given by \eqref{eq:xiGHP}.
\end{lemma}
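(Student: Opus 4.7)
The proof is a direct component-wise transcription of the covariant equations \eqref{eq:Twistkappa1}, \eqref{eq:Divxi}, \eqref{eq:Curlxi} and \eqref{eq:TwistPsi2Alt} into GHP form using a principal Killing spinor dyad $(o^A,\iota^A)$. The key simplification is that $\kappa_{AB}=-2\kappa_{1} o_{(A}\iota_{B)}$ has only a single non-zero dyad component $\kappa_{1}$, so each symmetric product of the form $\kappa \underset{\cdot}{\overset{i,0}{\odot}}\phi$ simply extracts a specific middle-index block of $\phi$ multiplied by the appropriate power of $\kappa_{1}$, with no conceptual content beyond dyad bookkeeping.

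For the four scalar equations \eqref{eq:GHPPropagationkappa1a}--\eqref{eq:GHPPropagationkappa1d} I would compute the four dyad components of \eqref{eq:Twistkappa1}. The contraction $\kappa_{1}^{-1}\kappa \underset{3,1}{\overset{2,0}{\odot}}H$ produces the middle components $H_{10'}, H_{11'}, H_{20'}, H_{21'}$, and hence $K_{\rho}, K_{\tau}, K_{\tau'}, K_{\rho'}$ by definition; likewise $\kappa_{1}^{-1}\kappa \underset{1,1}{\overset{1,0}{\odot}}\xi$ produces the $\xi$-components which, combined with the $K$-terms via \eqref{eq:xiGHP}, assemble into the spin coefficients $\rho, \tau, \tau', \rho'$ appearing on the left-hand sides of the four equations.

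The $\xi$-equations \eqref{eq:GHPPropagationxi10}--\eqref{eq:GHPPropagationxi01} arise by simultaneously projecting \eqref{eq:Divxi} and \eqref{eq:Curlxi} onto the dyad: the one scalar equation from $\sDiv_{1,1}\xi=0$ and the three independent scalar components of $\sCurl_{1,1}\xi$ give exactly four equations for the four unknowns $\xi_{00'}, \xi_{01'}, \xi_{10'}, \xi_{11'}$, with the symmetric product $\kappa \underset{4,0}{\overset{2,0}{\odot}}\psi$ on the right-hand side supplying the $\Psi_{i}$ terms (using $\psi_{i}=\Psi_{i}$ for $i\neq 2$ and $\psi_{2}=0$). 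The four Bianchi-like equations \eqref{eq:GHPPropagationThoPsi2}--\eqref{eq:GHPPropagationThopPsi2} are most economically obtained via the observation already recorded between Lemma~\ref{lem:DerNonsmallCovariant} and Corollary~\ref{cor:TotalEvolCovariant}, that \eqref{eq:TwistPsi2Alt} together with \eqref{eq:CurlDgpsi} is equivalent to the full vacuum Bianchi identity $\sCurlDagger_{4,0}\Psi=0$; its four middle dyad components are then precisely the standard GHP Bianchi identities displayed in the lemma.

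The main obstacle is bookkeeping rather than substance. Applying $\sDiv, \sCurl, \sCurlDagger$ to the various symmetric products requires the Leibniz rule of \cite[Lem~10]{Aksteiner:2022fmf}, and each derivative of the dyad contributes spin coefficient corrections that in the end combine into the weighted GHP operators $(\tho - p\rho - q\bar{\rho})$, $(\edt - p\tau - q\bar{\tau}')$ etc.\ appearing on the left-hand sides. These are exactly the sort of calculations for which the xAct/SymSpin packages mentioned in the introduction were built; the only subtlety worth flagging is that $\rho, \rho', \tau, \tau'$ have been repackaged via \eqref{eq:xiGHP}, so the numerous $\kappa_{1}^{-1}\kappa\odot\xi$ contractions must be rewritten using that relation before direct comparison with the stated GHP equations.
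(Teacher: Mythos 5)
The paper offers no explicit proof of this lemma---it is stated as a direct GHP transcription, evidently produced by the \emph{xAct}/\emph{SymSpin} machinery cited in the introduction---and your proposal correctly identifies everything that such a transcription entails: expanding the covariant equations \eqref{eq:Twistkappa1}, \eqref{eq:Divxi}, \eqref{eq:Curlxi}, \eqref{eq:TwistPsi2Alt} in a principal Killing spinor dyad, exploiting that $\kappa_{AB}=-2\kappa_1 o_{(A}\iota_{B)}$ reduces each symmetric product with $\kappa$ to an index-shifting operation, counting components ($4+1+3+4=12$) and then repackaging the $\xi$-components through \eqref{eq:xiGHP} so that the spin coefficients $\rho,\rho',\tau,\tau'$ appear on the left-hand sides. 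Your shortcut for \eqref{eq:GHPPropagationThoPsi2}--\eqref{eq:GHPPropagationThopPsi2}---reading them off as the four ``middle'' dyad components of $\sCurlDagger_{4,0}\Psi=0$, with the ``outer'' four being \eqref{eq:GHPPropagation1m}--\eqref{eq:GHPPropagation1p} coming from \eqref{eq:CurlDgpsi}, using the equivalence stated just before Corollary~\ref{cor:TotalEvolCovariant}---is consistent with the projection-operator structure ($\mathcal P^{3/2}_{3,1}$ picks out the outer components, as the displayed $\mathcal P^{1/2}_{3,1}$ example makes clear by complementarity) and is exactly the right observation. In short: same approach as the paper, correctly executed at the level of strategy; the remaining work is the mechanical bookkeeping you accurately diagnose.
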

First we will see that the total evolution system Lemma~\ref{lem:SmallVarPropGHP} and Lemma~\ref{lem:NonSmallPropGHP} together gives a well posed evolution for a good choice of gauge source function $\upsilon$. Here it is important that the $\upsilon$ terms in the system are not expressed as derivatives of the $\goodset$ variables because that could spoil the symmetric hyperbolic structure. Therefore, we make the following assumption.

\begin{assumption}\label{ass:NiceGaugeSourceFunction}
Assume that the gauge source function $\upsilon=0$ or more generally that the first order derivatives of $\upsilon$ in \eqref{eq:GHPPropagation1q}-\eqref{eq:GHPPropagation1t} and $\upsilon$ itself can be written in terms of the $\goodset$ variables without derivatives.
\end{assumption}
For some gauge source functions a modification of $I\xi$ can help compensating for the first order derivatives of $\upsilon$ in \eqref{eq:GHPPropagation1q}-\eqref{eq:GHPPropagation1t}. For cases where Assumption~\ref{ass:NiceGaugeSourceFunction} is not satisfied, one would have to redo this analysis case by case. 

Observe that the equations \eqref{eq:GHPPropagation1a}-\eqref{eq:GHPPropagation1l} and \eqref{eq:GHPPropagationxi10}-\eqref{eq:GHPPropagationxi01}
come in pairs with principal parts of the form
\begin{align}
\thop \varphi={}&\edtp \phi,&
\tho \phi={}&\edt \varphi \label{eq:BianchiPair}
\intertext{or the complex conjugate version}
\thop \varphi={}&\edt \phi,&
\tho \phi={}&\edtp \varphi. \label{eq:BianchiPairAlt}
\end{align}
Such pairs are first order symmetric hyperbolic. 
In fact the GHP form of the massless Dirac equation $\sCurlDagger_{1,0} \phi =0$, for $\phi\in\mathcal{S}_{1,0}$ has this form
\begin{align}
(\thop {} -  \rho ')\phi_{0}={}&(\edt {} -  \tau)\phi_{1}, &
(\tho {} -  \rho)\phi_{1}={}&(\edtp {} -  \tau ')\phi_{0}.
\end{align}

The equations \eqref{eq:GHPPropagation1m}-\eqref{eq:GHPPropagation1p}, \eqref{eq:GHPPropagationThoPsi2}-\eqref{eq:GHPPropagationThopPsi2} is just the standard vacuum Bianchi equations. Given any spacelike foliation, this splits into 3 constraint equations and 5 propagation equations of symmetric hyperbolic form. Similarly, \eqref{eq:GHPPropagationkappa1a}-\eqref{eq:GHPPropagationkappa1d} splits into 3 constraint equations and 1 propagation for $\kappa_1$.
Under Assumption~\ref{ass:NiceGaugeSourceFunction}, the principal part of \eqref{eq:GHPPropagation1q}-\eqref{eq:GHPPropagation1t} is the same as the Maxwell equation.
Also this splits into 1 constraint equation and 3 propagation equations of symmetric hyperbolic form. See the next section for details.
Hence, with the observation that the spin coefficients $\rho$, $\rho'$, $\tau$ and $\tau'$ are given by \eqref{eq:xiGHP} we have the following corollary.
\begin{cor}\label{cor:TotalEvol}
Under Assumption~\ref{ass:NiceGaugeSourceFunction}, Lemma~\ref{lem:SmallVarPropGHP} and Lemma~\ref{lem:NonSmallPropGHP} gives a closed first order symmetric hyperbolic evolution system for all the variables
$$\{K_{\rho }{}, K_{\tau }{}, K_{\rho '}{}, K_{\tau '}{}, \kappa, \sigma, \kappa ', \sigma ', I_{\rho }{}, I_{\tau }{}, I_{\rho '}{}, I_{\tau '}{},\Psi_{0},\Psi_{1},\Psi_{2},\Psi_{3},\Psi_{4},\chi _{0},\chi _{1},\chi _{2}, \kappa _{1}{}, \xi_{00'}, \xi_{01'}, \xi_{10'}, \xi_{11'}\}.$$
Therefore, it gives a well posed evolution.
\end{cor}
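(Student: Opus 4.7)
The plan is to treat each subsystem appearing in Lemmas~\ref{lem:SmallVarPropGHP} and \ref{lem:NonSmallPropGHP} as a first order symmetric hyperbolic block of the form \eqref{eq:BianchiPair}--\eqref{eq:BianchiPairAlt}, modulo a splitting into constraint and evolution equations where the GHP system is overdetermined. Once each block is verified to be symmetric hyperbolic with lower order terms algebraic in the listed variables, closure follows by inspection, and well-posedness is then a standard consequence \cite{geroch1996partial}.

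First I would collect the evolution equations. The block \eqref{eq:GHPPropagation1a}--\eqref{eq:GHPPropagation1l} already appears as eight pairs of the shape \eqref{eq:BianchiPair}/\eqref{eq:BianchiPairAlt}, with one pair each for $(K_\rho,K_\tau)$, $(K_{\rho'},K_{\tau'})$, $(\kappa,\sigma)$, $(\kappa',\sigma')$, $(I_\rho,I_\tau)$, $(I_{\rho'},I_{\tau'})$ and the conjugate pairings, matching the model massless Dirac form displayed in the text; no constraint extraction is needed. The Bianchi block \eqref{eq:GHPPropagation1m}--\eqref{eq:GHPPropagation1p}, \eqref{eq:GHPPropagationThoPsi2}--\eqref{eq:GHPPropagationThopPsi2} is the standard vacuum Bianchi system: given a spacelike foliation with future pointing normal, one selects five evolution equations (e.g.\ $\tho\Psi_0,\tho\Psi_1,\tho\Psi_2,\thop\Psi_3,\thop\Psi_4$ in the GHP prime-symmetric sense) and three constraints, as is classical. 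The $\chi$-block \eqref{eq:GHPPropagation1q}--\eqref{eq:GHPPropagation1t} has the Maxwell principal part under Assumption~\ref{ass:NiceGaugeSourceFunction}, so it splits into one constraint and three evolution equations in the same manner. The $\kappa_1$-block \eqref{eq:GHPPropagationkappa1a}--\eqref{eq:GHPPropagationkappa1d} is a first order transport system whose timelike combination is the propagation equation and the three remaining directions are constraints. Finally, the $\xi$-block \eqref{eq:GHPPropagationxi10}--\eqref{eq:GHPPropagationxi01} is again of the pair form \eqref{eq:BianchiPair}/\eqref{eq:BianchiPairAlt}.

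Next I would verify closure. Using \eqref{eq:xiGHP} the spin coefficients $\rho,\rho',\tau,\tau'$ are algebraic in $\xi$ and the $K$-scalars, so every coefficient on every right-hand side is algebraic in the listed variables. Under Assumption~\ref{ass:NiceGaugeSourceFunction}, the terms $\upsilon$ and its derivatives appearing in \eqref{eq:GHPPropagation1q}--\eqref{eq:GHPPropagation1t} are absorbed into the lower order part without derivatives of \goodset\ variables, so they do not alter the principal part. Hence the union of all evolution equations produces a principal symbol built diagonally from massless Dirac/Maxwell/Bianchi blocks, each symmetric hyperbolic, with only algebraic coupling through the lower order terms. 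The preservation of the constraints (for $\Psi$, $\chi$ and $\kappa_1$) propagates along the evolution in the standard way, because the constraints are derived from commuting derivatives of the covariant identities that gave \eqref{eq:DivH}--\eqref{eq:CurlDgpsi} and \eqref{eq:Divxi}--\eqref{eq:TwistPsi2}, so initial data satisfying the constraints produce solutions satisfying them for all time.

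The main obstacle I anticipate is bookkeeping rather than analysis: one must ensure that, after the constraint/evolution split for the Bianchi, Maxwell-like and $\kappa_1$ subsystems, every remaining evolution equation can be paired into blocks of the form \eqref{eq:BianchiPair}/\eqref{eq:BianchiPairAlt} with compatible GHP weights, and that Assumption~\ref{ass:NiceGaugeSourceFunction} is actually used exactly where $\upsilon$-derivatives appear in \eqref{eq:GHPPropagation1q}--\eqref{eq:GHPPropagation1t}. Once this is done, symmetric hyperbolicity of each block is manifest from the displayed paired structure, the combined system inherits it by block diagonality of the principal symbol, and local well-posedness follows from the standard theory for first order symmetric hyperbolic systems.
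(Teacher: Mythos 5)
Your proposal follows the paper's own reasoning step for step: the Dirac-type pairs in \eqref{eq:GHPPropagation1a}--\eqref{eq:GHPPropagation1l} and \eqref{eq:GHPPropagationxi10}--\eqref{eq:GHPPropagationxi01}, the Bianchi, Maxwell-type $\chi$, and $\kappa_1$-transport blocks with their constraint/evolution splits, the role of Assumption~\ref{ass:NiceGaugeSourceFunction} in keeping the $\upsilon$-terms lower order, and closure through \eqref{eq:xiGHP}. (One minor count: \eqref{eq:GHPPropagation1a}--\eqref{eq:GHPPropagation1l} are twelve equations forming six Dirac-type pairs, not eight, but this does not affect the argument.)
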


For estimates, the total system is not ideal, because not all variables in the system will decay. However, if we treat the non-small variables $\badset$ as given coefficients, we can study just the evolution system given by Lemma~\ref{lem:SmallVarPropGHP}. 
The argument is the same as above, but instead of the full Bianchi system, we have just have \eqref{eq:GHPPropagation1m}-\eqref{eq:GHPPropagation1p} which are of the form \eqref{eq:BianchiPairAlt}. Hence, we have the following corollary.
\begin{cor}\label{cor:SmallEvol}
Under Assumption~\ref{ass:NiceGaugeSourceFunction}, Lemma~\ref{lem:SmallVarPropGHP} gives a first order symmetric hyperbolic evolution system for the small variables
$\goodset$ with coefficients depending on the non-small variables~$\badset$.
\end{cor}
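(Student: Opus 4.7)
The plan is to argue by inspection of Lemma~\ref{lem:SmallVarPropGHP} that, once the $\badset$ variables $\kappa_1, \Psi_2, \rho, \rho', \tau, \tau'$ are treated as given smooth coefficient functions, each block of equations has the symmetric-hyperbolic principal structure already identified in the discussion preceding Corollary~\ref{cor:TotalEvol}. The role of Assumption~\ref{ass:NiceGaugeSourceFunction} is to ensure that every occurrence of $\upsilon$ and of its first-order derivatives in \eqref{eq:GHPPropagation1q}--\eqref{eq:GHPPropagation1t} reduces to an algebraic expression in the $\goodset$ variables, so that $\upsilon$ contributes only lower-order terms and does not perturb the principal symbol.

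First, I would show that the twelve equations \eqref{eq:GHPPropagation1a}--\eqref{eq:GHPPropagation1l} decompose into six pairs of the form \eqref{eq:BianchiPair} or its conjugate \eqref{eq:BianchiPairAlt}: the pairs $(K_{\rho}, K_{\tau})$ and $(K_{\rho'}, K_{\tau'})$ from the $K$-variables, $(\kappa, \sigma)$ and $(\kappa', \sigma')$ from the linearized spin coefficients, and $(I_{\rho}, I_{\tau})$ and $(I_{\rho'}, I_{\tau'})$ from the components of $I\xi$. Each such pair has the massless-Dirac principal symbol exhibited just before Corollary~\ref{cor:TotalEvol}, hence is first-order symmetric hyperbolic on its own.

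Next, I would dispose of the two spin-2 blocks. Equations \eqref{eq:GHPPropagation1m}--\eqref{eq:GHPPropagation1p} are the standard vacuum Bianchi equations for $\Psi_0, \Psi_1, \Psi_3, \Psi_4$, now with $\Psi_2$ acting as a known source; on any spacelike foliation these split, as recalled in the excerpt, into three constraints and five symmetric hyperbolic propagation equations. Under Assumption~\ref{ass:NiceGaugeSourceFunction}, the principal part of \eqref{eq:GHPPropagation1q}--\eqref{eq:GHPPropagation1t} coincides with that of a source-free Maxwell field for $\chi \in \mathcal{S}_{2,0}$, so the same splitting gives one constraint and three symmetric hyperbolic evolution equations for $(\chi_0, \chi_1, \chi_2)$.

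Assembling these six Dirac-like pairs together with the Bianchi and Maxwell-like blocks yields a first-order symmetric hyperbolic evolution system for the entire set $\goodset$ whose coefficients involve only the $\badset$ variables and the frame. The only non-routine point to verify is the Maxwell-like splitting of \eqref{eq:GHPPropagation1q}--\eqref{eq:GHPPropagation1t}: one must check that after invoking Assumption~\ref{ass:NiceGaugeSourceFunction} to absorb the $\upsilon$-derivatives, the remaining lower-order terms (carrying $\Psi_2\kappa_1$ and conjugates) are genuinely compatible with the standard constraint-plus-evolution decomposition of Maxwell. Everything else is a direct transcription of the argument already used for Corollary~\ref{cor:TotalEvol}, now restricted to the $\goodset$-subsystem.
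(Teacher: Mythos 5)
Your decomposition into blocks is mostly right and the overall conclusion is correct, but the treatment of the curvature block \eqref{eq:GHPPropagation1m}--\eqref{eq:GHPPropagation1p} does not match the paper and is in fact internally inconsistent. You claim those four equations "split into three constraints and five symmetric hyperbolic propagation equations" on a spacelike foliation. That splitting applies to the \emph{full} vacuum Bianchi system for all five Weyl scalars (eight equations), which is what is used for Corollary~\ref{cor:TotalEvol}. For Corollary~\ref{cor:SmallEvol}, $\Psi_2$ has been moved to the coefficient side, and the remaining system \eqref{eq:GHPPropagation1m}--\eqref{eq:GHPPropagation1p} is only \emph{four} equations for the four unknowns $\Psi_0,\Psi_1,\Psi_3,\Psi_4$; a "$3+5$" split simply cannot apply to it, and trying to import it from the full Bianchi system would leave an equation with $\thop\Psi_2$ or $\tho\Psi_2$ that no longer belongs to the subsystem.

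The paper's actual observation, stated in the sentence introducing the corollary, is simpler and more direct: \eqref{eq:GHPPropagation1m}--\eqref{eq:GHPPropagation1p} are already two Dirac-like pairs of the form \eqref{eq:BianchiPairAlt}, one for $(\Psi_0,\Psi_1)$ (equations \eqref{eq:GHPPropagation1m} and the following one) and one for $(\Psi_3,\Psi_4)$ (the next two), each of which is first-order symmetric hyperbolic on its own, just like the other six pairs in \eqref{eq:GHPPropagation1a}--\eqref{eq:GHPPropagation1l}. So no constraint/evolution split is needed for this block at all. You should replace your Bianchi discussion for these four equations with the Dirac-pair identification; the rest of your argument (the six pairs from \eqref{eq:GHPPropagation1a}--\eqref{eq:GHPPropagation1l}, the Maxwell-type $1{+}3$ split for \eqref{eq:GHPPropagation1q}--\eqref{eq:GHPPropagation1t}, and the role of Assumption~\ref{ass:NiceGaugeSourceFunction}) agrees with the paper.
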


Importantly, we not only get evolution equations for the $\badset$ variables. We get expressions for all derivatives of them.
\begin{lemma}\label{lem:AllGHPDerNonSmall}
All derivatives of the non-small variables $\{\kappa _{1}{},\Psi_{2},\rho ,\rho ',\tau ,\tau '\}$ are
\begin{subequations}
\begin{align}
\tho \kappa_{1}={}&\kappa_{1} (K_{\rho}{} -  \rho),\label{eq:Thokappa1}\\
\thop \kappa_{1}={}&\kappa_{1} (K_{\rho '}{} -  \rho '),\\
\edt \kappa_{1}={}&\kappa_{1} (K_{\tau}{} -  \tau),\\
\edtp \kappa_{1}={}&\kappa_{1} (K_{\tau '}{} -  \tau '),\label{eq:Ethpkappa1}\\
\tho \Psi_{2}={}&-3 K_{\rho}{} \Psi_{2}
 + 3 \Psi_{2} \rho
 + \kappa_{1}^{-1} \Upsilon_{00'},\label{eq:ThoPsi2}\\
\thop \Psi_{2}={}&-3 K_{\rho '}{} \Psi_{2}
 + 3 \Psi_{2} \rho '
 + \kappa_{1}^{-1} \Upsilon_{11'},\\
\edt \Psi_{2}={}&-3 K_{\tau}{} \Psi_{2}
 + 3 \Psi_{2} \tau
 + \kappa_{1}^{-1} \Upsilon_{01'},\\
\edtp \Psi_{2}={}&-3 K_{\tau '}{} \Psi_{2}
 + 3 \Psi_{2} \tau '
 + \kappa_{1}^{-1} \Upsilon_{10'},\label{eq:EthpPsi2}\\
\tho \rho ={}&\tfrac{1}{8} (3 \gamma_{00'} - 2 \eta_{00'}) \kappa_{1}^{-1}
 -  K_{\tau '}{} \kappa
 -  K_{\rho}{} \rho
 + \rho^2
 -  \bar{\kappa} \tau ,\label{eq:Thorho}\\
\thop \rho ={}&- \tfrac{1}{2} \Psi_{2}
 -  \kappa \kappa '
 -  K_{\rho '}{} \rho
 + \rho \rho '
 -  K_{\tau '}{} \tau
 -  \tau \bar{\tau}
 + \tau \tau '\nonumber\\*
& -  \tfrac{1}{8} \kappa_{1}^{-1} (\bar{\upsilon}_{1'} - 6 \gamma_{11'} + 2 \eta_{11'} + 4 \bar\Psi_{2} \bar{\kappa}_{1'} + 4 \bar{\chi}_{1'}),\\
\edt \rho ={}&- \tfrac{3}{4} \Psi_{1}
 -  \tfrac{1}{16} (3 \upsilon_{0} - 6 \gamma_{01'} + 4 \eta_{01'}) \kappa_{1}^{-1}
 -  K_{\tau}{} \rho
 -  \kappa \rho '
 -  K_{\tau '}{} \sigma
 + \rho \tau
 -  \bar{\rho} \tau
 + \sigma \tau ',\\
\edtp \rho ={}&-2 K_{\tau '}{} \rho
 -  \kappa \sigma '
 -  \bar{\sigma} \tau
 + 2 \rho \tau '
 -  \tfrac{1}{8} \kappa_{1}^{-1} (\bar{\upsilon}_{0'} - 6 \gamma_{10'} + 2 \eta_{10'} + 4 \bar\Psi_{1} \bar{\kappa}_{1'} + 4 \bar{\chi}_{0'}),\\
\tho \rho '={}&- \tfrac{1}{2} \Psi_{2}
 -  \kappa \kappa '
 -  K_{\rho}{} \rho '
 + \rho \rho '
 -  K_{\tau}{} \tau '
 + \tau \tau '
 -  \tau ' \bar{\tau}'\nonumber\\*
& -  \tfrac{1}{8} \kappa_{1}^{-1} (\bar{\upsilon}_{1'} + 6 \gamma_{11'} + 2 \eta_{11'} + 4 \bar\Psi_{2} \bar{\kappa}_{1'} + 4 \bar{\chi}_{1'}),\\
\thop \rho '={}&- \tfrac{1}{8} (3 \gamma_{22'} + 2 \eta_{22'}) \kappa_{1}^{-1}
 -  K_{\tau}{} \kappa '
 -  K_{\rho '}{} \rho '
 + \rho '^2
 -  \bar{\kappa}' \tau ',\\
\edt \rho '={}&-2 K_{\tau}{} \rho '
 -  \kappa ' \sigma
 + 2 \rho ' \tau
 -  \bar{\sigma}' \tau '
 -  \tfrac{1}{8} \kappa_{1}^{-1} (\bar{\upsilon}_{2'} + 6 \gamma_{12'} + 2 \eta_{12'} + 4 \bar\Psi_{3} \bar{\kappa}_{1'} + 4 \bar{\chi}_{2'}),\\
\edtp \rho '={}&- \tfrac{3}{4} \Psi_{3}
 -  \tfrac{1}{16} (3 \upsilon_{2} + 6 \gamma_{21'} + 4 \eta_{21'}) \kappa_{1}^{-1}
 -  \kappa ' \rho
 -  K_{\tau '}{} \rho '
 -  K_{\tau}{} \sigma '
 + \sigma ' \tau
 + \rho ' \tau '
 -  \bar{\rho}' \tau ',\\
\tho \tau ={}&\tfrac{3}{4} \Psi_{1}
 + \tfrac{1}{16} (3 \upsilon_{0} + 6 \gamma_{01'} - 4 \eta_{01'}) \kappa_{1}^{-1}
 -  K_{\rho '}{} \kappa
 + \kappa \rho '
 -  K_{\rho}{} \tau
 + \rho \tau
 -  \sigma \tau '
 -  \rho \bar{\tau}',\\
\thop \tau ={}&- \bar{\kappa}' \rho
 -  \kappa ' \sigma
 - 2 K_{\rho '}{} \tau
 + 2 \rho ' \tau
 -  \tfrac{1}{8} \kappa_{1}^{-1} (\bar{\upsilon}_{2'} - 6 \gamma_{12'} + 2 \eta_{12'} + 4 \bar\Psi_{3} \bar{\kappa}_{1'} + 4 \bar{\chi}_{2'}),\\
\edt \tau ={}&\tfrac{1}{8} (3 \gamma_{02'} - 2 \eta_{02'}) \kappa_{1}^{-1}
 -  K_{\rho '}{} \sigma
 -  \rho \bar{\sigma}'
 -  K_{\tau}{} \tau
 + \tau^2,\\
\edtp \tau ={}&\tfrac{1}{2} \Psi_{2}
 -  K_{\rho '}{} \rho
 + \rho \rho '
 -  \rho \bar{\rho}'
 -  \sigma \sigma '
 -  K_{\tau '}{} \tau
 + \tau \tau '\nonumber\\*
& -  \tfrac{1}{8} \kappa_{1}^{-1} (\bar{\upsilon}_{1'} - 6 \gamma_{11'} + 2 \eta_{11'} + 4 \bar\Psi_{2} \bar{\kappa}_{1'} + 4 \bar{\chi}_{1'}),\\
\tho \tau '={}&- \bar{\kappa} \rho '
 -  \kappa \sigma '
 - 2 K_{\rho}{} \tau '
 + 2 \rho \tau '
 -  \tfrac{1}{8} \kappa_{1}^{-1} (\bar{\upsilon}_{0'} + 6 \gamma_{10'} + 2 \eta_{10'} + 4 \bar\Psi_{1} \bar{\kappa}_{1'} + 4 \bar{\chi}_{0'}),\\
\thop \tau '={}&\tfrac{3}{4} \Psi_{3}
 + \tfrac{1}{16} (3 \upsilon_{2} - 6 \gamma_{21'} - 4 \eta_{21'}) \kappa_{1}^{-1}
 -  K_{\rho}{} \kappa '
 + \kappa ' \rho
 -  \sigma ' \tau
 -  \rho ' \bar{\tau}
 -  K_{\rho '}{} \tau '
 + \rho ' \tau ',\\
\edt \tau '={}&\tfrac{1}{2} \Psi_{2}
 -  K_{\rho}{} \rho '
 + \rho \rho '
 -  \bar{\rho} \rho '
 -  \sigma \sigma '
 -  K_{\tau}{} \tau '
 + \tau \tau '\nonumber\\*
& -  \tfrac{1}{8} \kappa_{1}^{-1} (\bar{\upsilon}_{1'} + 6 \gamma_{11'} + 2 \eta_{11'} + 4 \bar\Psi_{2} \bar{\kappa}_{1'} + 4 \bar{\chi}_{1'}),\\
\edtp \tau '={}&- \tfrac{1}{8} (3 \gamma_{20'} + 2 \eta_{20'}) \kappa_{1}^{-1}
 -  \rho ' \bar{\sigma}
 -  K_{\rho}{} \sigma '
 -  K_{\tau '}{} \tau '
 + \tau '^2.\label{eq:Ethptau}
\end{align}
\end{subequations}
\end{lemma}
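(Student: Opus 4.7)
The plan is to translate the covariant statements of Lemma~\ref{lem:DerNonsmallCovariant}, together with the defining relations of $\gamma$, $\eta$, $\Upsilon$, into the principal Killing spinor dyad. The organising principle is that the four fundamental operators $\sDiv, \sCurl, \sCurlDagger, \sTwist$ each project onto a specific irreducible piece of the first-order jet, so on a valence-$(k,l)$ field their four outputs together determine every directional derivative.

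For $\kappa_{1}$, the equations \eqref{eq:Thokappa1}--\eqref{eq:Ethpkappa1} are obtained by reading off the four GHP components of the covariant identity \eqref{eq:Twistkappa1}, $\sTwist_{0,0}\kappa_{1}=-\tfrac12\kappa_{1}^{-1}\kappa\underset{3,1}{\overset{2,0}{\odot}}H-\tfrac13\kappa_{1}^{-1}\kappa\underset{1,1}{\overset{1,0}{\odot}}\xi$. After expanding the two contractions in the dyad and substituting the definitions of $K_{\rho},K_{\tau},K_{\rho'},K_{\tau'}$ together with the component formulas \eqref{eq:xiGHP}, the mixed $K$- and $\xi$-contributions collapse to the clean form $K_{\ast}-\rho_{\ast}$. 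Equations \eqref{eq:ThoPsi2}--\eqref{eq:EthpPsi2} for $\Psi_{2}$ are produced the same way from \eqref{eq:TwistPsi2}, with the right-hand side recognised through the GHP components of $\Upsilon$ and of the $\kappa\odot\xi$ and $\kappa\odot H$ terms.

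For \eqref{eq:Thorho}--\eqref{eq:Ethptau}, the starting point is \eqref{eq:xiGHP}, which writes each of $\rho,\rho',\tau,\tau'$ as $\tfrac13K_{\ast}\pm\tfrac{1}{3\kappa_{1}}\xi_{\ast}$. Applying a GHP derivative and using Leibniz reduces the problem to computing the sixteen GHP derivatives of the $\xi$-components and the sixteen GHP derivatives of the $K$-scalars. The $\xi$-derivatives are recovered as the $1+3+3+9$ irreducible components of $\sDiv_{1,1}\xi=0$, $\sCurl_{1,1}\xi$, $\sCurlDagger_{1,1}\xi$, and $\sTwist_{1,1}\xi=\tfrac32\gamma$ supplied by Lemma~\ref{lem:DerNonsmallCovariant}. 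The $K$-derivatives follow analogously from the $1+3+3+9$ components of $\sDiv_{3,1}H=\upsilon$, $\sCurl_{3,1}H=-2\kappa\underset{4,0}{\overset{1,0}{\odot}}\psi$, $\sCurlDagger_{3,1}H=\gamma$ and $\sTwist_{3,1}H$ (the last entering via the definition of $\eta$), combined with the already-computed derivatives of $\kappa_{1}$ to handle the factor $1/\kappa_{1}$ in $K_{\ast}=3H_{\ast}/(2\kappa_{1})$. Substituting everything back into the differentiated \eqref{eq:xiGHP} and collecting terms reproduces \eqref{eq:Thorho}--\eqref{eq:Ethptau}.

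The main obstacle is the combinatorial length of the substitution rather than any new identity. Each of the twenty-four equations is a sum of six to ten weighted monomials whose coefficients are fixed by the specific contraction pattern of $\kappa$ with the relevant irreducible piece of $\nabla\xi$ or $\nabla H$, and the GHP weights, GHP-prime partners, and complex-conjugation symmetries must be tracked consistently through every substitution. As noted in the introduction, the full expansion and simplification were carried out mechanically in xAct, and no input beyond Lemma~\ref{lem:DerNonsmallCovariant}, the definitions of $\gamma,\eta,\Upsilon$, and the relation \eqref{eq:xiGHP} is required.
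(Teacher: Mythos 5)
Your proposal is correct and follows essentially the same route as the paper: translate \eqref{eq:Twistkappa1} and \eqref{eq:TwistPsi2} directly to GHP for $\kappa_1$ and $\Psi_2$, then differentiate \eqref{eq:xiGHP} and substitute the derivatives of $\xi$ (from \eqref{eq:Divxi}, \eqref{eq:Curlxi}, \eqref{eq:CurlDgxi}, \eqref{eq:Twistxi}) and of the $K$ scalars (from the $\sDiv$, $\sCurl$, $\sCurlDagger$, $\sTwist$ of $H$, the last three accessed through $\upsilon$, $\psi$, $\gamma$, $\eta$), combined with the already-obtained derivatives of $\kappa_1$. One minor inaccuracy: the irreducible component count $1+3+3+9$ is correct for $\xi\in\mathcal{S}_{1,1}$ but not for $H\in\mathcal{S}_{3,1}$, where $\sDiv$, $\sCurl$, $\sCurlDagger$, $\sTwist$ give $3+5+9+15$ components; moreover $\eta$ only encodes a $\kappa$-contracted $(2,2)$ projection of $\sTwist_{3,1}H$ rather than all $15$ of its components, which is exactly why the paper's own proof instead phrases the final step as eliminating the $K$-derivatives that actually appear using \eqref{eq:GHPPropagation1a}--\eqref{eq:GHPPropagation1d}, \eqref{eq:gammaGHP} and \eqref{eq:etaGHP}, rather than claiming all sixteen $K$-derivatives are recovered.
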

\begin{proof}
A direct translation of \eqref{eq:Twistkappa1} to GHP gives \eqref{eq:Thokappa1}-\eqref{eq:Ethpkappa1}, while \eqref{eq:TwistPsi2} gives \eqref{eq:ThoPsi2}-\eqref{eq:EthpPsi2}. Due to the relations \eqref{eq:xiGHP}, the GHP form of \eqref{eq:Divxi}, \eqref{eq:Curlxi}, \eqref{eq:CurlDgxi}, \eqref{eq:Twistxi} gives expressions for all GHP derivatives of $\rho ,\rho ',\tau ,\tau '$. However, these equations will also contain first order derivatives of $K_{\rho }{}, K_{\tau }{}, K_{\rho '}{}, K_{\tau '}{}$. All such derivatives can be eliminated by \eqref{eq:GHPPropagation1a}-\eqref{eq:GHPPropagation1d}, \eqref{eq:gammaGHP}, \eqref{eq:etaGHP}. This gives the relations \eqref{eq:Thorho}-\eqref{eq:Ethptau}. 
\end{proof}
\begin{remark}
\begin{itemize}
\item The $ \eta$ variables could have been avoided here as in Lemma~\ref{lem:DerNonsmallCovariant} if we would have used the components of $\xi$ instead of $\{\rho ,\rho ',\tau ,\tau '\}$. They are related via \eqref{eq:xiGHP}. As the spin coefficients are more commonly used in applications, the current form was chosen.
\item This lemma is important because all covariantly defined operators on Kerr expressed in GHP notation will have coefficients depending on the $\badset$ variables. In our setting, one can use the same GHP expressions for the operators on the perturbed spacetime. Lemma~\ref{lem:AllGHPDerNonSmall} allows for computation of commutators between such operators.
\end{itemize}
\end{remark}

\section{Space spinor formulation}
In this section, we will study the space spinor formulation of the evolution equations of section~\ref{sec:ApproxKillingCovariant} using the notation introduced in  
section~\ref{sec:SpaceSpinorsNotation}.

Define the space spinors 
\begin{subequations}
\begin{align}
\underline{H}={}&\tau \underset{3,1}{\overset{0,1}{\odot}}H,&
\underline{\xi}={}&\tfrac{2}{3}\tau \underset{1,1}{\overset{0,1}{\odot}}\xi ,&
\underline{I\xi}={}&\tfrac{1}{2}\underline{\xi}
 + \tfrac{1}{2}\widehat{\underline{\xi}},\\
\underline{\slashed{H}}={}&\tfrac{3}{2}\tau \underset{3,1}{\overset{1,1}{\odot}}H,&
\underline{\slashed{\xi}}={}&\tau \underset{1,1}{\overset{1,1}{\odot}}\xi ,&
\underline{I\slashed{\xi}}={}&\tfrac{1}{2}\underline{\slashed{\xi}}
 -  \tfrac{1}{2}\overline{\underline{\slashed{\xi}}}.
\end{align}
\end{subequations}
We can reconstruct the spacetime variables $H$, $\xi$ and $I\xi$ via
\begin{align}
H={}&\tfrac{1}{2}\tau \underset{2,0}{\overset{0,0}{\odot}}\underline{\slashed{H}}
 -  \tau \underset{4,0}{\overset{1,0}{\odot}}\underline{H},&
\xi ={}&- \tfrac{3}{2}\tau \underset{2,0}{\overset{1,0}{\odot}}\underline{\xi}
 + \tfrac{1}{2}\tau \underset{0,0}{\overset{0,0}{\odot}}\underline{\slashed{\xi}},&
I\xi ={}&- \tfrac{3}{2}\tau \underset{2,0}{\overset{1,0}{\odot}}\underline{I\xi}
 + \tfrac{1}{2}\tau \underset{0,0}{\overset{0,0}{\odot}}\underline{I\slashed{\xi}}.
\end{align}
We can also express the space spinors in terms of the Sen connection and the normal derivative.
\begin{align}
\underline{\slashed{H}}={}&\nabla_{\tau}\kappa
 + \scurl_{2} \kappa,&
\underline{H}={}&\stwist_{2} \kappa,&
\underline{\slashed{\xi}}={}&\sdiv_{2} \kappa,&
\underline{\xi}={}& \tfrac{2}{3}\scurl_{2} \kappa 
 - \tfrac{1}{3}\nabla_{\tau}\kappa.
\end{align}
The variables $\chi$, $\psi$, $\Psi$, $\Psi_2$ and $\kappa_1$ don't have any primed indices, so we can directly treat them as space spinors.

Now, we can make a covariant space spinor formulation of the evolution system in Lemma~\ref{lem:SmallVarPropCovariant} and Corollary~\ref{cor:TotalEvolCovariant}.

The first four equations in Lemma~\ref{lem:SmallVarPropCovariant} can be written as
\begin{subequations}
\label{eq:NormalDerHIxi}
\begin{align}
\nabla_{\tau}\underline{\slashed{H}}={}&
  \scurl_{2} \underline{\slashed{H}}
 - 3\sdiv_{4} \underline{H}
- A\underset{2,0}{\overset{1,0}{\odot}}\underline{\slashed{H}}
 -  \slashed{\Omega}\underline{\slashed{H}}
 + 3A\underset{4,0}{\overset{2,0}{\odot}}\underline{H}
 + 3\upsilon ,\\
\nabla_{\tau}\underline{H}={}&
 - 2\scurl_{4} \underline{H}
 + \stwist_{2} \underline{\slashed{H}}
 - A\underset{2,0}{\overset{0,0}{\odot}}\underline{\slashed{H}}
 + 2A\underset{4,0}{\overset{1,0}{\odot}}\underline{H}
 -  \slashed{\Omega}\underline{H}
 - 4\kappa \underset{4,0}{\overset{1,0}{\odot}}\psi ,\\
\nabla_{\tau}\underline{I\slashed{\xi}}={}&
 - 3\sdiv_{2} \underline{I\xi}
 + 3A\underset{2,0}{\overset{2,0}{\odot}}\underline{I\xi}
 -  \slashed{\Omega}\underline{I\slashed{\xi}},\\
\nabla_{\tau}\underline{I\xi}={}&
 - 2\scurl_{2} \underline{I\xi}
+2A\underset{2,0}{\overset{1,0}{\odot}}\underline{I\xi}
 -  \slashed{\Omega}\underline{I\xi}
 -  \tfrac{2}{3} \underline{I\slashed{\xi}}A
 + \tfrac{2}{3}\stwist_{0} \underline{I\slashed{\xi}}
 + 2\chi .
\end{align}
\end{subequations}
Similar to the Maxwell equation, the equation \eqref{eq:CurlDgchi} splits into one constraint equation and one evolution equation
\begin{subequations}
\begin{align}
\sdiv_{2} \chi={}&- \tfrac{1}{4}\sdiv_{2} \upsilon
 -  \tfrac{1}{4}\sdiv_{2} \widehat{\upsilon}
 + \frac{3 \Psi_{2}}{4 \kappa_{1}{}^2}\kappa \underset{2,0}{\overset{2,0}{\odot}}\kappa \underset{4,0}{\overset{2,0}{\odot}}\underline{H}
 -  \frac{3 \bar\Psi_{2}}{4 \bar{\kappa}_{1'}{}^2}\widehat{\kappa}\underset{2,0}{\overset{2,0}{\odot}}\widehat{\kappa}\underset{4,0}{\overset{2,0}{\odot}}\widehat{\underline{H}}
+ \tfrac{1}{2}\psi \underset{4,0}{\overset{4,0}{\odot}}\underline{H}
 -  \tfrac{1}{2}\widehat{\psi}\underset{4,0}{\overset{4,0}{\odot}}\widehat{\underline{H}},\\
\nabla_{\tau}\chi ={}&2\scurl_{2} \chi
 -  \tfrac{1}{4}\nabla_{\tau}\upsilon
 + \tfrac{1}{2}\scurl_{2} \upsilon
 -  \tfrac{1}{4}\nabla_{\tau}\widehat{\upsilon}
 -  \tfrac{1}{2}\scurl_{2} \widehat{\upsilon}
 -  \tfrac{1}{3}\slashed{\Omega}\underset{2,0}{\overset{0,0}{\odot}}\widehat{\upsilon}
 + A\underset{2,0}{\overset{1,0}{\odot}}\widehat{\upsilon}
 + \tfrac{1}{2}\Omega \underset{2,0}{\overset{2,0}{\odot}}\widehat{\upsilon}
 + \tfrac{1}{2} \Psi_{2}\underline{\slashed{H}}
 + \tfrac{1}{2} \bar\Psi_{2}\widehat{\underline{\slashed{H}}}
 \nonumber\\
&
 + \frac{3 \Psi_{2}}{4 \kappa_{1}{}^2}\kappa \underset{0,0}{\overset{0,0}{\odot}}\kappa \underset{2,0}{\overset{2,0}{\odot}}\underline{\slashed{H}}
 + \frac{3 \bar\Psi_{2}}{4 \bar{\kappa}_{1'}{}^2}\widehat{\kappa}\underset{0,0}{\overset{0,0}{\odot}}\widehat{\kappa}\underset{2,0}{\overset{2,0}{\odot}}\widehat{\underline{\slashed{H}}}
 + \frac{3 \Psi_{2}}{2 \kappa_{1}{}^2}\kappa \underset{2,0}{\overset{1,0}{\odot}}\kappa \underset{4,0}{\overset{2,0}{\odot}}\underline{H}
 + \frac{3 \bar\Psi_{2}}{2 \bar{\kappa}_{1'}{}^2}\widehat{\kappa}\underset{2,0}{\overset{1,0}{\odot}}\widehat{\kappa}\underset{4,0}{\overset{2,0}{\odot}}\widehat{\underline{H}}
\nonumber\\
&
  + \tfrac{1}{2}\psi \underset{2,0}{\overset{2,0}{\odot}}\underline{\slashed{H}}
 + \tfrac{1}{2}\widehat{\psi}\underset{2,0}{\overset{2,0}{\odot}}\widehat{\underline{\slashed{H}}}
 + \psi \underset{4,0}{\overset{3,0}{\odot}}\underline{H}
 + \widehat{\psi}\underset{4,0}{\overset{3,0}{\odot}}\widehat{\underline{H}}. \label{eq:NormalDerchi}
\end{align}
\end{subequations}
The space spinor split of the full vacuum Bianchi system \eqref{eq:VacuumBianchiCovariant} gives a constraint equation and an evolution equation
\begin{subequations}
\begin{align}
\sdiv_{4} \Psi={}&0,\\
\nabla_{\tau}\Psi ={}&2\scurl_{4} \Psi . \label{eq:NormalFullPsi}
\end{align}
\end{subequations}
Observe that this system is equivalent to both equations \eqref{eq:TwistPsi2Alt} and \eqref{eq:CurlDgpsi} together.
The space spinor split of equations \eqref{eq:Twistkappa1}, \eqref{eq:Divxi}, \eqref{eq:Curlxi} are
\begin{subequations}
\begin{align}
\nabla_{\tau}\kappa_{1}{}={}&- \frac{1}{3 \kappa_{1}{}}\kappa \underset{2,0}{\overset{2,0}{\odot}}\underline{\slashed{H}}
 + \frac{1}{2 \kappa_{1}{}}\kappa \underset{2,0}{\overset{2,0}{\odot}}\underline{\xi},\label{eq:NormalDerkappa1}\\
\stwist_{0} \kappa_{1}{}={}&- \frac{1}{6 \kappa_{1}{}}\kappa \underset{2,0}{\overset{1,0}{\odot}}\underline{\slashed{H}}
 -  \frac{1}{2 \kappa_{1}{}}\kappa \underset{4,0}{\overset{2,0}{\odot}}\underline{H}
 -  \frac{1}{2 \kappa_{1}{}}\kappa \underset{2,0}{\overset{1,0}{\odot}}\underline{\xi}
 -  \frac{1}{6 \kappa_{1}{}}\kappa \underset{0,0}{\overset{0,0}{\odot}}\underline{\slashed{\xi}},\\
\nabla_{\tau}\underline{\slashed{\xi}}={}&
 - 3\sdiv_{2} \underline{\xi}
 + 3A\underset{2,0}{\overset{2,0}{\odot}}\underline{\xi}
 -  \underline{\slashed{\xi}}\underset{0,0}{\overset{0,0}{\odot}}\slashed{\Omega},\label{eq:NormalDerxi0}\\
\nabla_{\tau}\underline{\xi}={}&
 - 2\scurl_{2} \underline{\xi}
 + \tfrac{2}{3}\stwist_{0} \underline{\slashed{\xi}}
 + 2A\underset{2,0}{\overset{1,0}{\odot}}\underline{\xi}
 -  \slashed{\Omega}\underline{\xi}
 -  \tfrac{2}{3}A\underset{0,0}{\overset{0,0}{\odot}}\underline{\slashed{\xi}}
 -  \upsilon
 + 2\kappa \underset{4,0}{\overset{2,0}{\odot}}\psi
 - 4 \Psi_{2}\kappa . \label{eq:NormalDerxi2}
\end{align}
\end{subequations}
\begin{lemma} 
The space spinor form of the evolution system in Corollary~\ref{cor:TotalEvol} is given by \eqref{eq:NormalDerHIxi}, \eqref{eq:NormalDerchi},\eqref{eq:NormalFullPsi}, \eqref{eq:NormalDerkappa1}, \eqref{eq:NormalDerxi0}, \eqref{eq:NormalDerxi2}.
Under Assumption~\ref{ass:NiceGaugeSourceFunction} this system is first order symmetric hyperbolic.
\end{lemma}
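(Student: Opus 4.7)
The proof splits into two parts: deriving the space spinor form of each equation in Corollary~\ref{cor:TotalEvol}, and then verifying the symmetric hyperbolic structure under Assumption~\ref{ass:NiceGaugeSourceFunction}.

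For the translation step, the plan is to decompose each covariant operator $\sDiv_{k,l}, \sCurl_{k,l}, \sCurlDagger_{k,l}, \sTwist_{k,l}$ into its normal and tangential pieces. Using the standard decomposition of $\nabla_{AA'}$ into a normal part proportional to $\tau_{AA'}\nabla_\tau$ and a tangential part expressible through the Sen connection $\nabla_{AB} = \tau_{(A}{}^{A'}\nabla_{B)A'}$, one can express each fundamental operator as a combination of $\nabla_\tau$ and the space spinor operators $\sdiv, \scurl, \stwist$, modulo lower order terms involving the acceleration $A = \sCurl_{1,1}\tau$ and the second fundamental form components $\Omega, \slashed{\Omega}$. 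Applied to $\kappa$, this yields the identifications $\underline{\slashed{H}} = \nabla_\tau \kappa + \scurl_2 \kappa$, $\underline{H} = \stwist_2 \kappa$, $\underline{\slashed{\xi}} = \sdiv_2 \kappa$ and $\underline{\xi} = \tfrac{2}{3}\scurl_2\kappa - \tfrac{1}{3}\nabla_\tau\kappa$ already recorded in the text, so that each pair of covariant operators repackages into a pair $(\nabla_\tau,\text{spatial})$ acting on the $\tau$-contracted variables. With these identifications, the equations are translated directly: \eqref{eq:DivH}--\eqref{eq:CurlIxi} give the four components of \eqref{eq:NormalDerHIxi}; \eqref{eq:CurlDgchi} splits into the constraint on $\sdiv_2\chi$ and the evolution equation for $\nabla_\tau\chi$ appearing in \eqref{eq:NormalDerchi}; the vacuum Bianchi system \eqref{eq:VacuumBianchiCovariant} (equivalent to \eqref{eq:TwistPsi2Alt} together with \eqref{eq:CurlDgpsi}) splits into $\sdiv_4\Psi=0$ and the evolution equation \eqref{eq:NormalFullPsi}; and \eqref{eq:Twistkappa1}, \eqref{eq:Divxi}, \eqref{eq:Curlxi} translate respectively to \eqref{eq:NormalDerkappa1}, \eqref{eq:NormalDerxi0} and \eqref{eq:NormalDerxi2}.

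For the symmetric hyperbolic statement, the key ingredient is the formal adjoint relations recorded in section~\ref{sec:SpaceSpinorsNotation}: $\sdiv_k$ and $\stwist_{k-2}$ are mutual formal adjoints up to sign, while $\scurl_k$ is formally self-adjoint up to sign. Each subsystem above has the standard symmetric hyperbolic block form $\nabla_\tau \phi = c\,\scurl\phi + \stwist\psi + \text{l.o.t.}$, $\nabla_\tau \psi = c'\,\scurl\psi + \sdiv\phi + \text{l.o.t.}$, relative to the natural $\tau$-Hermitian inner product on symmetric spinors. The pairs $(\underline{\slashed{H}},\underline{H})$, $(\underline{I\slashed{\xi}},\underline{I\xi})$ and $(\underline{\slashed{\xi}},\underline{\xi})$, the Maxwell-like equation for $\chi$ and the full vacuum Bianchi system for $\Psi$, all fit this template, while the transport equation \eqref{eq:NormalDerkappa1} for $\kappa_1$ has trivial principal symbol and is appended as a scalar equation. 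Under Assumption~\ref{ass:NiceGaugeSourceFunction}, the gauge source terms and their first order derivatives in \eqref{eq:NormalDerchi} reduce to first order expressions in the $\goodset$ variables, so they do not contribute to the principal part.

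The main obstacle is the bookkeeping of the many lower order terms generated by the operator decompositions, in particular getting the signs and the $A, \Omega, \slashed{\Omega}$ contributions right, and then confirming that after rearrangement the block-diagonal principal parts assemble into a single Hermitian symbol rather than only appearing symmetric block by block. This is essentially a routine but lengthy computation best carried out with the \emph{xAct} suite used throughout the paper; no conceptual difficulty arises once the decompositions of the fundamental operators into $\nabla_\tau$ and the Sen-level operators are in hand.
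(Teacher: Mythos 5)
Your translation step is correctly aimed and matches what the paper does implicitly: decompose each $\sDiv,\sCurl,\sCurlDagger,\sTwist$ into $\nabla_\tau$ plus Sen-level pieces and use the stated identifications of $\underline{\slashed H},\underline H,\underline{\slashed\xi},\underline\xi$. The paper does not spell this out either; the burden of the displayed proof is entirely the hyperbolicity claim.

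Your hyperbolicity argument has a gap, though. You assert that each subsystem ``has the standard symmetric hyperbolic block form $\nabla_\tau\phi = c\,\scurl\phi + \stwist\psi$, $\nabla_\tau\psi = c'\scurl\psi + \sdiv\phi$ relative to the natural $\tau$-Hermitian inner product.'' That template implicitly demands that the coefficients of the $\stwist$ and $\sdiv$ off-diagonal terms be equal in magnitude, since the Hermitian adjoint of $\sdiv_k$ is $\stwist_{k-2}$ up to lower order. But in the actual system the coefficients do not match. For example, $\nabla_\tau\underline{\slashed H}$ carries $-3\,\sdiv_4\underline H$ while $\nabla_\tau\underline H$ carries $+1\cdot\stwist_2\underline{\slashed H}$; similarly $\nabla_\tau\underline{I\slashed\xi}$ has $-3\,\sdiv_2\underline{I\xi}$ while $\nabla_\tau\underline{I\xi}$ has $\tfrac{2}{3}\stwist_0\underline{I\slashed\xi}$. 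So with the natural inner product the principal symbol is \emph{not} anti-self-adjoint. The paper remedies this with an explicit constant diagonal rescaling of the variables to $(\underline{\slashed H},\sqrt{3}\,\underline H, \sqrt{2}\,\underline{I\slashed\xi}, 3\,\underline{I\xi},\chi,\Psi,\kappa_1,\sqrt{2}\,\underline{\slashed\xi},3\,\underline\xi)$, after which the block matrix of principal-part operators has entries $\scurl_2,-\sqrt 3\,\sdiv_4,\sqrt 3\,\stwist_2,-2\scurl_4$, etc., that really are anti-self-adjoint up to lower order. You would need to say that you are free to rescale by constants (equivalently, modify the Hermitian inner product by a constant positive diagonal matrix) and exhibit, or at least assert the existence of, suitable factors; as written, your claim about the natural inner product is false.

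A second, smaller point: the ``main obstacle'' you flag---that the block-diagonal pieces might not assemble into a single Hermitian symbol because of off-block couplings---is a non-issue here. The paper's principal-part matrix is genuinely block diagonal (the coupling between blocks, e.g.\ the $\kappa\odot\psi$ and $\Psi_2\kappa$ terms linking $\underline H$ and $\underline\xi$ to $\Psi$, is all zeroth order), so anti-self-adjointness of each block suffices. The real obstacle is the scaling issue above, not the assembly.
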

\begin{proof}
Under Assumption~\ref{ass:NiceGaugeSourceFunction}, we can interpret all $\upsilon$ terms as lower order. 
The system given by \eqref{eq:NormalDerHIxi}, \eqref{eq:NormalDerchi},\eqref{eq:NormalFullPsi}, \eqref{eq:NormalDerkappa1}, \eqref{eq:NormalDerxi0}, \eqref{eq:NormalDerxi2} has the form
\begin{align}
\begin{bmatrix}
\nabla_{\tau}\underline{\slashed{H}}\\
\sqrt{3}\nabla_{\tau}\underline{H}\\
\sqrt{2}\nabla_{\tau}\underline{I\slashed{\xi}}\\
3\nabla_{\tau}\underline{I\xi}\\
\nabla_{\tau}\chi \\
\nabla_{\tau}\Psi \\
\nabla_{\tau}\kappa_{1}{}\\
\sqrt{2}\nabla_{\tau}\underline{\slashed{\xi}}\\
3\nabla_{\tau}\underline{\xi}
\end{bmatrix}={}&\begin{bmatrix}
\scurl_{2} & - \sqrt{3}\sdiv_{4} & 0 & 0 & 0 & 0 & 0 & 0 & 0\\
\sqrt{3}\stwist_{2} & -2\scurl_{4} & 0 & 0 & 0 & 0 & 0 & 0 & 0\\
0 & 0 & 0 & - \sqrt{2}\sdiv_{2} & 0 & 0 & 0 & 0 & 0\\
0 & 0 & \sqrt{2}\stwist_{0} & -2\scurl_{2} & 0 & 0 & 0 & 0 & 0\\
0 & 0 & 0 & 0 & 2\scurl_{2} & 0 & 0 & 0 & 0\\
0 & 0 & 0 & 0 & 0 & 2\scurl_{4} & 0 & 0 & 0\\
0 & 0 & 0 & 0 & 0 & 0 & 0 & 0 & 0\\
0 & 0 & 0 & 0 & 0 & 0 & 0 & 0 & - \sqrt{2}\sdiv_{2}\\
0 & 0 & 0 & 0 & 0 & 0 & 0 & \sqrt{2}\stwist_{0} & -2\scurl_{2}
\end{bmatrix}\begin{bmatrix}
\underline{\slashed{H}}\\
\sqrt{3}\underline{H}\\
\sqrt{2}\underline{I\slashed{\xi}}\\
3\underline{I\xi}\\
\chi \\
\Psi \\
\kappa_{1}{}\\
\sqrt{2}\underline{\slashed{\xi}}\\
3\underline{\xi}
\end{bmatrix}+\text{l.o.}
\end{align}
The matrix is anti-self adjoint up to lower order terms with respect to the Hermitian adjoint. Hence, the system is  first order symmetric hyperbolic.
\end{proof}

We can also study the space spinor version of the symmetric hyperbolic system in Corollary~\ref{cor:SmallEvol}. We still have the evolution equations \eqref{eq:NormalDerHIxi} and \eqref{eq:NormalDerchi}. The equation \eqref{eq:NormalFullPsi} from the Bianchi system can no longer be used because it contains equations for the non-small variable $\Psi_2$. Therefore, we need the space spinor version of equation \eqref{eq:CurlDgpsi}. After splitting it into different spin components, we find that it is equivalent to
\begin{subequations}
\begin{align}
\mathcal{P}^{2}_{4,0} \nabla_{\tau}\psi ={}&2\mathcal{P}^{2}_{4,0} \scurl_{4} \psi
 -  \frac{3 \Psi_{2}}{\kappa_{1}^2}\mathcal{P}^{2}_{4,0} \kappa \underset{4,0}{\overset{1,0}{\odot}}\underline{H},\\
\mathcal{P}^{1}_{4,0} \nabla_{\tau}\psi ={}&2\mathcal{P}^{1}_{4,0} \scurl_{4} \psi
 + \frac{3}{\kappa_{1}^2}\mathcal{P}^{1}_{4,0} \kappa \underset{2,0}{\overset{0,0}{\odot}}\kappa \underset{2,0}{\overset{1,0}{\odot}}\sdiv_{4} \psi
 -  \frac{3 \Psi_{2}}{\kappa_{1}^2}\mathcal{P}^{1}_{4,0} \kappa \underset{2,0}{\overset{0,0}{\odot}}\underline{\slashed{H}}
 -  \frac{6 \Psi_{2}}{\kappa_{1}^2}\mathcal{P}^{1}_{4,0} \kappa \underset{4,0}{\overset{1,0}{\odot}}\underline{H}.
\end{align}
\end{subequations}
It clearly gives a time evolution, but the interpretation of these equations as first order symmetric hyperbolic is complicated by the fact that a-priori, the projection operators will not commute with Hermitian conjugation. In general the Hermitian conjugate will mix GHP components unless the dyad is adapted to the normal $\tau^{AA'}$, but here, we have adapted the dyad to $\kappa_{AB}$ instead. Projections are needed to describe that the middle component of $\psi$ is zero. Therefore, we stick to the GHP component version in Corollary~\ref{cor:TotalEvol}.

Here, we have treated the foliation and the normal $\tau$ as a-priori given and therefore we have not included the second fundamental form in the evolution system. Alternatively, one can specify the acceleration $A$ and use the evolution equations for the extrinsic curvature
\begin{subequations}
\begin{align}
\nabla_{\tau}\slashed{\Omega}={}&2\sdiv_{2} A
 -  \tfrac{1}{3} \slashed{\Omega}^2
 - 2A\underset{2,0}{\overset{2,0}{\odot}}A
 -  \Omega \underset{4,0}{\overset{4,0}{\odot}}\Omega ,\\
\nabla_{\tau}\Omega ={}&
 - 2\scurl_{4} \Omega
 + 2\stwist_{2} A
 -  \tfrac{5}{3} \slashed{\Omega}\,\Omega
 - 2A\underset{2,0}{\overset{0,0}{\odot}}A
 + 6A\underset{4,0}{\overset{1,0}{\odot}}\Omega
 + 2\Omega \underset{4,0}{\overset{2,0}{\odot}}\Omega
 + 2\Psi.
\end{align}
\end{subequations}
Observe that we also have the constraint equation
\begin{align}
\sdiv_{4} \Omega={}&\tfrac{2}{3}\stwist_{0} \slashed{\Omega}.
\end{align}

All the equations above in this section holds on any slice of the foliation. On the initial surface, the equations \eqref{eq:InitialElliptic} and \eqref{eq:InitialNormal} take the form
\begin{align}
\sdiv_{4} \underline{H} - 2\Omega \underset{4,0}{\overset{3,0}{\odot}}\underline{H}={}&0,&
\underline{\slashed{H}}={}&0.
\end{align}

\section{Killing vectors}
To prove estimates, or other detailed analysis of the perturbed spacetime, we need good approximations of as many geometrically defined structures of the Kerr spacetime as we can. We have already seen that we can find good approximations of the Killing spinor $\kappa$, a principal dyad, the corresponding spin coefficients and curvature components. Now, we can focus on the Killing vectors and a radial coordinate.

Due to the relations \eqref{eq:Divxi} and \eqref{eq:Twistxi}, we see that the real part of $\xi$ is an approximation of the asymptotically timelike Killing vector.
The vector
\begin{align}
\zeta ={}&- \tfrac{9}{4} (\kappa_{1}^2 + \bar{\kappa}_{1'}^2)\xi
 + \tfrac{9}{2}\bar{\kappa}\underset{1,1}{\overset{0,1}{\odot}}\kappa \underset{1,1}{\overset{1,0}{\odot}}\xi ,
\label{eq:zetaDef}
\end{align}
is also an approximate Killing vector. As a direct consequence of Lemma~\ref{lem:DerNonsmallCovariant} we have the following lemma.
\begin{lemma}\label{lem:zetaDer}
The vector $\zeta$ satisfies
\begin{subequations}
\begin{align}
\sDiv_{1,1} \zeta ={}&\tfrac{9}{4}\xi \underset{1,1}{\overset{1,1}{\odot}}\kappa \underset{3,1}{\overset{2,0}{\odot}}H
 + \tfrac{9}{4}\xi \underset{1,1}{\overset{1,1}{\odot}}\bar{\kappa}\underset{1,3}{\overset{0,2}{\odot}}\bar{H}
 - 3\xi \underset{1,1}{\overset{1,1}{\odot}}\bar{\kappa}\underset{1,1}{\overset{0,1}{\odot}}I\xi
 + 9\xi \underset{1,1}{\overset{1,1}{\odot}}\kappa \underset{1,1}{\overset{1,0}{\odot}}I\xi
 + \tfrac{27}{4}\bar{\kappa}\underset{0,2}{\overset{0,2}{\odot}}\kappa \underset{2,2}{\overset{2,0}{\odot}}\gamma  \label{eq:DivZetaEq1a},\\
\sCurl_{1,1} \zeta ={}&\tfrac{27}{16} (\kappa_{1}^2 + \bar{\kappa}_{1'}^2)\upsilon
 -  \tfrac{27}{8} (\kappa_{1}^2 + \bar{\kappa}_{1'}^2)\kappa \underset{4,0}{\overset{2,0}{\odot}}\psi
 + \bigl(\tfrac{27}{4} \Psi_{2} (\kappa_{1}^2 + \bar{\kappa}_{1'}^2) -  \tfrac{27}{2} \bar\Psi_{2} \bar{\kappa}_{1'}^2\bigr)\kappa\nonumber\\
& -  \tfrac{9}{4}\kappa \underset{4,0}{\overset{2,0}{\odot}}\xi \underset{3,1}{\overset{0,1}{\odot}}H
 -  \tfrac{9}{8}\kappa \underset{2,0}{\overset{1,0}{\odot}}\xi \underset{3,1}{\overset{1,1}{\odot}}H
 + \tfrac{3}{2}\kappa \underset{0,0}{\overset{0,0}{\odot}}\xi \underset{1,1}{\overset{1,1}{\odot}}\xi
 -  \tfrac{9}{4}\bar{\kappa}\underset{2,2}{\overset{0,2}{\odot}}\xi \underset{1,3}{\overset{0,1}{\odot}}\bar{H}
 - 3\bar{\kappa}\underset{2,2}{\overset{0,2}{\odot}}\xi \underset{1,1}{\overset{0,0}{\odot}}I\xi\nonumber\\
& + 3\bar{\kappa}\underset{2,2}{\overset{0,2}{\odot}}\xi \underset{1,1}{\overset{0,0}{\odot}}\xi
 + 9\kappa \underset{2,0}{\overset{1,0}{\odot}}\xi \underset{1,1}{\overset{0,1}{\odot}}I\xi
 -  \tfrac{9}{2}\kappa \underset{0,0}{\overset{0,0}{\odot}}\xi \underset{1,1}{\overset{1,1}{\odot}}I\xi
 + \tfrac{9}{2}\bar{\kappa}\underset{2,2}{\overset{0,2}{\odot}}\xi \underset{3,1}{\overset{1,0}{\odot}}H
 -  \tfrac{27}{4}\kappa \underset{2,0}{\overset{1,0}{\odot}}\bar{\kappa}\underset{2,2}{\overset{0,2}{\odot}}\gamma\nonumber\\
& + \tfrac{27}{4}\kappa \underset{0,0}{\overset{0,0}{\odot}}\bar{\kappa}\underset{0,2}{\overset{0,2}{\odot}}\bar{\chi}
 + \tfrac{27}{16}\kappa \underset{0,0}{\overset{0,0}{\odot}}\bar{\kappa}\underset{0,2}{\overset{0,2}{\odot}}\bar{\upsilon} \label{eq:CurlZetaEq1b},\\
\sCurlDagger_{1,1} \zeta ={}&\tfrac{27}{4} (\kappa_{1}^2 + \bar{\kappa}_{1'}^2)\bar{\chi}
 + \tfrac{27}{16} (\kappa_{1}^2 + \bar{\kappa}_{1'}^2)\bar{\upsilon}
 -  \tfrac{27}{8} (\kappa_{1}^2 + \bar{\kappa}_{1'}^2)\bar{\kappa}\underset{0,4}{\overset{0,2}{\odot}}\bar{\psi}\nonumber\\
& + \bigl(\tfrac{27}{4} \bar\Psi_{2} (\kappa_{1}^2 + \bar{\kappa}_{1'}^2) -  \tfrac{27}{2} \Psi_{2} \kappa_{1}^2\bigr)\bar{\kappa}
 -  \tfrac{9}{4}\kappa \underset{2,2}{\overset{2,0}{\odot}}\xi \underset{3,1}{\overset{1,0}{\odot}}H
 + 3\kappa \underset{2,2}{\overset{2,0}{\odot}}\xi \underset{1,1}{\overset{0,0}{\odot}}\xi
 -  \tfrac{9}{4}\bar{\kappa}\underset{0,4}{\overset{0,2}{\odot}}\xi \underset{1,3}{\overset{1,0}{\odot}}\bar{H}\nonumber\\
& -  \tfrac{9}{8}\bar{\kappa}\underset{0,2}{\overset{0,1}{\odot}}\xi \underset{1,3}{\overset{1,1}{\odot}}\bar{H}
 + 3\bar{\kappa}\underset{0,2}{\overset{0,1}{\odot}}\xi \underset{1,1}{\overset{1,0}{\odot}}I\xi
 + \tfrac{3}{2}\bar{\kappa}\underset{0,0}{\overset{0,0}{\odot}}\xi \underset{1,1}{\overset{1,1}{\odot}}I\xi
 + \tfrac{3}{2}\bar{\kappa}\underset{0,0}{\overset{0,0}{\odot}}\xi \underset{1,1}{\overset{1,1}{\odot}}\xi
 + \tfrac{9}{2}\kappa \underset{2,2}{\overset{2,0}{\odot}}\xi \underset{1,3}{\overset{0,1}{\odot}}\bar{H}\nonumber\\
& - 3\kappa \underset{2,2}{\overset{2,0}{\odot}}\xi \underset{1,1}{\overset{0,0}{\odot}}I\xi
 -  \tfrac{27}{4}\bar{\kappa}\underset{0,2}{\overset{0,1}{\odot}}\kappa \underset{2,2}{\overset{2,0}{\odot}}\gamma
 + \tfrac{27}{16}\bar{\kappa}\underset{0,0}{\overset{0,0}{\odot}}\kappa \underset{2,0}{\overset{2,0}{\odot}}\upsilon  \label{eq:CurlDgZetaEq1a},\\
\sTwist_{1,1} \zeta ={}&- \tfrac{27}{8} (\kappa_{1}^2 + \bar{\kappa}_{1'}^2)\gamma
 + \tfrac{9}{4}\xi \underset{1,1}{\overset{0,0}{\odot}}\kappa \underset{3,1}{\overset{2,0}{\odot}}H
 + \tfrac{9}{4}\xi \underset{1,1}{\overset{0,0}{\odot}}\bar{\kappa}\underset{1,3}{\overset{0,2}{\odot}}\bar{H}
 - 3\xi \underset{1,1}{\overset{0,0}{\odot}}\bar{\kappa}\underset{1,1}{\overset{0,1}{\odot}}I\xi
 + \tfrac{9}{2}\xi \underset{3,3}{\overset{1,1}{\odot}}\kappa \underset{1,3}{\overset{0,0}{\odot}}\bar{H}\nonumber\\
& -  \tfrac{3}{2}\xi \underset{1,3}{\overset{0,1}{\odot}}\kappa \underset{1,3}{\overset{1,0}{\odot}}\bar{H}
 - 3\xi \underset{3,1}{\overset{1,0}{\odot}}\kappa \underset{1,1}{\overset{0,0}{\odot}}I\xi
 + \xi \underset{1,1}{\overset{0,0}{\odot}}\kappa \underset{1,1}{\overset{1,0}{\odot}}I\xi
 + \tfrac{9}{2}\xi \underset{3,3}{\overset{1,1}{\odot}}\bar{\kappa}\underset{3,1}{\overset{0,0}{\odot}}H
 -  \tfrac{3}{2}\xi \underset{3,1}{\overset{1,0}{\odot}}\bar{\kappa}\underset{3,1}{\overset{0,1}{\odot}}H\nonumber\\
& + \tfrac{27}{4}\bar{\kappa}\underset{2,2}{\overset{0,1}{\odot}}\kappa \underset{2,2}{\overset{1,0}{\odot}}\gamma
 -  \tfrac{27}{4}\bar{\kappa}\underset{2,2}{\overset{0,1}{\odot}}\kappa \underset{0,2}{\overset{0,0}{\odot}}\bar{\chi}
 -  \tfrac{27}{16}\bar{\kappa}\underset{2,2}{\overset{0,1}{\odot}}\kappa \underset{0,2}{\overset{0,0}{\odot}}\bar{\upsilon}
 + \tfrac{27}{8}\bar{\kappa}\underset{2,2}{\overset{0,1}{\odot}}\bar{\kappa}\underset{2,4}{\overset{0,2}{\odot}}\kappa \underset{0,4}{\overset{0,0}{\odot}}\bar{\psi}
 -  \tfrac{27}{16}\bar{\kappa}\underset{2,0}{\overset{0,0}{\odot}}\kappa \underset{2,0}{\overset{1,0}{\odot}}\upsilon\nonumber\\
& + \tfrac{27}{8}\bar{\kappa}\underset{2,0}{\overset{0,0}{\odot}}\kappa \underset{2,0}{\overset{1,0}{\odot}}\kappa \underset{4,0}{\overset{2,0}{\odot}}\psi  \label{eq:TwistZetaEq1a}.
\end{align}
\end{subequations}
\end{lemma}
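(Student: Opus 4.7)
The plan is to apply each of the four operators $\sDiv_{1,1}$, $\sCurl_{1,1}$, $\sCurlDagger_{1,1}$, $\sTwist_{1,1}$ directly to the definition \eqref{eq:zetaDef} of $\zeta$ and then eliminate every derivative that appears by substituting the formulas from Lemma~\ref{lem:DerNonsmallCovariant}, together with the defining identities $\sTwist_{2,0}\kappa=H$ and $\sCurlDagger_{2,0}\kappa=\xi$ from Definition~\ref{def:HIxi}, their complex conjugates, and the scalar gradient \eqref{eq:Twistkappa1}. The statement preceding the lemma advertises exactly this strategy, so no new input beyond Lemma~\ref{lem:DerNonsmallCovariant} and the Leibniz rule is required.

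Write $\zeta$ as the sum of the scalar-times-vector term $-\tfrac{9}{4}(\kappa_{1}^2+\bar{\kappa}_{1'}^2)\xi$ and the triple symmetric product $\tfrac{9}{2}\bar{\kappa}\odot\kappa\odot\xi$ with the contractions specified in \eqref{eq:zetaDef}. The Leibniz rule from \cite[Lem~10]{Aksteiner:2022fmf} distributes each of the four operators across the two pieces. For the scalar factor I expand $\sTwist_{0,0}(\kappa_{1}^2+\bar{\kappa}_{1'}^2)=2\kappa_{1}\sTwist_{0,0}\kappa_{1}+2\bar{\kappa}_{1'}\sTwist_{0,0}\bar{\kappa}_{1'}$ and invoke \eqref{eq:Twistkappa1} together with its complex conjugate. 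For the triple product, the derivatives that land on $\kappa$ and $\bar{\kappa}$ are converted to $H$, $\bar{H}$, $\xi$, $\bar{\xi}$ via Definition~\ref{def:HIxi}, while the derivatives that land on $\xi$ are replaced using the appropriate equation of Lemma~\ref{lem:DerNonsmallCovariant}: namely \eqref{eq:Divxi} when computing $\sDiv_{1,1}\zeta$, \eqref{eq:Curlxi} when computing $\sCurl_{1,1}\zeta$, \eqref{eq:CurlDgxi} when computing $\sCurlDagger_{1,1}\zeta$, and \eqref{eq:Twistxi} when computing $\sTwist_{1,1}\zeta$.

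After all substitutions, a final cosmetic step rewrites each occurrence of $\bar{\xi}$ through $I\xi=\tfrac{1}{2}(\xi-\bar{\xi})$ so that the non-small variable $\xi$ and the small variable $I\xi$ appear separately on the right-hand sides, exactly as they do in \eqref{eq:DivZetaEq1a}--\eqref{eq:TwistZetaEq1a}. The coefficient combinations of the form $\kappa_{1}^2+\bar{\kappa}_{1'}^2$ and $\bar{\Psi}_{2}\bar{\kappa}_{1'}^2$ in \eqref{eq:CurlZetaEq1b} and \eqref{eq:CurlDgZetaEq1a} arise naturally when the scalar-term contribution and the triple-product contribution are combined, with some cross terms cancelling via the identity $\kappa_{1}^2=-\tfrac{1}{2}\kappa\odot\kappa$ from \eqref{eq:kappa1Covariant}.

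The main obstacle is not any conceptual subtlety but the sheer combinatorial bookkeeping: the two Leibniz expansions combined with four different outer operators produce a large number of $\odot$-product terms with different contraction patterns $(i,j)$, and bringing them into the canonical monomial form stated in the lemma requires repeated use of the reordering and associativity identities for the symmetric product, together with the algebraic identities between $\kappa\odot\kappa$ and $\kappa_{1}^2$. In practice the calculation is carried out symbolically with the \emph{SymSpin} and \emph{SymManipulator} packages mentioned in the introduction rather than by hand.
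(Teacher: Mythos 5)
Your proposal matches the paper's approach, which simply states that the lemma is ``a direct consequence of Lemma~\ref{lem:DerNonsmallCovariant}'' and evidently relies on exactly the Leibniz-plus-substitution calculation you describe, carried out by computer algebra.

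One small imprecision: you write that only one equation of Lemma~\ref{lem:DerNonsmallCovariant} is needed per outer operator (e.g.\ only \eqref{eq:Divxi} for $\sDiv_{1,1}\zeta$). In fact the Leibniz rule from \cite[Lem~10]{Aksteiner:2022fmf} mixes the four fundamental operators when a contraction is present, so computing any single one of $\sDiv_{1,1}\zeta$, $\sCurl_{1,1}\zeta$, $\sCurlDagger_{1,1}\zeta$, $\sTwist_{1,1}\zeta$ requires all four of $\sDiv_{1,1}\xi$, $\sCurl_{1,1}\xi$, $\sCurlDagger_{1,1}\xi$, $\sTwist_{1,1}\xi$ from Lemma~\ref{lem:DerNonsmallCovariant}. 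This is visible in the statement itself: \eqref{eq:DivZetaEq1a} already contains $\gamma$, which enters through $\sTwist_{1,1}\xi = \tfrac{3}{2}\gamma$, not through $\sDiv_{1,1}\xi$. Otherwise your account of the ingredients (Definition~\ref{def:HIxi} for derivatives of $\kappa$ and $\bar{\kappa}$, equation \eqref{eq:Twistkappa1} and its conjugate for the scalar prefactor, and the rewriting of $\bar{\xi}$ via $I\xi$) is exactly right, and the observation that \eqref{eq:kappa1Covariant} is used to collapse $\kappa\odot\kappa$ terms is a correct detail of the bookkeeping.
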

Note that the right hand sides of \eqref{eq:DivZetaEq1a} and \eqref{eq:TwistZetaEq1a} are linear in the small variables $H$, $I\xi$, $\chi$, $\upsilon$, $\psi$, $\gamma$, $\Upsilon$. Hence, $\zeta$ is a real Killing vector on Kerr, and an approximate Killing vector in general.
In Boyer-Lindquist coordinates on Kerr $\zeta^{a}= a^2 (\partial_{t})^{a} + a (\partial_{\phi})^{a}$.

In general $\xi$ and $\zeta$ are not real, but the real parts will give good approximations of the two Killing vectors in Kerr. Observe that in Schwarzschild $\zeta=0$, so we can only extract one Killing vector in that case.

In GHP notation, the components of $\zeta$ are
\begin{subequations}
\begin{align}
\zeta_{00'}={}&\tfrac{9}{4} \kappa_{1} (\kappa_{1} -  \bar{\kappa}_{1'})^2 (K_{\rho}{} - 3 \rho),&
\zeta_{01'}={}&\tfrac{9}{4} \kappa_{1} (\kappa_{1} + \bar{\kappa}_{1'})^2 (K_{\tau}{} - 3 \tau),\\
\zeta_{10'}={}&- \tfrac{9}{4} \kappa_{1} (\kappa_{1} + \bar{\kappa}_{1'})^2 (K_{\tau '}{} - 3 \tau '),&
\zeta_{11'}={}&- \tfrac{9}{4} \kappa_{1} (\kappa_{1} -  \bar{\kappa}_{1'})^2 (K_{\rho '}{} - 3 \rho ').
\end{align}
\end{subequations}

In Kerr the Boyer-Lindquist radial coordinate can be reconstructed from $\kappa_1$ via 
\begin{align}
r={}&- \tfrac{3}{2} (\kappa_{1} + \bar{\kappa}_{1'}).\label{eq:rDef}
\end{align}
We can use this as a radial coordinate also on the perturbed spacetime. That this coordinate has all the properties needed for the applications will have to be established before it is used.

\section{Approximate constants}
In many cases, good approximations of the mass and angular momentum parameters are needed. Such approximations are often difficult to find due to the non-local nature of the mass and angular momentum. Here, we find local expressions in terms of our non-small variables $\kappa_{1}$, $\Psi_{2}$, $\xi$ and $\zeta$. Observe however that solving \eqref{eq:InitialElliptic} on the initial surface is a non-local operation.
\begin{definition}
Given an approximate Killing spinor $\kappa$ and the corresponding $\xi$ given by \eqref{eq:xiDef} and $\zeta$ given by \eqref{eq:zetaDef}, we define the \emph{approximate constants}
\begin{subequations}
\begin{align}
\mathit{c}_{M}{}={}&\tfrac{27}{2} \Psi_{2} \kappa_{1}^3
 + \tfrac{27}{2} \bar\Psi_{2} \bar{\kappa}_{1'}^3,\\
\mathit{c}_{N}{}={}&\tfrac{27}{2} \Psi_{2} \kappa_{1}^3
 -  \tfrac{27}{2} \bar\Psi_{2} \bar{\kappa}_{1'}^3,\\
\mathit{c}_{1}{}={}&\xi \underset{1,1}{\overset{1,1}{\odot}}\bar{\xi}
 - 9 \kappa_{1}^2\Psi_{2}
 - 9 \bar{\kappa}_{1'}^2\bar\Psi_{2},\\
\mathit{c}_{a^{2}{}}{}={}&\tfrac{1}{2}\xi \underset{1,1}{\overset{1,1}{\odot}}\zeta
 + \tfrac{1}{2}\bar{\xi}\underset{1,1}{\overset{1,1}{\odot}}\bar{\zeta}
 -  \tfrac{81}{4} (\kappa_{1}^2 -  \bar{\kappa}_{1'}^2)\kappa_{1}^2\Psi_{2}
 + \tfrac{81}{4} (\kappa_{1}^2 -  \bar{\kappa}_{1'}^2) \bar{\kappa}_{1'}^2\bar\Psi_{2} .
\end{align}
\end{subequations}
\end{definition}

\begin{lemma}\label{lem:GradientApproxConst}
The approximate constants satisfy
\begin{subequations}
\begin{align}
\sTwist_{0,0} \mathit{c}_{M}{}={}&\tfrac{27}{2} \kappa_{1}^2\Upsilon
 + \tfrac{27}{2} \bar{\kappa}_{1'}^2\bar{\Upsilon},\label{eq:TwistcM}\\
\sTwist_{0,0} \mathit{c}_{N}{}={}&\tfrac{27}{2} \kappa_{1}^2\Upsilon
 -  \tfrac{27}{2} \bar{\kappa}_{1'}^2\bar{\Upsilon},\label{eq:TwistcN}\\
\sTwist_{0,0} \mathit{c}_{1}{}={}&- \tfrac{3}{2}\xi \underset{2,0}{\overset{1,0}{\odot}}\chi
 -  \tfrac{3}{2}\bar{\xi}\underset{0,2}{\overset{0,1}{\odot}}\bar{\chi}
 -  \tfrac{3}{8}\xi \underset{0,2}{\overset{0,1}{\odot}}\bar{\upsilon}
 -  \tfrac{3}{8}\xi \underset{2,0}{\overset{1,0}{\odot}}\upsilon
 -  \tfrac{3}{8}\bar{\xi}\underset{0,2}{\overset{0,1}{\odot}}\bar{\upsilon}
 -  \tfrac{3}{8}\bar{\xi}\underset{2,0}{\overset{1,0}{\odot}}\upsilon
 + \tfrac{3}{4}\xi \underset{0,2}{\overset{0,1}{\odot}}\bar{\kappa}\underset{0,4}{\overset{0,2}{\odot}}\bar{\psi}\nonumber\\
& + \tfrac{3}{4}\xi \underset{2,0}{\overset{1,0}{\odot}}\kappa \underset{4,0}{\overset{2,0}{\odot}}\psi
 + \tfrac{3}{4}\bar{\xi}\underset{0,2}{\overset{0,1}{\odot}}\bar{\kappa}\underset{0,4}{\overset{0,2}{\odot}}\bar{\psi}
 + \tfrac{3}{4}\bar{\xi}\underset{2,0}{\overset{1,0}{\odot}}\kappa \underset{4,0}{\overset{2,0}{\odot}}\psi
 + \tfrac{3}{2}\xi \underset{2,2}{\overset{1,1}{\odot}}\bar{\gamma}
 + \tfrac{3}{2}\bar{\xi}\underset{2,2}{\overset{1,1}{\odot}}\gamma
 -  \tfrac{9}{2} \Psi_{2}\kappa \underset{3,1}{\overset{2,0}{\odot}}H\nonumber\\
& - 3 \Psi_{2}\kappa \underset{1,1}{\overset{1,0}{\odot}}I\xi
 -  \tfrac{9}{2} \bar\Psi_{2}\bar{\kappa}\underset{1,3}{\overset{0,2}{\odot}}\bar{H}
 + 3 \bar\Psi_{2}\bar{\kappa}\underset{1,1}{\overset{0,1}{\odot}}I\xi
 - 9 \kappa_{1}\Upsilon
 - 9 \bar{\kappa}_{1'}\bar{\Upsilon},\label{eq:Twistc1}\\
\sTwist_{0,0} \mathit{c}_{a^{2}{}}{}={}&-3I\xi \underset{2,0}{\overset{1,0}{\odot}}\bar{\kappa}\underset{2,2}{\overset{0,2}{\odot}}\bar{\xi}\underset{1,1}{\overset{0,0}{\odot}}\bar{\xi}
 -  \tfrac{3}{2}I\xi \underset{2,0}{\overset{1,0}{\odot}}\kappa \underset{0,0}{\overset{0,0}{\odot}}\bar{\xi}\underset{1,1}{\overset{1,1}{\odot}}\bar{\xi}
 -  \tfrac{3}{2}\zeta \underset{0,2}{\overset{0,1}{\odot}}\bar{\chi}
 -  \tfrac{3}{2}\bar{\zeta}\underset{2,0}{\overset{1,0}{\odot}}\chi
 -  \tfrac{3}{8}\zeta \underset{0,2}{\overset{0,1}{\odot}}\bar{\upsilon}
 -  \tfrac{3}{8}\zeta \underset{2,0}{\overset{1,0}{\odot}}\upsilon\nonumber\\
& -  \tfrac{3}{8}\bar{\zeta}\underset{0,2}{\overset{0,1}{\odot}}\bar{\upsilon}
 -  \tfrac{3}{8}\bar{\zeta}\underset{2,0}{\overset{1,0}{\odot}}\upsilon
 + \tfrac{3}{4}\zeta \underset{0,2}{\overset{0,1}{\odot}}\bar{\kappa}\underset{0,4}{\overset{0,2}{\odot}}\bar{\psi}
 + \tfrac{3}{4}\zeta \underset{2,0}{\overset{1,0}{\odot}}\kappa \underset{4,0}{\overset{2,0}{\odot}}\psi
 + \tfrac{3}{4}\bar{\zeta}\underset{0,2}{\overset{0,1}{\odot}}\bar{\kappa}\underset{0,4}{\overset{0,2}{\odot}}\bar{\psi}
 + \tfrac{3}{4}\bar{\zeta}\underset{2,0}{\overset{1,0}{\odot}}\kappa \underset{4,0}{\overset{2,0}{\odot}}\psi\nonumber\\
& + \tfrac{9}{8}H\underset{2,0}{\overset{2,0}{\odot}}\kappa \underset{0,0}{\overset{0,0}{\odot}}\xi \underset{1,1}{\overset{1,1}{\odot}}\xi
 + \tfrac{9}{8}H\underset{2,0}{\overset{2,0}{\odot}}\kappa \underset{0,0}{\overset{0,0}{\odot}}\bar{\xi}\underset{1,1}{\overset{1,1}{\odot}}\bar{\xi}
 + \tfrac{9}{8}\bar{H}\underset{0,2}{\overset{0,2}{\odot}}\bar{\kappa}\underset{0,0}{\overset{0,0}{\odot}}\xi \underset{1,1}{\overset{1,1}{\odot}}\xi
 + \tfrac{9}{8}\bar{H}\underset{0,2}{\overset{0,2}{\odot}}\bar{\kappa}\underset{0,0}{\overset{0,0}{\odot}}\bar{\xi}\underset{1,1}{\overset{1,1}{\odot}}\bar{\xi}\nonumber\\
& + \tfrac{3}{2}I\xi \underset{0,2}{\overset{0,1}{\odot}}\bar{\kappa}\underset{0,0}{\overset{0,0}{\odot}}\xi \underset{1,1}{\overset{1,1}{\odot}}\xi
 + \tfrac{3}{2}\zeta \underset{2,2}{\overset{1,1}{\odot}}\gamma
 + \tfrac{3}{2}\bar{\zeta}\underset{2,2}{\overset{1,1}{\odot}}\bar{\gamma}
 + \tfrac{9}{4}H\underset{2,0}{\overset{2,0}{\odot}}\bar{\kappa}\underset{2,2}{\overset{0,2}{\odot}}\xi \underset{1,1}{\overset{0,0}{\odot}}\xi
 + \tfrac{9}{4}H\underset{2,0}{\overset{2,0}{\odot}}\bar{\kappa}\underset{2,2}{\overset{0,2}{\odot}}\bar{\xi}\underset{1,1}{\overset{0,0}{\odot}}\bar{\xi}\nonumber\\
& + \tfrac{9}{4}\bar{H}\underset{0,2}{\overset{0,2}{\odot}}\kappa \underset{2,2}{\overset{2,0}{\odot}}\xi \underset{1,1}{\overset{0,0}{\odot}}\xi
 + \tfrac{9}{4}\bar{H}\underset{0,2}{\overset{0,2}{\odot}}\kappa \underset{2,2}{\overset{2,0}{\odot}}\bar{\xi}\underset{1,1}{\overset{0,0}{\odot}}\bar{\xi}
 + 3I\xi \underset{0,2}{\overset{0,1}{\odot}}\kappa \underset{2,2}{\overset{2,0}{\odot}}\xi \underset{1,1}{\overset{0,0}{\odot}}\xi
 + \tfrac{81}{4} (\kappa_{1}^2 -  \bar{\kappa}_{1'}^2) \bar{\kappa}_{1'}\bar{\Upsilon}\nonumber\\
& -  \tfrac{81}{4} (\kappa_{1}^3 -  \kappa_{1} \bar{\kappa}_{1'}^2)\Upsilon
 + \tfrac{27}{4} \bigl(\Psi_{2} (\kappa_{1}^2 + \bar{\kappa}_{1'}^2) - 2 \bar\Psi_{2} \bar{\kappa}_{1'}^2\bigr)\kappa \underset{1,1}{\overset{1,0}{\odot}}I\xi\nonumber\\
& + \tfrac{81}{8} \bigl(\Psi_{2} (\kappa_{1}^2 + \bar{\kappa}_{1'}^2) - 2 \bar\Psi_{2} \bar{\kappa}_{1'}^2\bigr)\kappa \underset{3,1}{\overset{2,0}{\odot}}H
 -  \tfrac{27}{4} \bigl(\bar\Psi_{2} (\kappa_{1}^2 + \bar{\kappa}_{1'}^2) - 2 \Psi_{2} \kappa_{1}^2\bigr)\bar{\kappa}\underset{1,1}{\overset{0,1}{\odot}}I\xi\nonumber\\
& + \tfrac{81}{8} \bigl(\bar\Psi_{2} (\kappa_{1}^2 + \bar{\kappa}_{1'}^2) - 2 \Psi_{2} \kappa_{1}^2\bigr)\bar{\kappa}\underset{1,3}{\overset{0,2}{\odot}}\bar{H}.\label{eq:TwistcA}
\end{align}
\end{subequations}
\end{lemma}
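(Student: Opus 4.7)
The plan is to differentiate each defining expression with respect to $\sTwist_{0,0}$, expand by the Leibniz rule (\cite[Lem~10]{Aksteiner:2022fmf}) for the symmetric product, and then substitute the derivative formulas already established in Lemma~\ref{lem:DerNonsmallCovariant} and Lemma~\ref{lem:zetaDer}. The right-hand sides should then collapse to the listed expressions after the non-trivial cancellations detailed below.

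First I would do \eqref{eq:TwistcM} and \eqref{eq:TwistcN}, which serve as a warm-up and explain the appearance of $\Upsilon$. Applying Leibniz one has
\begin{align*}
\sTwist_{0,0}(\Psi_{2}\kappa_{1}^3)={}&(\sTwist_{0,0}\Psi_{2})\kappa_{1}^3+3\Psi_{2}\kappa_{1}^2\sTwist_{0,0}\kappa_{1}.
\end{align*}
Substituting \eqref{eq:Twistkappa1} and \eqref{eq:TwistPsi2} from Lemma~\ref{lem:DerNonsmallCovariant}, the two contributions proportional to $\kappa\underset{3,1}{\overset{2,0}{\odot}}H$ cancel, as do the two proportional to $\kappa\underset{1,1}{\overset{1,0}{\odot}}\xi$, leaving only $\kappa_{1}^2\Upsilon$. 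Summing with the complex-conjugate counterpart and taking the $\pm$ combinations yields \eqref{eq:TwistcM} and \eqref{eq:TwistcN}.

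Next I would tackle \eqref{eq:Twistc1}. Apply $\sTwist_{0,0}$ to the scalar $\xi\underset{1,1}{\overset{1,1}{\odot}}\bar{\xi}$ via Leibniz, and decompose $\nabla_{AA'}\xi_{BB'}$ into its four irreducible pieces $\sDiv_{1,1}\xi$, $\sCurl_{1,1}\xi$, $\sCurlDagger_{1,1}\xi$ and $\sTwist_{1,1}\xi$. Substituting \eqref{eq:Divxi}--\eqref{eq:Twistxi} eliminates every derivative of $\xi$ in favour of $\gamma$, $\chi$, $\upsilon$, $\psi$, $\Psi_{2}$ and $\kappa$. For the remaining terms $-9\kappa_{1}^2\Psi_{2}-9\bar{\kappa}_{1'}^2\bar{\Psi}_{2}$ I use the $c_{M}$-computation above to get $-9\kappa_{1}\Upsilon-9\bar{\kappa}_{1'}\bar{\Upsilon}$ on the nose; simple bookkeeping of the $\Psi_{2}\kappa\underset{3,1}{\overset{2,0}{\odot}}H$ and $\Psi_{2}\kappa\underset{1,1}{\overset{1,0}{\odot}}\xi$ terms (and the conjugates) produced by \eqref{eq:Curlxi} and \eqref{eq:CurlDgxi} then gives exactly the curvature-small terms listed in \eqref{eq:Twistc1}.

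Finally, \eqref{eq:TwistcA} is the most involved. I would treat $\xi\underset{1,1}{\overset{1,1}{\odot}}\zeta$ analogously, differentiating both factors by Leibniz and then substituting $\sDiv\zeta$, $\sCurl\zeta$, $\sCurlDagger\zeta$, $\sTwist\zeta$ from Lemma~\ref{lem:zetaDer} together with \eqref{eq:Divxi}--\eqref{eq:Twistxi} for the $\xi$ side. The $\Psi_{2}$--$\kappa_{1}$ combinations coming from the $-\tfrac{81}{4}(\kappa_{1}^2-\bar{\kappa}_{1'}^2)\kappa_{1}^2\Psi_{2}$ piece are again handled by the Leibniz rule and the substitutions \eqref{eq:Twistkappa1}, \eqref{eq:TwistPsi2}; these mix with the non-small parts produced by $\sCurl\zeta$ and $\sCurlDagger\zeta$ to leave precisely the $\Upsilon$ terms $\tfrac{81}{4}(\kappa_{1}^2-\bar{\kappa}_{1'}^2)\bar{\kappa}_{1'}\bar{\Upsilon}-\tfrac{81}{4}(\kappa_{1}^3-\kappa_{1}\bar{\kappa}_{1'}^2)\Upsilon$. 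Taking the real combination with the complex conjugate term in $c_{a^2}$ kills any purely-non-small residue.

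The main obstacle is purely organisational: the right-hand side of \eqref{eq:TwistcA} contains roughly thirty symmetric-product monomials, and the cancellation that removes all non-small contributions relies on precise numerical coefficients appearing in \eqref{eq:zetaDef}, in Lemma~\ref{lem:DerNonsmallCovariant} and in Lemma~\ref{lem:zetaDer} simultaneously. In practice I expect this to be verified most safely with the \emph{xAct}/\emph{SymSpin} machinery already used in the paper, checking that every term proportional to $\Psi_{2}\kappa_{1}^k$ or $\bar{\Psi}_{2}\bar{\kappa}_{1'}^k$ (i.e. every non-small monomial) cancels modulo the listed small-variable terms.
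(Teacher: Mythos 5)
Your overall strategy---apply the Leibniz rule for $\sTwist_{0,0}$, substitute the derivative formulas from Lemma~\ref{lem:DerNonsmallCovariant} (and, for \eqref{eq:TwistcA}, Lemma~\ref{lem:zetaDer}), and convert $\xi$ to $I\xi$ via the definition---is precisely the paper's (one-line) proof, and your warm-up on $c_M$ is correct. However, one step in your treatment of $c_1$ is inaccurate. You claim the ``$c_M$-computation'' delivers $\sTwist_{0,0}(-9\kappa_1^2\Psi_2-9\bar\kappa_{1'}^2\bar\Psi_2)=-9\kappa_1\Upsilon-9\bar\kappa_{1'}\bar\Upsilon$ ``on the nose.'' It does not: the clean cancellation you observed relied on the \emph{cubic} power of $\kappa_1$, with coefficients $3\Psi_2\kappa_1^2\sTwist_{0,0}\kappa_1$ exactly killing the small terms in $\kappa_1^3\sTwist_{0,0}\Psi_2$. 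For $\kappa_1^2\Psi_2$ the analogous combination is $2\Psi_2\kappa_1\sTwist_{0,0}\kappa_1 + \kappa_1^2\sTwist_{0,0}\Psi_2 = \kappa_1\Upsilon + \tfrac{1}{2}\Psi_2\kappa \underset{3,1}{\overset{2,0}{\odot}}H + \tfrac{1}{3}\Psi_2\kappa \underset{1,1}{\overset{1,0}{\odot}}\xi$, so residual small terms survive. These residues (together with contributions from differentiating $\xi \underset{1,1}{\overset{1,1}{\odot}}\bar\xi$ and the substitution $\xi=\bar\xi+2I\xi$) are what produce the $-\tfrac{9}{2}\Psi_2\kappa\underset{3,1}{\overset{2,0}{\odot}}H$ and $-3\Psi_2\kappa\underset{1,1}{\overset{1,0}{\odot}}I\xi$ terms in \eqref{eq:Twistc1}. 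You attribute the small contributions entirely to \eqref{eq:Curlxi} and \eqref{eq:CurlDgxi}, overlooking the ones generated by the $\kappa_1^2\Psi_2$ Leibniz expansion itself; the same caveat applies to the $\kappa_1^4\Psi_2$-type terms hidden in $c_{a^2}$. Once this bookkeeping is corrected your argument matches the paper exactly.
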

\begin{remark}\phantom{a}\\\vspace{-2ex}
\begin{itemize}
\item Note that the right hand sides are linear in the small variables $H, I\xi, \chi, \upsilon, \psi, \gamma, \Upsilon$. Hence, $\mathit{c}_{M}, \mathit{c}_{N}, \mathit{c}_{1}, \mathit{c}_{a^2}$ are constants on Kerr.
\item Now, when we have a full set of equations for all derivatives of the approximate constants, we can work with operators with coefficients depending on $\mathit{c}_{M}$, $\mathit{c}_{N}$, $\mathit{c}_{1}$ and $\mathit{c}_{a^{2}{}}$. We can also allow to rescale variables with such coefficients.
\end{itemize}
\end{remark}
\begin{proof}
The equations \eqref{eq:TwistcM}-\eqref{eq:Twistc1} follows from Lemma~\ref{lem:DerNonsmallCovariant} and the definition of $I\xi$. For \eqref{eq:TwistcA} also Lemma~\ref{lem:zetaDer} is need.
\end{proof}

In GHP notation, the approximate constants $\mathit{c}_{1}$ and $\mathit{c}_{a^{2}{}}$ are
\begin{subequations}
\begin{align}
\mathit{c}_{1}{}={}&-9 \Psi_{2} \kappa_{1}^2
 - 9 \bar\Psi_{2} \bar{\kappa}_{1'}^2
 -  \kappa_{1} \bar{\kappa}_{1'} \bigl((\overline{K_{\rho}{}} - 3 \bar{\rho}) (K_{\rho '}{} - 3 \rho ')
 + (K_{\rho}{} - 3 \rho) (\overline{K_{\rho '}{}} - 3 \bar{\rho}')\nonumber\\
& + (K_{\tau}{} - 3 \tau) (\overline{K_{\tau}{}} - 3 \bar{\tau})
 + (K_{\tau '}{} - 3 \tau ') (\overline{K_{\tau '}{}} - 3 \bar{\tau}')\bigr),\\
\mathit{c}_{a^{2}{}}{}={}&- \tfrac{81}{4} (\kappa_{1} -  \bar{\kappa}_{1'}) (\kappa_{1} + \bar{\kappa}_{1'}) (\Psi_{2} \kappa_{1}^2 -  \bar\Psi_{2} \bar{\kappa}_{1'}^2)\nonumber\\
& + \tfrac{9}{4} (\kappa_{1} -  \bar{\kappa}_{1'})^2 \bigl(\kappa_{1}^2 (K_{\rho}{} - 3 \rho) (K_{\rho '}{} - 3 \rho ') + \bar{\kappa}_{1'}^2 (\overline{K_{\rho}{}} - 3 \bar{\rho}) (\overline{K_{\rho '}{}} - 3 \bar{\rho}')\bigr)\nonumber\\
& -  \tfrac{9}{4} (\kappa_{1} + \bar{\kappa}_{1'})^2 \bigl(\kappa_{1}^2 (K_{\tau}{} - 3 \tau) (K_{\tau '}{} - 3 \tau ') + \bar{\kappa}_{1'}^2 (\overline{K_{\tau}{}} - 3 \bar{\tau}) (\overline{K_{\tau '}{}} - 3 \bar{\tau}')\bigr).
\end{align}
\end{subequations}
Technically, $\mathit{c}_{1}$ and $\mathit{c}_{a^{2}{}}$ would contain the same information if we would remove the $K$ variables from these expressions, but that would give more complicated expressions in Lemma~\ref{lem:GradientApproxConst}.

\subsection{Kerr-NUT class}
To find interpretations of the approximate constants, we study the Kerr-NUT class. This class admits a Killing spinor $\kappa$ with real $\xi$, so all the small variables vanish for this class. However, it is in general not asymptotically flat.
We do direct tetrad calculations in the principal tetrad
\begin{subequations}
\begin{align}
l^{a}={}&\frac{2 a (\partial_{\phi})^{a} + 2  (a^2 + r^2)(\partial_{v})^{a} +  (a^2 - 2 M r + r^2)(\partial_{r})^{a}}{\sqrt{2} (r^2 + a^2 x^2)},\\
n^{a}={}&- \tfrac{1}{\sqrt{2}}(\partial_{r})^{a},\\
m^{a}={}&\frac{i (\partial_{\phi})^{a} -  (1 + \frac{2 N x}{a} -  x^2)(\partial_{x})^{a} - i a  (x^2-1)(\partial_{v})^{a}}{\sqrt{2} (r - i a x) \sqrt{1 + \frac{2 N x}{a} -  x^2}}.
\end{align}
\end{subequations}
We find that the Killing spinor coefficient is given by
\begin{align}
\kappa_{1}={}&\tfrac{-1}{3} (r - i a x).
\end{align}
Observe that the radial coordinate coincides with \eqref{eq:rDef}. 
In these coordinates the Killing vectors are
\begin{align}
\xi^{a}={}&(\partial_{v})^{a},&
\zeta^{a}={}&a^2 (\partial_{v})^{a}
 + a (\partial_{\phi})^{a}.
\end{align}
The approximate constants takes the constant values
\begin{align}
\mathit{c}_{M}{}={}&M,&
\mathit{c}_{N}{}={}&-i N,&
\mathit{c}_{1}{}={}&1,&
\mathit{c}_{a^{2}{}}{}={}&a^2.
\end{align}
Hence, in general we can interpret the approximate constants as follows.
\begin{itemize}
\item $\mathit{c}_{M}$ approximates the mass. 
\item $\mathit{c}_{N}$ encodes the NUT charge, and should therefore be small for small perturbations of Kerr. 
\item $\mathit{c}_{1}$ can be interpreted as a local approximation of the square of the norm of $\xi$ at infinity. One could use this to set the correct scaling of $\kappa$. 
\item $\mathit{c}_{a^{2}}$ approximates the square of the angular momentum parameter.
\end{itemize} 
Observe that if $\kappa$ is given these expressions are local. However, to solve the elliptic equation \eqref{eq:InitialElliptic} on the initial surface $\InitSlice$, global information on $\InitSlice$ is needed.

\subsection*{Acknowledgements}
The author is grateful to Lars Andersson, Pieter Blue and Sergiu Klainerman for discussions.

\subsection*{Data Availability}
Data sharing is not applicable to this article as no new data were created or analyzed in this work.

\appendix

\section{GHP form of \texorpdfstring{$\Upsilon$}{Upsilon}, \texorpdfstring{$\gamma$}{gamma} and \texorpdfstring{$\eta$}{eta}}\label{sec:Appendix}
The GHP form of \eqref{eq:UpsilonToH} is
\begin{subequations}
\begin{align}
\frac{\Upsilon_{00'}}{\kappa_{1}}={}&(\edtp {} - 2 \tau ')\Psi_{1}
 + 3 K_{\rho}{} \Psi_{2}
 - 2 \Psi_{3} \kappa
 + \Psi_{0} \sigma ',\\
\frac{\Upsilon_{01'}}{\kappa_{1}}={}&(\thop {} - 2 \rho ')\Psi_{1}
 + 3 K_{\tau}{} \Psi_{2}
 + \Psi_{0} \kappa '
 - 2 \Psi_{3} \sigma ,\\
\frac{\Upsilon_{10'}}{\kappa_{1}}={}&(\tho {} - 2 \rho)\Psi_{3}
 + 3 K_{\tau '}{} \Psi_{2}
 + \Psi_{4} \kappa
 - 2 \Psi_{1} \sigma ',\\
\frac{\Upsilon_{11'}}{\kappa_{1}}={}&(\edt {} - 2 \tau)\Psi_{3}
 + 3 K_{\rho '}{} \Psi_{2}
 - 2 \Psi_{1} \kappa '
 + \Psi_{4} \sigma .
\end{align}
\end{subequations}
The GHP form of \eqref{eq:gammaDef} is
\begin{subequations}
\label{eq:gammaGHP}
\begin{align}
\frac{\gamma_{00'}}{\kappa_{1}}={}&- \tfrac{2}{3} (\tho {} - 4 \rho)K_{\rho}{}
 + 2 (\edtp {} - 2 \tau ')\kappa
 -  \tfrac{2}{3} K_{\rho}{}^2
 + \tfrac{2}{3} K_{\tau '}{} \kappa
 -  \tfrac{2}{3} K_{\tau}{} \bar{\kappa}
 + 2 \sigma \bar{\sigma},\\
\frac{\gamma_{01'}}{\kappa_{1}}={}&- \tfrac{1}{3} (\tho {} - 4 \rho + \bar{\rho})K_{\tau}{}
 + (\thop {} - 2 \rho ' + \bar{\rho}')\kappa
 -  \tfrac{1}{3} (\edt {} - 4 \tau + \bar{\tau}')K_{\rho}{}
 + (\edtp {} + \bar{\tau} - 2 \tau ')\sigma
 -  \tfrac{2}{3} K_{\rho}{} K_{\tau}{}\nonumber\\
& + \tfrac{1}{3} K_{\rho '}{} \kappa
 + \tfrac{1}{3} K_{\tau '}{} \sigma ,\\
\frac{\gamma_{02'}}{\kappa_{1}}={}&2 (\thop {} - 2 \rho ')\sigma
 -  \tfrac{2}{3} (\edt {} - 4 \tau)K_{\tau}{}
 -  \tfrac{2}{3} K_{\tau}{}^2
 + 2 \kappa \bar{\kappa}'
 + \tfrac{2}{3} K_{\rho '}{} \sigma
 -  \tfrac{2}{3} K_{\rho}{} \bar{\sigma}',\\
\frac{\gamma_{10'}}{\kappa_{1}}={}&- \tfrac{2}{3} (\tho {} - 3 \rho)K_{\tau '}{}
 + \tfrac{2}{3} (\edtp {} - 3 \tau ')K_{\rho}{}
 -  \tfrac{2}{3} K_{\rho '}{} \bar{\kappa}
 + \tfrac{2}{3} K_{\tau}{} \bar{\sigma},\\
\frac{\gamma_{11'}}{\kappa_{1}}={}&- \tfrac{1}{3} (\tho {} - 3 \rho + \bar{\rho})K_{\rho '}{}
 + \tfrac{1}{3} (\thop {} - 3 \rho ' + \bar{\rho}')K_{\rho}{}
 -  \tfrac{1}{3} (\edt {} - 3 \tau + \bar{\tau}')K_{\tau '}{}
 + \tfrac{1}{3} (\edtp {} + \bar{\tau} - 3 \tau ')K_{\tau}{},\\
\frac{\gamma_{12'}}{\kappa_{1}}={}&\tfrac{2}{3} (\thop {} - 3 \rho ')K_{\tau}{}
 -  \tfrac{2}{3} (\edt {} - 3 \tau)K_{\rho '}{}
 + \tfrac{2}{3} K_{\rho}{} \bar{\kappa}'
 -  \tfrac{2}{3} K_{\tau '}{} \bar{\sigma}',\\
\frac{\gamma_{20'}}{\kappa_{1}}={}&-2 (\tho {} - 2 \rho)\sigma '
 + \tfrac{2}{3} (\edtp {} - 4 \tau ')K_{\tau '}{}
 + \tfrac{2}{3} K_{\tau '}{}^2
 - 2 \bar{\kappa} \kappa '
 + \tfrac{2}{3} K_{\rho '}{} \bar{\sigma}
 -  \tfrac{2}{3} K_{\rho}{} \sigma ',\\
\frac{\gamma_{21'}}{\kappa_{1}}={}&- (\tho {} - 2 \rho + \bar{\rho})\kappa '
 + \tfrac{1}{3} (\thop {} - 4 \rho ' + \bar{\rho}')K_{\tau '}{}
 -  (\edt {} - 2 \tau + \bar{\tau}')\sigma '
 + \tfrac{1}{3} (\edtp {} + \bar{\tau} - 4 \tau ')K_{\rho '}{}\nonumber\\
& + \tfrac{2}{3} K_{\rho '}{} K_{\tau '}{}
 -  \tfrac{1}{3} K_{\rho}{} \kappa '
 -  \tfrac{1}{3} K_{\tau}{} \sigma ',\\
\frac{\gamma_{22'}}{\kappa_{1}}={}&\tfrac{2}{3} (\thop {} - 4 \rho ')K_{\rho '}{}
 - 2 (\edt {} - 2 \tau)\kappa '
 + \tfrac{2}{3} K_{\rho '}{}^2
 -  \tfrac{2}{3} K_{\tau}{} \kappa '
 + \tfrac{2}{3} K_{\tau '}{} \bar{\kappa}'
 - 2 \sigma ' \bar{\sigma}'.
\end{align}
\end{subequations}
The GHP form of \eqref{eq:etaDef} is
\begin{subequations}
\label{eq:etaGHP}
\begin{align}
\frac{\eta_{00'}}{\kappa_{1}}={}&- (\edtp {} + 2 \tau ')\kappa
 -  \tho K_{\rho}{}
 -  K_{\rho}{}^2
 - 3 K_{\tau '}{} \kappa
 -  K_{\tau}{} \bar{\kappa}
 -  \sigma \bar{\sigma},\\
\frac{\eta_{01'}}{\kappa_{1}}={}&- \tfrac{1}{2} (\tho {} + \bar{\rho})K_{\tau}{}
 -  \tfrac{1}{2} (\thop {} + 2 \rho ' + \bar{\rho}')\kappa
 -  \tfrac{1}{2} (\edt {} + \bar{\tau}')K_{\rho}{}
 -  \tfrac{1}{2} (\edtp {} + \bar{\tau} + 2 \tau ')\sigma
 -  K_{\rho}{} K_{\tau}{}
 -  \tfrac{3}{2} K_{\rho '}{} \kappa\nonumber\\
& -  \tfrac{3}{2} K_{\tau '}{} \sigma ,\\
\frac{\eta_{02'}}{\kappa_{1}}={}&- (\thop {} + 2 \rho ')\sigma
 -  \edt K_{\tau}{}
 -  K_{\tau}{}^2
 -  \kappa \bar{\kappa}'
 - 3 K_{\rho '}{} \sigma
 -  K_{\rho}{} \bar{\sigma}',\\
\frac{\eta_{10'}}{\kappa_{1}}={}&- \tfrac{2}{3} (\tho {} + \rho)K_{\tau '}{}
 -  \tfrac{2}{3} (\edtp {} + \tau ')K_{\rho}{}
 -  \tfrac{4}{3} K_{\rho}{} K_{\tau '}{}
 -  \tfrac{2}{3} K_{\rho '}{} \bar{\kappa}
 -  \tfrac{2}{3} K_{\tau}{} \bar{\sigma}
 - 4 \kappa \sigma ',\\
\frac{\eta_{11'}}{\kappa_{1}}={}&- \tfrac{1}{3} (\tho {} + \rho + \bar{\rho})K_{\rho '}{}
 -  \tfrac{1}{3} (\thop {} + \rho ' + \bar{\rho}')K_{\rho}{}
 -  \tfrac{1}{3} (\edt {} + \tau + \bar{\tau}')K_{\tau '}{}
 -  \tfrac{1}{3} (\edtp {} + \bar{\tau} + \tau ')K_{\tau}{}\nonumber\\
& -  \tfrac{2}{3} K_{\rho}{} K_{\rho '}{}
 -  \tfrac{2}{3} K_{\tau}{} K_{\tau '}{}
 - 2 \kappa \kappa '
 - 2 \sigma \sigma ',\\
\frac{\eta_{12'}}{\kappa_{1}}={}&- \tfrac{2}{3} (\thop {} + \rho ')K_{\tau}{}
 -  \tfrac{2}{3} (\edt {} + \tau)K_{\rho '}{}
 -  \tfrac{4}{3} K_{\rho '}{} K_{\tau}{}
 -  \tfrac{2}{3} K_{\rho}{} \bar{\kappa}'
 - 4 \kappa ' \sigma
 -  \tfrac{2}{3} K_{\tau '}{} \bar{\sigma}',\\
\frac{\eta_{20'}}{\kappa_{1}}={}&- (\tho {} + 2 \rho)\sigma '
 -  \edtp K_{\tau '}{}
 -  K_{\tau '}{}^2
 -  \bar{\kappa} \kappa '
 -  K_{\rho '}{} \bar{\sigma}
 - 3 K_{\rho}{} \sigma ',\\
\frac{\eta_{21'}}{\kappa_{1}}={}&- \tfrac{1}{2} (\tho {} + 2 \rho + \bar{\rho})\kappa '
 -  \tfrac{1}{2} (\thop {} + \bar{\rho}')K_{\tau '}{}
 -  \tfrac{1}{2} (\edt {} + 2 \tau + \bar{\tau}')\sigma '
 -  \tfrac{1}{2} (\edtp {} + \bar{\tau})K_{\rho '}{}
 -  K_{\rho '}{} K_{\tau '}{}\nonumber\\
& -  \tfrac{3}{2} K_{\rho}{} \kappa '
 -  \tfrac{3}{2} K_{\tau}{} \sigma ',\\
\frac{\eta_{22'}}{\kappa_{1}}={}&- (\edt {} + 2 \tau)\kappa '
 -  \thop K_{\rho '}{}
 -  K_{\rho '}{}^2
 - 3 K_{\tau}{} \kappa '
 -  K_{\tau '}{} \bar{\kappa}'
 -  \sigma ' \bar{\sigma}'.
\end{align}
\end{subequations}



%

\end{document}